\newtheorem{theorem}{Theorem}[section]
\newtheorem{lemma}[theorem]{Lemma}
\newtheorem{proposition}{Proposition}
\newtheorem{example}{Example}
\newtheorem{property}{Property}
\theoremstyle{definition}
\newtheorem{definition}[theorem]{Definition}
\begin{document}


\title[Self-Avoiding Modes of Motion in a Lattice Gas]{Self-Avoiding Modes of Motion in a Deterministic Lorentz Lattice Gas}

\author{B. Z. Webb$^1$}

\author{E. G. D. Cohen$^2$}


\maketitle

\begin{center}
$^{1,2}$ Rockefeller University, Laboratory of Statistical Physics, 1230 York Avenue, New York, NY 10065, USA\\
E-mail: $^1$bwebb@rockefeller.edu and $^2$egdc@rockefeller.edu
\end{center}

\begin{spacing}{1}

\begin{abstract}
We study the motion of a particle on the two-dimensional honeycomb lattice, whose sites are occupied by either flipping rotators or flipping mirrors, which scatter the particle according to a deterministic rule. For both types of scatterers we find a new type of motion that has not been observed in a Lorentz Lattice gas, where the particle's trajectory is a self-avoiding walk between returns to its initial position. We show that this behavior is a consequence of the deterministic scattering rule and the particular class of initial scatterer configurations we consider. Since self-avoiding walks are one of the main tools used to model the growth of crystals and polymers, the particle's motion in this class of systems is potentially important for the study of these processes.\\\\
Keywords: Lorentz lattice gas, deterministic dynamics, self-avoiding motion.\\
PACS numbers: 05.50+q, 02.10.0x
\end{abstract}

\section{Introduction}
In a Lorentz\footnote[1]{The \emph{Lorentz} in Lorentz lattice gas refers to H. A. Lorentz who assumed that the electrons in a conductor move independently of each other so that it is sufficient to study a single electron.} lattice gas (LLG) a single particle moves along the bonds of a lattice from lattice site to lattice site. When the particle arrives at a lattice site, it encounters a scatterer that modifies its motion according to a given scattering rule. Depending on the scattering rule, each scatterer can also have one of a number of orientations. Moreover, the orientation of each scatterer can be fixed or may change depending on this rule.

In an LLG, the initial orientation of each scatterer is called the LLG's initial configuration of scatterers. The trajectory of a particle in an LLG is then determined by the particular choice of (i) the lattice, (ii) the scattering rule, and (iii) the initial configuration of scatters on the lattice. In previous studies, a wide variety of dynamics has been observed in such systems, depending on the choice of these three features \cite{Bunimovich93,Cao97,Grosfils99,Ruijgrok88,Wang94,Wang95.1,Wang95.3,Wang95.2}.

In this paper, we consider the particle's motion on the regular two-dimensional honeycomb lattice, in which the lattice is either fully occupied by flipping rotators or by flipping mirrors. A flipping rotator and a flipping mirror have one of two orientations: either \emph{left} or \emph{right}. On the honeycomb lattice, a flipping rotator scatters the particle by rotating its velocity by the angle $\theta=\pm\pi/3$, either to its left or to its right, depending on whether the rotator is a left or a right rotator, respectively. Similarly, a flipping mirror reflects the particle's velocity over the angle $\theta=\pm\pi/3$, to its left or right, depending on whether it is a left or a right mirror, respectively. In both cases, the scatterers change or \emph{flip} orientation after scattering the particle, from either right to left or from left to right.

The initial configuration we consider, throughout the majority of the paper, is the configuration in which each scatterer is a right rotator. Despite the simplicity of this initial configuration, we observe that the particle in this LLG has an unusual mode of motion. In this LLG the particle returns to its initial position an infinite number of times where, between these returns, the particle's trajectory is a self-avoiding walk (see theorem \ref{thm:1}). To the best of our knowledge, this is the first time this type of behavior has been observed in an LLG.

One of our main objectives in this paper is to describe the underlying mechanism that causes the particle to exhibit this type of motion. What we find is that the initial configuration of all right rotators has two special properties that lead to this behavior. The first is that it is an \emph{admissible} configuration (see definition \ref{def:admiss}), the second is that it is also a \emph{blocking} configuration (see definition \ref{def:blocking}).

These two new concepts, of an admissible configuration and a blocking configuration, allow us to prove the following general results. If the particle moves on the honeycomb lattice with an initial configuration that is admissible then, between returns to its initial position, the particle's trajectory is a \emph{self-avoiding} walk (see theorem \ref{lem:1}). If a particle moves on a lattice with an initial configuration that is both an admissible and a blocking configuration, not only does the particle have this self-avoiding property, it also returns to its initial position an infinite number of times (see theorem \ref{thm:recurrence}).

This self-avoiding property, which we find in the class of LLGs with an admissible configuration, is novel for a number of reasons. One is that the particle's motion in these systems is completely deterministic, i.e. this motion arises from the particle's deterministic equations of motion. This is in contrast to the large majority of self-avoiding walks, which are generated via some random process \cite{Amit83,Madras88}. A second reason is that a number of basic mathematical questions regarding self-avoiding walks \cite{Madras13} are still unresolved. Therefore, the rigorous results presented in this paper are potentially important for gaining a better mathematical understanding of self-avoiding walks. Such results also extend the mathematical theory of LLGs (see, for instance, \cite{Bunimovich93,Grosfils99,Bunimovich91}).

In addition, self-avoiding walks are one of the main tools used to model the growth of crystals and polymers \cite{Madras13,Bous92}. These walks also play a central role in the study of the folding and knotting behaviour of biological molecules such as proteins \cite{Guttmann12}. Therefore, the class of LLGs we consider in this paper is potentially important for the study of these processes.

Besides exhibiting this new type of motion, these LLGs are different from those that have been previously studied, in that they have a single initial configuration. In previous studies, a typical LLG has a large number of initial configurations, which makes it possible to characterize the particle's dynamics in terms of its mean square displacement \cite{Cao97,Grosfils99,Ruijgrok88,Wang94,Wang95.1,Wang95.3,Wang95.2,Kong89,Kong90.2}.
In the LLGs we consider, here this is not possible, because of the irregularity of the particle's mean square displacement.

In order to describe the particle's dynamics in the LLGs we consider, we introduce the notion of a \emph{time-averaged mean square displacement} to the study of these systems. This concept allows us to describe the particle's dynamics in the LLGs we consider in the same way that the mean square displacement is used to describe the particle's dynamics in previous studies. What we observe, numerically, for those LLGs that have a configuration that is both admissible and blocking, is that the particle's time-averaged mean square displacement increases as a power law. This suggests that the particle in any one of these systems has what we refer to as a \emph{pulsating} motion since it regularly returns to its initial position but also appears to have an unbounded trajectory.

The paper is organized as follows. In section \ref{sec:3} we describe the flipping rotator and flipping mirror systems we consider throughout the paper, along with their respective equations of motion. We show that for each flipping rotator system there is a corresponding flipping mirror system with the same dynamics, and vice versa. In section \ref{sec:4} we investigate the flipping rotator system, in which all scatterers are initially right rotators. In section \ref{sec:tamsd} we introduce the concept of  a particle's time-averaged mean square displacement and the idea of a pulsating motion of a particle in an LLG.

In section \ref{sec:5} we introduce the class of admissible configurations, that lead to the particle's self-avoiding trajectory, between returns to its initial position. In section \ref{sec:7} we introduce the class of blocking configurations, and show that configurations that are both admissible and blocking lead the particle to return to its initial position an infinite number of times, i.e. have a pulsating-like behavior. In section \ref{sec:conc} we give a number of concluding remarks.

We note that although the main results of this paper are proven mathematically, the paper is written so that it can be understood without the need for the reader to work through the various proofs. To facilitate this, each of the paper's mathematical results is preceded by an explanation of the result in words.

\begin{figure}
    \begin{overpic}[scale=.3]{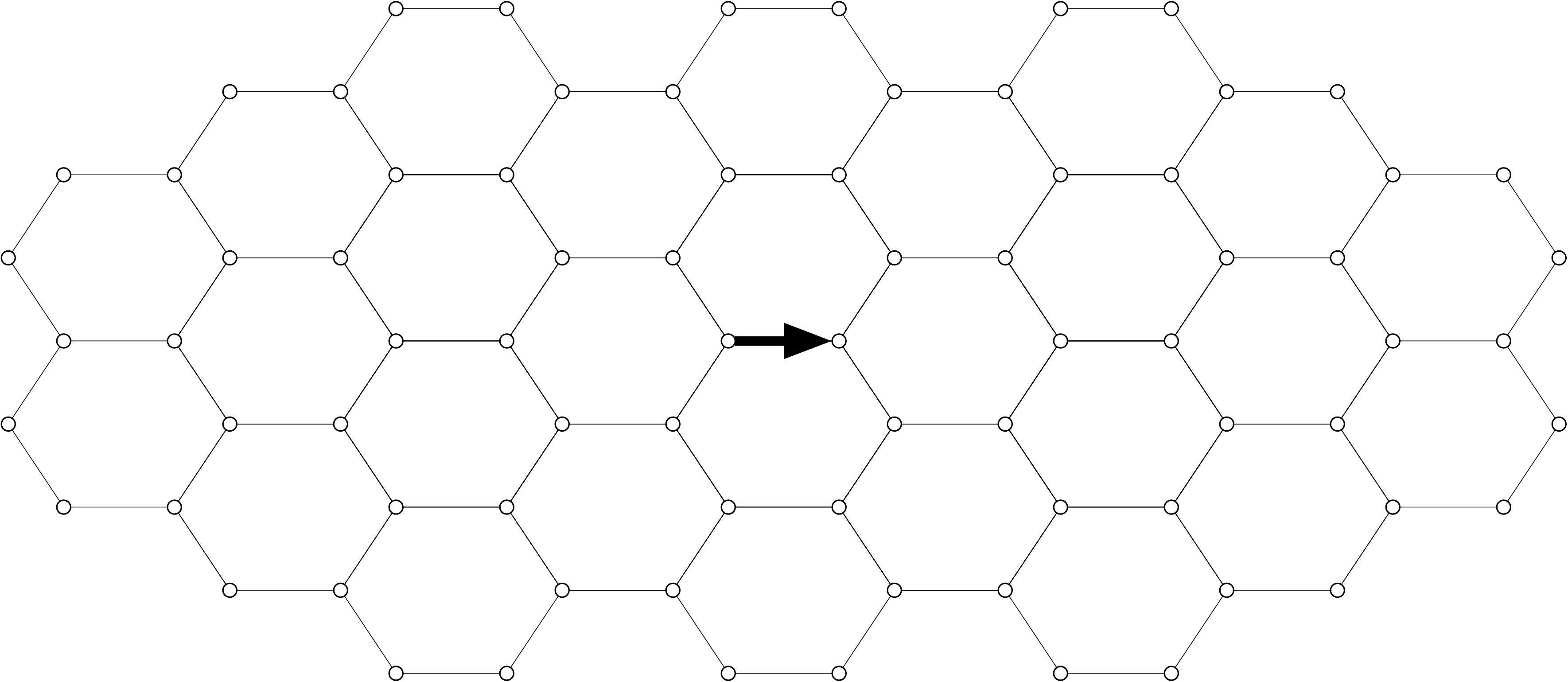}
    \put(43,20.75){$\mathbf{r}$}
    \put(49,24){$\mathbf{v}$}
    \end{overpic}
\caption{The regular honeycomb lattice $H=(\mathbb{H},\mathbb{B})$ where the particle's initial position $\mathbf{r}=(0,0)$ and velocity $\mathbf{v}=(1,0)$ are indicated.}\label{fig:1}
\end{figure}

\section{The Flipping Rotator and Flipping Mirror Systems}\label{sec:3}
In this section we begin by describing the flipping rotator and flipping mirror systems considered in this paper. We then show that the two types of systems are dynamically equivalent, in the sense that to every flipping rotator system there is a corresponding flipping mirror system with the same dynamics and that the converse of this also holds.

The lattice we consider in this paper is the honeycomb lattice $H=(\mathbb{H},\mathbb{B})$, with sites $\mathbb{H}$ and bonds $\mathbb{B}$. The lattice consists of regular hexagons, with sides of unit length. Hence, there is a lattice bond of length $1$ between two sites of $H$ if and only if these sites are nearest neighbors (cf. figure \ref{fig:1}).

We let $\mathbf{r}(t)\in\mathbb{R}^2$ denote the \emph{position} and $\mathbf{v}(t)\in\mathbb{R}^2$ denote the \emph{velocity} of the particle at time $t\geq 0$ on the lattice, where the particle is assumed to move with constant unit speed. Moreover, we let $I=(\mathbf{r},\mathbf{v})$ denote the particle's \emph{initial state}, i.e. initial position and initial velocity, where $\mathbf{r}=\mathbf{r}(0)$ and $\mathbf{v}=\mathbf{v}(0)$.

We restrict ourselves to \emph{discrete} time steps $t=0,1,2,\dots$, so that the particle is at some lattice sight for each time $t$. The particle's \emph{trajectory} is then the semi-infinite sequence of positions $\{\mathbf{r}(t)\}_{t\geq 0}\subset\mathbb{H}$. Since the velocity of the particle does not exist at the moment it is scattered, we let $\mathbf{v}(t)$ denote the velocity of the particle directly after each time $t\geq 0$.

We assume that there is a scatterer at each lattice site of $H$, so that the lattice is either fully occupied by rotators or mirrors. We note that each scatterer is initially either a right scatterer or a left scatterer. We let $C=C(0)$ denote this \emph{initial configuration} of scatterers and let $C(t)$ denote the configuration of scatterers on the lattice at time $t\geq 0$. For each lattice site $\mathbf{h}\in\mathbb{H}$ we let
\begin{equation}
C_{\mathbf{h}}(t)\in\{-1,1\} \ \ \text{for} \ \mathbf{h}\in\mathbb{H}, \  t\geq 0
\end{equation}
denote the orientation of the scatterer, either left or right, at site $\mathbf{h}$ at time $t$. The orientation $C_{\mathbf{h}}(t)=-1$ indicates that at time $t$ the scatterer at $\mathbf{h}$ is a left scatterer whereas the orientation $C_{\mathbf{h}}(t)=1$ indicates that the scatterer is a right scatterer.

\begin{figure}
    \begin{overpic}[scale=.5]{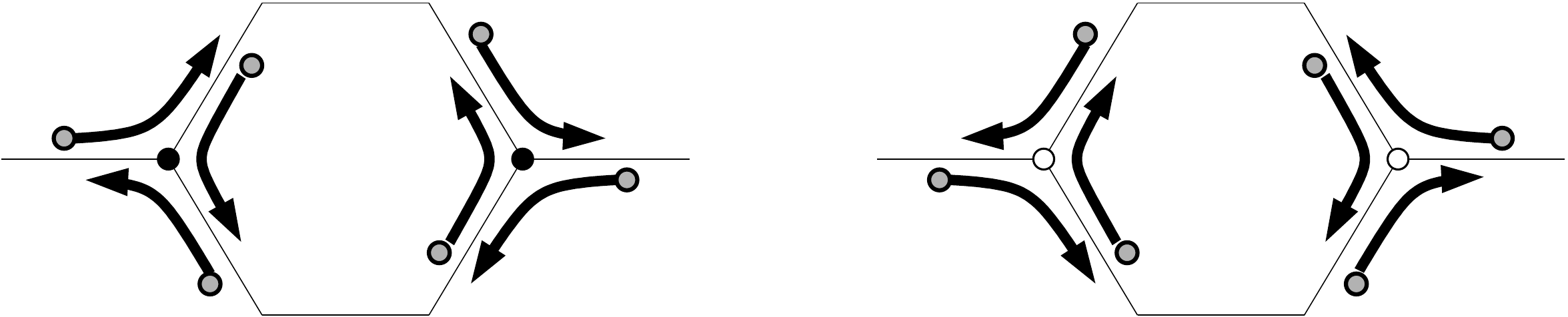}
    \put(8,-4.5){(a) \small\emph{left rotators} $(\mathbf{LR})$}
    \put(64,-4.5){(b) \small\emph{right rotators} $(\mathbf{RR})$}
    \end{overpic}
    \vspace{0.2in}
\caption{Upon arriving at a left (right) rotator the particle's velocity is rotated to its left (right) by an angle of $\theta=\pm\pi/3$. In this and following figures, closed circles denote left rotators and open circles denote right rotators. Solid lines denote lattice bonds.}\label{fig:2}
\end{figure}

In the case of flipping rotators, each scatterer rotates the velocity of the incoming particle either to its right or its left by an angle of $\theta=\pm\pi/3$. This is shown in figure \ref{fig:2}, where we use the convention that a closed circle denotes a left rotator $(\mathbf{LR})$ and an open circle denotes a right rotator $(\mathbf{RR})$, respectively.

The equations of motion for flipping rotators are given by
\begin{align}
\mathbf{r}(t+1)&=\mathbf{r}(t)+\mathbf{v}(t), \label{eq:1}\\
\mathbf{v}(t+1)&=R\big[C_{\mathbf{h}}(t)\big]\mathbf{v}(t) \ \ \text{for} \ \ \mathbf{r}(t+1)=\mathbf{h}, \label{eq:2}\\
C_{\mathbf{h}}(t+1)&=
\begin{cases}
-C_{\mathbf{h}}(t) &  \ \ \text{if} \ \  \mathbf{r}(t+1)=\mathbf{h}\\
\hspace{0.1in} C_{\mathbf{h}}(t) & \ \ \text{if} \ \ \mathbf{r}(t+1)\neq\mathbf{h}\label{eq:3}
\end{cases}
\end{align}
for $t\geq 0$. Equation \eqref{eq:1} gives the dynamics of the particle, describing its piecewise linear motion between successive scatterings. In equation (\ref{eq:2}), the rotation operator $R:\{-1,1\}\rightarrow\mathbb{R}^{2\times 2}$ is the matrix given by
\begin{equation}\label{eq:matrix}
R(z)=
\left[\begin{array}{cc}
\cos(\frac{\pi}{3}z)&\sin(\frac{\pi}{3}z)\\
-\sin(\frac{\pi}{3}z)&\cos(\frac{\pi}{3}z)
\end{array}\right],
\end{equation}
which describes how the particle's velocity is rotated when it arrives at a scatterer. Equation \eqref{eq:3} describes the flipping motion of the scatterers.

\begin{definition}\label{flip}
Let $(H_{fr},I,C)$ denote the LLG on the honeycomb lattice $H$ with initial state $I$, initial configuration $C$, and equations of motion given by equations $\eqref{eq:1}-\eqref{eq:3}$. We call this a \emph{flipping rotator} (\emph{fr}) \emph{system} on the honeycomb lattice.
\end{definition}

\begin{figure}
    \begin{overpic}[scale=.425]{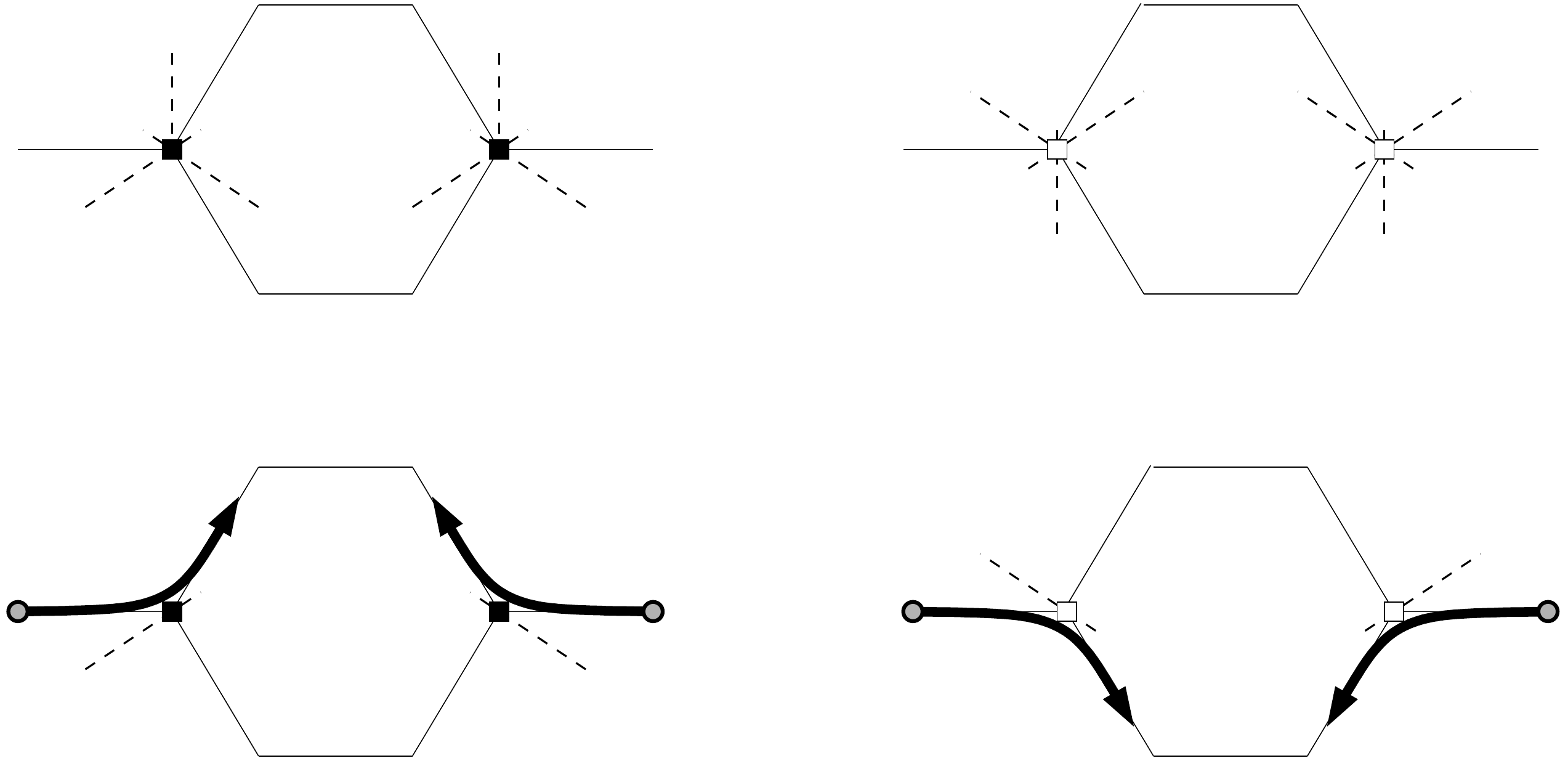}
    \put(3,26){(a) \small\emph{left mirror positions}}
    \put(58,26){(b) \small\emph{right mirror positions}}
    \put(-2,-4){(c) \small\emph{reflection by a left mirror} $(\mathbf{LM})$}
    \put(54,-4){(d) \small\emph{reflection by a right mirror} $(\mathbf{RM})$}

    \put(3,33){$1$}
    \put(10.25,45.75){$2$}
    \put(17.25,33){$3$}

    \put(24,33){$1$}
    \put(31.25,45.75){$2$}
    \put(38.25,33){$3$}

    \put(3,4){$1$}
    \put(38.25,4){$3$}

    \put(60,43){$4$}
    \put(66.75,30){$5$}
    \put(73.5,43){$6$}

    \put(80.75,43){$4$}
    \put(87.5,30){$5$}
    \put(94.25,43){$6$}

    \put(60.5,13){$4$}
    \put(95,13){$6$}

    \end{overpic}
    \vspace{0.2in}
\caption{The three possible positions a left and right mirror can occupy are shown as dashed lines in (a) and (b), respectively. When the particle approaches a lattice site, the mirror at this site moves to the position nearest to the incoming particle. This is shown in (c) and (d) for left and right mirrors, respectively. Closed squares denote left mirrors and open squares denote right mirrors.}\label{fig:2.01}
\end{figure}

In the case of flipping mirrors each scatterer is a single mirror, which reflects the particle over an angle $\theta=\pm\pi/3$, either to its left or right. Each mirror is double-sided and is based at a lattice site. The mirrors have two orientations, either left or right. A left mirror $(\mathbf{LM})$ can\emph{move} to any one of the three positions 1-3 indicated by the dashed lines in figure \ref{fig:2.01} (a), whereas a right mirror $(\mathbf{RM})$ can move to any one of the three positions 4-6 indicated in \ref{fig:2.01} (b). When the particle moves along a lattice bond towards a lattice site, the mirror at this site moves to the position nearest to the incoming particle. Once the mirror has reflected the particle it then flips its orientation from either right to left or from left to right.\footnote[2]{To physically implement this, one could put an optical device at each lattice site of the lattice that realizes a flipping mirror. The device should then be able to see the incoming particle and would contain an adjustable mirror, which it would position to scatter the particle.}

The way in which the particle is scattered by either a left or right mirror is demonstrated in figure \ref{fig:2.01} (c) and (d) respectively. In figure \ref{fig:2.01} (c) the mirror on the left moves to position 1 since this is the position closest to the particle as it enters from the left. Similarly, the mirror on the right moves to position 3 since this is the position closest to the particle as it enters from the right. In figure \ref{fig:2.01} (d) the right mirrors moves to the position 4 and 6 since these are positions closest to the incoming particles, respectively.

The geometric reason we restrict the mirrors to the six positions, indicated in figure \ref{fig:2.01}, is that a mirror in any other position at a lattice site would reflect the particle off the lattice. The lattice geometry is also the reason for these six positions. Since the particle can approach a lattice site along three different lattice bonds, there must be three different positions the mirror can be in so as to reflect the particle to either to its left or right, for each orientation of the mirror.

To describe the relationship between rotators and mirrors, we first note the following. Each lattice site $\mathbf{h}\in\mathbb{H}$, on the honeycomb lattice, is at the end of exactly one horizontal lattice bond (cf. figure \ref{fig:1}). Using this, we say that $\mathbf{h}\in\mathbb{H}^+$, if $\mathbf{h}$ is on the right-hand side of this bond and $\mathbf{h}\in\mathbb{H}^-$, if $\mathbf{h}$ is on the left-hand side of the bond. That is, we define the sets
\begin{align*}
\mathbb{H}^+&=\{\mathbf{h}\in\mathbb{H}: \ \exists \ \mathbf{h}_{\ell}\in\mathbb{H},\mathbf{h}=\mathbf{h}_{\ell}+(1,0)\}\\
\mathbb{H}^-&=\{\mathbf{h}\in\mathbb{H}: \ \exists \ \mathbf{h}_{r}\in\mathbb{H},\mathbf{h}=\mathbf{h}_{r}-(1,0)\}.
\end{align*}

As one can check, $\mathbb{H}=\mathbb{H}^+\cup\mathbb{H}^-$ and $\mathbb{H}^+\cap\mathbb{H}^-=\emptyset$, i.e. each lattice site is either on the left or on the right end of a horizontal lattice bond. Moreover, if $\mathbf{h}\in\mathbb{H}^+$ then an $\mathbf{RM}$ at $\mathbf{h}$ acts as an $\mathbf{RR}$ and an $\mathbf{LM}$ at $\mathbf{h}$ acts as an $\mathbf{LR}$. If $\mathbf{h}\in\mathbb{H}^-$ then an $\mathbf{RM}$ at $\mathbf{h}$ acts as an $\mathbf{LR}$ and an $\mathbf{LM}$ at $\mathbf{h}$ acts as an $\mathbf{RR}$. This is summarized in table \ref{table:0} and is demonstrated in figure \ref{fig:2.1}.

\begin{table}[ht]

\centering 

\begin{tabular}{| c | c | c | c |}
\hline
   $\mathbf{LM}^+\equiv\mathbf{LR}$ & $\mathbf{RM}^+\equiv\mathbf{RR}$ & $\mathbf{LM}^-\equiv\mathbf{RR}$ & $\mathbf{RM}^-\equiv\mathbf{LR}$\\
\hline
\end{tabular}
\vspace{0.15in}
\caption{The table shows the relation between rotators and mirrors at lattice sites of $\mathbb{H}^+$ and $\mathbb{H}^-$. The notation $\mathbf{LM}^+\equiv\mathbf{LR}$ indicates, for instance, that a left mirror acts as a left rotator at a lattice site of $\mathbb{H}^+$.} 
\label{table:0} 
\end{table}

\begin{figure}
    \begin{overpic}[scale=.5]{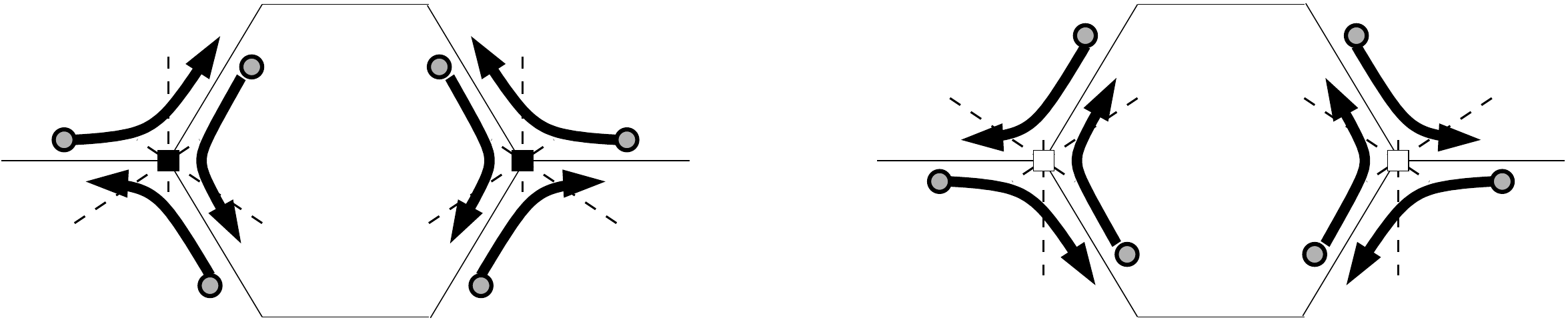}
    \put(3,4.5){$1$}
    \put(10,17.5){$2$}
    \put(17.5,4.5){$3$}

    \put(25.5,4.5){$1$}
    \put(32.5,17.5){$2$}
    \put(40,4.5){$3$}

    \put(58.5,14){$4$}
    \put(66,1){$5$}
    \put(73.5,14){$6$}

    \put(81,14){$4$}
    \put(88.5,1){$5$}
    \put(96,14){$6$}

    \put(27.5,9.33){$-$}
    \put(14,9.33){$+$}

    \put(83.5,9.33){$-$}
    \put(70,9.33){$+$}

    \put(7,-10){(a) \small\emph{left mirrors} $(\mathbf{LM})$}
    \put(5,-3){\small\emph{left rotator}}
    \put(25,-3){\small\emph{right rotator}}

    \put(66,-10){(b) \small\emph{right mirrors} $(\mathbf{RM})$}
    \put(59,-3){\small\emph{right rotator}}
    \put(83,-3){\small\emph{left rotator}}
    \end{overpic}
    \vspace{0.5in}
\caption{Depending on whether a mirror is at a lattice site of $\mathbb{H}^+$ or $\mathbb{H}^-$, it acts as either a left or right rotator. This is shown in (a) for left mirrors and in (b) for right mirrors, where $\pm$ indicates a lattice site of $\mathbb{H}^{\pm}$, respectively. This is summarized in table \ref{table:0}.}
\label{fig:2.1}
\end{figure}

This relation between mirrors and rotators allows us to write the equations of motion for a flipping mirror system in terms of the rotation operator $R$, given in \eqref{eq:matrix}. These equations are given by
\begin{align}
\mathbf{r}(t+1)&=\mathbf{r}(t)+\mathbf{v}(t), \label{eq:4}\\
\mathbf{v}(t+1)&=
\begin{cases}
R\big[C_{\mathbf{h}}(t)\big]\mathbf{v}(t) \ \ &\text{if} \ \ \mathbf{r}(t+1)=\mathbf{h}\in\mathbb{H}^+\\
R\big[-C_{\mathbf{h}}(t)\big]\mathbf{v}(t) \ \ &\text{if} \ \ \mathbf{r}(t+1)=\mathbf{h}\in\mathbb{H}^- \label{eq:5}
\end{cases},\\
C_{\mathbf{h}}(t+1)&=
\begin{cases}
-C_{\mathbf{h}}(t) &  \ \ \text{if} \ \  \mathbf{r}(t+1)=\mathbf{h}\\
\hspace{0.1in} C_{\mathbf{h}}(t) & \ \ \text{if} \ \ \mathbf{r}(t+1)\neq\mathbf{h}\label{eq:6}
\end{cases}
\end{align}
for $t\geq 0$.

\begin{definition}\label{flipmirror}
Let $(H_{fm},I,C)$ denote the LLG on the honeycomb lattice $H$ with initial state $I$, initial configuration $C$, and equations of motion given by equations $\eqref{eq:4}-\eqref{eq:6}$. We call this a \emph{flipping mirror} (\emph{fm}) \emph{system} on the honeycomb lattice.
\end{definition}

The dynamics of flipping rotator and flipping mirror systems were considered separately in \cite{Wang95.1}, where diffusion processes on the honeycomb lattice were studied. Here, we want to show that the particle in a flipping rotator system and a flipping mirror system can have the same trajectory. To demonstrate this, let $C$ be an initial configuration of scatterers. We define $\varphi(C)$ to be the configuration given by
\begin{equation}\label{eq:map}
\varphi(C)_{\mathbf{h}}=
\begin{cases}
\hspace{0.1in}  C_{\mathbf{h}} &  \ \ \text{if} \ \ \mathbf{h}\in\mathbb{H}^+\\
-C_{\mathbf{h}} &  \ \ \text{if} \ \ \mathbf{h}\in\mathbb{H}^-
\end{cases}.
\end{equation}
Thus, $\varphi(C)$ is the configuration $C$ in which the scatterer's orientation at each lattice site of $\mathbb{H}^-$ has been flipped.

The following proposition states that the particle's trajectory in a flipping rotator system is the same as the particle's trajectory in a flipping mirror system if one system has the initial configuration $C$ and the other the initial configuration $\varphi(C)$.

\begin{proposition}\label{prop:1}
For any initial configuration $C$, the particle in the system $(H_{fr},I,C)$ and the system $(H_{fm},I,\varphi(C))$ have the same trajectory. Moreover, the particle in the system $(H_{fm},I,C)$ and the system $(H_{fr},I,\varphi(C))$ have the same trajectory.
\end{proposition}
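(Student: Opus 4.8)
The plan is to prove the first claim by induction on the discrete time $t$, and then obtain the second claim for free by exploiting that $\varphi$ is an involution. Write $(\mathbf{r}^{fr}(t),\mathbf{v}^{fr}(t),C^{fr}(t))$ for the state of $(H_{fr},I,C)$ and $(\mathbf{r}^{fm}(t),\mathbf{v}^{fm}(t),C^{fm}(t))$ for the state of $(H_{fm},I,\varphi(C))$. The idea is to propagate not merely the equality of positions and velocities but also a precise relation between the two scatterer configurations. Concretely, I would establish by induction the combined invariant that for all $t\geq 0$,
\[
\mathbf{r}^{fr}(t)=\mathbf{r}^{fm}(t),\quad \mathbf{v}^{fr}(t)=\mathbf{v}^{fm}(t),\quad C^{fm}(t)=\varphi\big(C^{fr}(t)\big).
\]
The configuration identity is the heart of the argument: it is exactly what reconciles the two different velocity rules \eqref{eq:2} and \eqref{eq:5}, forcing them to apply the same rotation at each scattering.

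The base case is immediate, since both systems share the initial state $I$ and, by definition, $C^{fm}(0)=\varphi(C)=\varphi(C^{fr}(0))$. For the inductive step, assume the invariant at time $t$. The position updates \eqref{eq:1} and \eqref{eq:4} give $\mathbf{r}^{fr}(t+1)=\mathbf{r}^{fm}(t+1)=:\mathbf{h}$ at once. For the velocity I split on whether $\mathbf{h}\in\mathbb{H}^+$ or $\mathbf{h}\in\mathbb{H}^-$. If $\mathbf{h}\in\mathbb{H}^+$ then $\varphi$ fixes the orientation there, so $C^{fm}_{\mathbf{h}}(t)=C^{fr}_{\mathbf{h}}(t)$ and rule \eqref{eq:5} applies $R[C^{fm}_{\mathbf{h}}(t)]=R[C^{fr}_{\mathbf{h}}(t)]$, matching \eqref{eq:2}. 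If $\mathbf{h}\in\mathbb{H}^-$ then $C^{fm}_{\mathbf{h}}(t)=-C^{fr}_{\mathbf{h}}(t)$, and rule \eqref{eq:5} applies $R[-C^{fm}_{\mathbf{h}}(t)]=R[C^{fr}_{\mathbf{h}}(t)]$, again matching \eqref{eq:2}. In either case $\mathbf{v}^{fr}(t+1)=\mathbf{v}^{fm}(t+1)$.

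It then remains to re-establish the configuration identity at time $t+1$. Both \eqref{eq:3} and \eqref{eq:6} flip only the scatterer at the just-visited site $\mathbf{h}$ and leave all others unchanged. Since $\varphi$ acts sitewise and depends only on the fixed membership in $\mathbb{H}^{\pm}$, the relation is preserved trivially at every site $\mathbf{g}\neq\mathbf{h}$; at $\mathbf{g}=\mathbf{h}$ one checks, separately for $\mathbf{h}\in\mathbb{H}^+$ and $\mathbf{h}\in\mathbb{H}^-$, that negating both $C^{fr}_{\mathbf{h}}$ and $C^{fm}_{\mathbf{h}}$ preserves $C^{fm}_{\mathbf{h}}(t+1)=\varphi(C^{fr}(t+1))_{\mathbf{h}}$. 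This closes the induction, so the trajectories $\{\mathbf{r}^{fr}(t)\}$ and $\{\mathbf{r}^{fm}(t)\}$ coincide, proving the first statement. For the second statement I would apply the first with $C$ replaced by $\varphi(C)$, noting $\varphi(\varphi(C))=C$, which identifies the trajectory of $(H_{fr},I,\varphi(C))$ with that of $(H_{fm},I,C)$.

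The only subtlety — the main obstacle, though it is more a matter of care than of genuine difficulty — is keeping the configuration invariant exactly in step with the flipping dynamics. A naive induction on positions and velocities alone would stall at the velocity update, because rule \eqref{eq:5} refers to the mirror configuration $C^{fm}$ rather than the rotator configuration, and one must know the precise sign relation between them at the visited site to conclude that the two rotations agree.
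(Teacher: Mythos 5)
Your proposal is correct and follows essentially the same route as the paper: an induction on $t$ carrying the invariant that positions and velocities agree and that the mirror configuration equals $\varphi$ applied to the rotator configuration, with the case split over $\mathbb{H}^+$ and $\mathbb{H}^-$ reconciling \eqref{eq:2} with \eqref{eq:5}, and the second claim obtained from $\varphi^2(C)=C$. No gaps.
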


\begin{proof} Suppose $C$ is an initial configuration on $H$. Let $\mathbf{r}(t)$ and $\mathbf{v}(t)$ denote the particle's position and velocity respectively, in the system $(H_{fr},I,C)$. Similarly, let $\mathbf{s}(t)$ and $\mathbf{w}(t)$ denote the particle's position and velocity respectively, in the system $(H_{fm},I,\Phi)$ where $\Phi=\varphi(C)$.

At time $t=0$ we note that $\mathbf{r}(0)=\mathbf{s}(0)$, $\mathbf{v}(0)=\mathbf{w}(0)$, and $\Phi(0)=\varphi(C)(0)$. Proceeding by induction suppose that at a fixed time $\tau\geq 0$ that $\mathbf{r}(\tau)=\mathbf{s}(\tau)$, $\mathbf{v}(\tau)=\mathbf{w}(\tau)$, and $\Phi(\tau)=\varphi(C)(\tau)$. Under this assumption, equations \eqref{eq:1} and \eqref{eq:4} imply that $\mathbf{r}(\tau+1)=\mathbf{s}(\tau+1)$. Also, from equation \eqref{eq:5} it follows that
\begin{equation}\label{eq:split}
\mathbf{w}(\tau+1)=
\begin{cases}
R\big(\Phi_{\mathbf{r}(\tau+1)}(\tau)\big)\mathbf{v}(\tau) &  \ \ \text{if} \ \ \mathbf{r}(\tau+1)\in\mathbb{H}^+\\
R\big(-\Phi_{\mathbf{r}(\tau+1)}(\tau)\big)\mathbf{v}(\tau) &  \ \ \text{if} \ \ \mathbf{r}(\tau+1)\in\mathbb{H}^-
\end{cases}.
\end{equation}
Since $\Phi_{\mathbf{r}(\tau+1)}(\tau)=C_{\mathbf{r}(\tau+1)}(\tau)$ if $\mathbf{r}(\tau+1)\in\mathbb{H}^+$ and $-\Phi_{\mathbf{r}(\tau+1)}(\tau)=C_{\mathbf{r}(\tau+1)}(\tau)$ if $\mathbf{r}(\tau+1)\in\mathbb{H}^-$ then equation \eqref{eq:2} together with equation \eqref{eq:split} imply that $\mathbf{v}(\tau+1)=\mathbf{w}(\tau+1)$. Moreover, using equations \eqref{eq:3} and \eqref{eq:6} we have that
\begin{equation}
\Phi(\tau+1)=
\begin{cases}
-\varphi(C)_{\mathbf{h}}(\tau) &  \ \ \text{if} \ \ \mathbf{r}(\tau+1)=\mathbf{h}\\
\hspace{0.1in} \varphi(C)_{\mathbf{h}}(\tau) &  \ \ \text{if} \ \ \mathbf{r}(\tau+1)\neq\mathbf{h}
\end{cases}=\varphi(C)(\tau+1).
\end{equation}

Since $\mathbf{r}(\tau+1)=\mathbf{s}(\tau+1)$, $\mathbf{v}(\tau+1)=\mathbf{w}(\tau+1)$, and $\Phi(\tau+1)=\varphi(C)(\tau+1)$ then, by induction, it follows that $\mathbf{r}(t)=\mathbf{s}(t)$ for all $t\geq 0$. Therefore, the particle's trajectory in the systems $(H_{fr},I,C)$ and $(H_{fm},I,\varphi(C))$ is the same for any initial configuration $C$. Using the initial configuration $\varphi(C)$ this implies that $(H_{fr},I,\varphi(C))$ and $(H_{fm},I,\varphi^2(C))$ have the same trajectory. Since $\varphi^2(C)=C$, by use of equation \eqref{eq:map}, this completes the proof.
\end{proof}

The main consequence of proposition \ref{prop:1} is that, any behavior that is observed in a flipping rotator system can also be observed in a corresponding flipping mirror system, and vice-versa. In this sense, the flipping rotator and flipping mirror systems are dynamically equivalent. This fact will allow us in the remainder of the paper, to concentrate our attention on systems with flipping rotators alone. Our results on such systems will, via proposition \ref{prop:1}, immediately imply the same results for the corresponding flipping mirror systems.

Because of this, we will from here on simplify our notation by letting $(H,I,C)$ denote the flipping rotator system $(H_{fr},I,C)$. Moreover, if the initial condition $I=(\mathbf{r},\mathbf{v})$ is given by $\mathbf{r}=(0,0)$ and $\mathbf{v}=(1,0)$ we suppress the system's dependence on its initial state and write $(H,I,C)$ as $(H,C)$, cf. figure \ref{fig:1}.

\section{The Initial Configuration of All Right Scatterers}\label{sec:4}
In this section we begin by investigating the flipping rotator system whose initial configuration is the configuration of all right scatterers. This system, which we denote by $(H,\mathbf{R})$, serves as our primary example of a system in which the particle's trajectory is a self-avoiding walk between returns to its initial position. Here, we describe how the particle's trajectory can be decomposed into an infinite number of cycles, each of which contains the origin. We then study the statistical properties of these cycles.

We first give a description of the motion of the particle in the $(H,\mathbf{R})$ system. To do so we introduce the following terminology.

\begin{definition}
Suppose in the system $(H,I,C)$ there are two times $t_1<t_2$, such that each position $\mathbf{r}(t)$ is distinct for $t_1 \leq t< t_2$ and $\mathbf{r}(t_1)=\mathbf{r}(t_2)$. Then the particle is said to move on the \emph{cycle}
$$\gamma=\{\mathbf{r}(t):t_1\leq t\leq t_2\}$$
from time $t_1$ to $t_2$. We call the position $\mathbf{r}(t_1)=\mathbf{r}(t_2)$ the \emph{base} of the cycle.
\end{definition}

The following theorem states that the particle's trajectory in the $(H,\mathbf{R})$ system can be described in terms of cycles\footnote[3]{We note that in some contexts, cycles are referred to as \emph{self-avoiding polygons} \cite{Guttmann12}.}. Specifically, the particle's entire trajectory can be decomposed into an infinite sequence of cycles each of which is based at the origin.

\begin{theorem}\label{thm:1} \textbf{(Cyclic Decomposition Theorem)}
In the $(H,\mathbf{R})$ system there is an infinite sequence of times $\{\tau_i\}_{i\geq 0}$ with $\tau_0=0$, such that for each $i>0$\\
(a) the particle moves on a cycle $\gamma_i$  based at the origin $\mathbf{r}=(0,0)$, given by $$\gamma_i=\big\{\mathbf{r}(t):\tau_{i-1}\leq t\leq \tau_{i}\big\}; \ \text{and}$$
(b) each cycle $\gamma_i$ is symmetric with respect to the vertical line $x=1/2$.
\end{theorem}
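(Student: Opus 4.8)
The plan is to obtain both parts from a single space--time symmetry of the $(H,\mathbf{R})$ system. Let $\sigma(x,y)=(1-x,y)$ be reflection in the line $x=1/2$, and let $M=\left[\begin{smallmatrix}-1&0\\0&1\end{smallmatrix}\right]$ be the induced reflection of velocities. The decisive point is that reflection reverses the sense of every rotation, so under $\sigma$ a right rotator behaves like a left one; concretely $MR(z)M=R(-z)$, read off directly from \eqref{eq:matrix}. A pure reflection therefore carries $\mathbf{R}$ to the all-left configuration and is \emph{not} a symmetry. Reversing time, however, also reverses the sense of rotation, since a forward step by $R[z]$ becomes a backward step by $R[z]^{-1}=R[-z]$. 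Composing $\sigma$ with time reversal thus sends $R(z)\mapsto R(z)$ and fixes the all-right configuration. The engine of the proof is this combined space--time reflection, anchored at the central horizontal bond $\{(0,0),(1,0)\}$, the unique bond fixed by $\sigma$ (with endpoints interchanged).

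First I would record that the one-step map $\Phi:(\mathbf r,\mathbf v,C)\mapsto(\mathbf r',\mathbf v',C')$ of \eqref{eq:1}--\eqref{eq:3} is a bijection of phase space, with inverse $\Phi^{-1}(\mathbf r,\mathbf v,C)=\bigl(\mathbf r-R[C_{\mathbf r}]\mathbf v,\ R[C_{\mathbf r}]\mathbf v,\ C^{\mathbf r}\bigr)$, where $C^{\mathbf r}$ is $C$ with its orientation at $\mathbf r$ flipped; this holds because $R[z]^{-1}=R[-z]$ and the flip is an involution, so the $(H,\mathbf{R})$ orbit extends uniquely to bi-infinite time. I would then verify the symmetry statement: reflecting positions by $\sigma$, velocities by $M$, and reversing time again produces a solution of \eqref{eq:1}--\eqref{eq:3} with initial configuration $\mathbf{R}$. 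This is a step-by-step check --- \eqref{eq:1} survives because $\sigma$ is affine and $M$ handles the velocity reflection; \eqref{eq:2} survives by $MR(z)M=R(-z)$ together with the rotation-reversal coming from running time backward; and \eqref{eq:3} survives because flipping is self-inverse and $\sigma$ carries visited sites to visited sites. Since the initial traversal of the $\sigma$-fixed central bond is itself invariant under the combined operation, uniqueness of solutions forces the forward orbit to coincide with its reflected--reversed image, giving the prototype identity
\begin{equation*}
\sigma\bigl(\mathbf r(t)\bigr)=\mathbf r(1-t)\qquad\text{for all }t.
\end{equation*}

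Part (b) is then local to each excursion. I would show that on an excursion from the origin the particle must re-cross the axis $x=1/2$ at a single ``far'' crossing, in a state invariant under the combined reflection, and re-anchor the same space--time symmetry there (rather than at the initial bond). This exhibits the second half of the excursion as exactly the $\sigma$-image of the first half, so the loop $\gamma_i$ is symmetric about $x=1/2$. The identical symmetry, anchored at the far crossing, is also what produces the return times $\tau_i$: reflecting the outbound path about its far crossing yields an inbound path, and because the excursion began by crossing the $\sigma$-fixed central bond it must end by crossing it again, i.e.\ at the origin. That each $\gamma_i$ is genuinely a cycle (distinct positions between $\tau_{i-1}$ and $\tau_i$) is the admissibility of $\mathbf{R}$ (theorem \ref{lem:1}); granting it, every $\gamma_i$ is a self-avoiding polygon based at the origin and symmetric about $x=1/2$. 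Since the state at time $\tau_i$ again presents the origin with outgoing velocity $(1,0)$ and a configuration in the reflected-symmetric class, the argument restarts verbatim and yields the whole sequence $\{\tau_i\}_{i\ge0}$.

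The main obstacle is the recurrence invoked twice above: the claim that every excursion re-crosses $x=1/2$ in a symmetric state and is thereby forced back to the origin. Symmetry alone does not deliver this --- it only guarantees that \emph{if} the particle returns, it returns symmetrically --- so the return itself must be extracted from the specific geometry of all-right rotators, namely the blocking property that prevents the self-avoiding outbound path from escaping and compels it to close up (abstracted later as theorem \ref{thm:recurrence}). I would therefore expect the bulk of the effort to lie in establishing this ``cannot escape'' mechanism for $\mathbf{R}$, with the reflection symmetry of this section doing the comparatively clean work of turning each forced return into a symmetric cycle.
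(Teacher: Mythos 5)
Your proposal is correct and follows essentially the same route as the paper: part (a) is obtained from the recurrence theorem (the all-right configuration being admissible and blocking), and part (b) from the reflection-plus-time-reversal symmetry about $x=1/2$, re-anchored at the first re-crossing of that line and propagated inductively through the symmetric configurations $C(\tau_i)$. The paper's proof is exactly this argument carried out explicitly with the reflection matrix $K$ and the identity $R(-z)K=KR(z)$, so no substantive difference remains.
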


We save a proof of theorem \ref{thm:1} for section \ref{sec:7}. The reason we defer this proof is that we first need to understand how the particle modifies the configuration of all right scatterers, as it moves through the lattice. This is the main topic considered in section \ref{sec:5}. In section \ref{sec:7} we study the particle's infinite sequence of returns to the origin in the $(H,\mathbf{R})$ system. Only after we have proved a number of intermediate results in these two sections, will we be able to give a proof of theorem \ref{thm:1}.

\begin{figure}
\begin{center}
\begin{tabular}{ccc}
    \begin{overpic}[scale=.38]{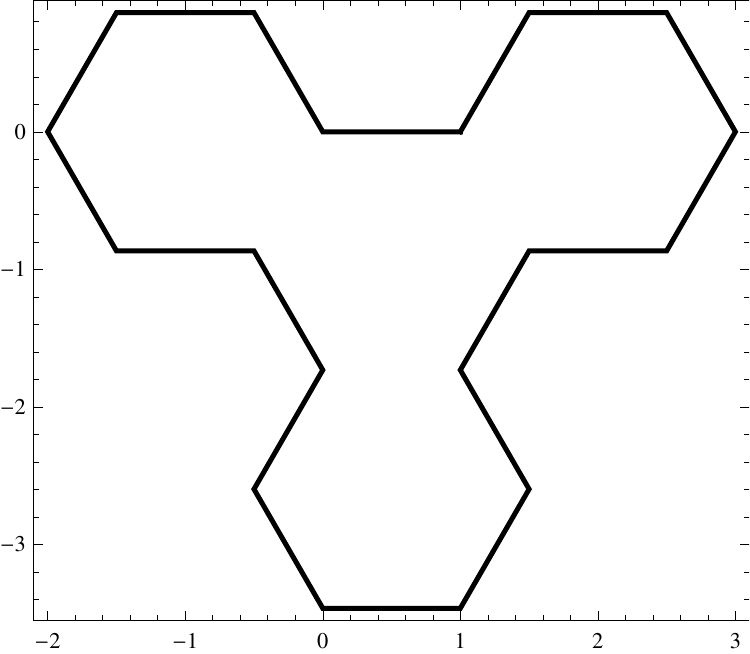}
    \put(47,-7){$\gamma_2$}
    \put(40,67){$\bullet$}
    \end{overpic} &
    \begin{overpic}[scale=.38]{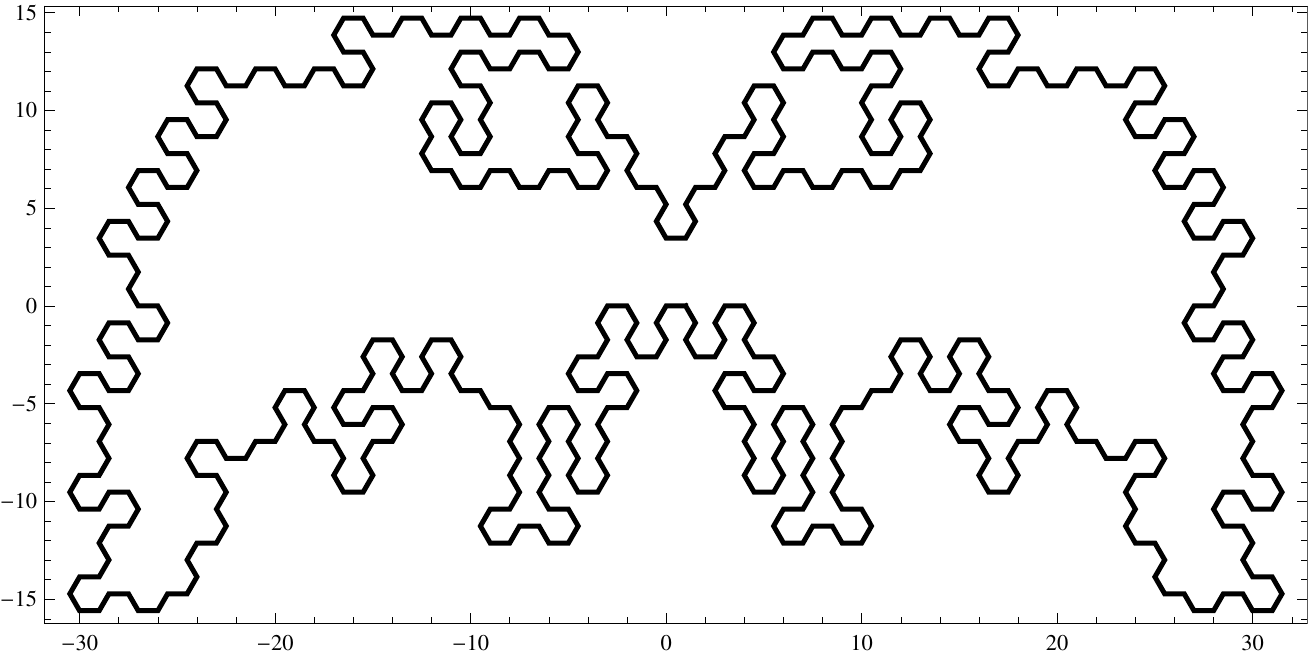}
    \put(45,-4){$\gamma_{819}$}
    \put(49,25){$\bullet$}
    \end{overpic} &
    \begin{overpic}[scale=.38]{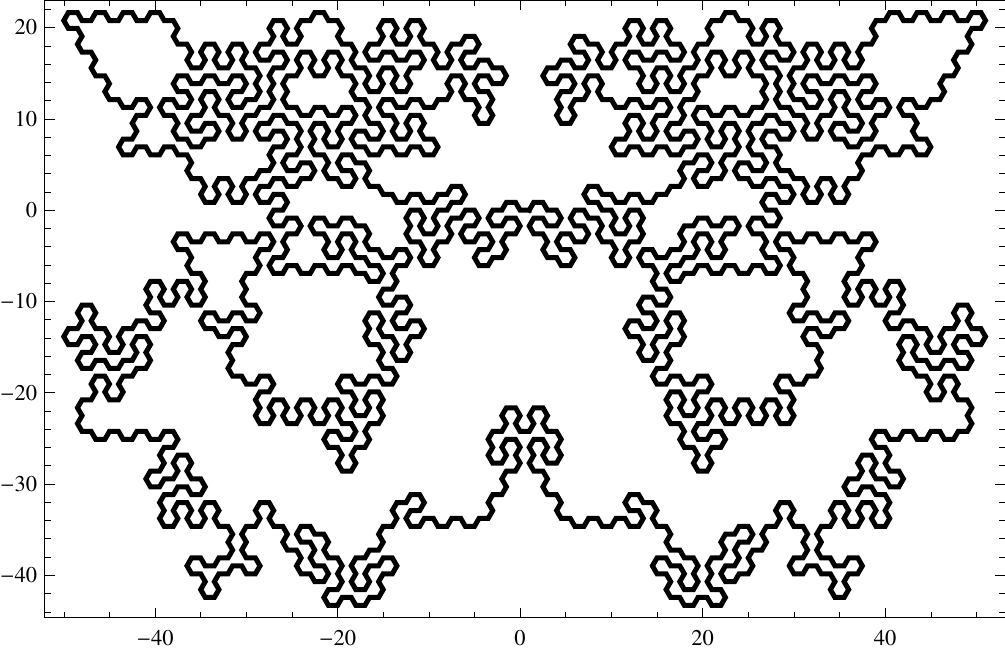}
    \put(41,-5){$\gamma_{4840}$}
    \put(49,42){$\bullet$}
    \end{overpic}
\end{tabular}
\end{center}
\caption{The figure shows the cycles $\gamma_2$, $\gamma_{819}$, and $\gamma_{4840}$ that occur in the $(H,\mathbf{R})$ system. Each cycle is based at the origin, indicated by a black dot, and is symmetric about the line $x=1/2$.}\label{fig:3}
\end{figure}

We introduce the $(H,\mathbf{R})$ system at this point in the paper, since it serves as the simplest example of a system with the type of motion we wish to investigate. This motion, which is self-avoiding between returns to the origin, is interesting for a number of reasons. One is that the particle's motion in this system is deterministic. This is in contrast to the large majority of self-avoiding walks, which are generated via some random process \cite{Amit83,Madras88}. Another is that the scattering rule in this system is a local rule. That is, the direction in which particle is scattered at a lattice site, depends only on the orientation of the scatterer at this site. Yet this rule, together with the geometry of the lattice and initial configuration of right rotators, leads to this nonlocal self-avoiding behavior.

To investigate the cycles generated in the $(H,\mathbf{R})$ system, we call the sequence of times $\{\tau_i\}_{i\geq 0}$, given in theorem \ref{thm:1}, the particle's \emph{return times}. Additionally, we refer to the cycle $\gamma_i$ as the particle's $i$-th cycle and let $L(i)=\tau_i-\tau_{i-1}$ denote this cycle's \emph{length}. A number of the particle's cycles that occur in the $(H,\mathbf{R})$ system are shown in figure \ref{fig:3}.

Our first observation is that the cycle lengths $L(i)$, in the $(H,\mathbf{R})$ system, do not exhibit any regular growth. This can be seen in figure \ref{fig:6} (a) where the cycle lengths $L(i)$ are plotted for $0\leq i\leq 154$. Additionally, there are far more short cycles in the particle's trajectory than long cycles. This is shown in figure \ref{fig:6} (b), where the fraction $F(\ell)$ of all cycles of length $\ell$ are shown for the first $t\leq 10^6$ time steps. The approximation $F(\ell)\approx \frac{3}{8}\ell^{-3/2}$ is shown as a dashed line where $\ell\in\{6+4n:n\in\mathbb{N}_0\}$, i.e. each cycle has a length $\ell$, which can be written in the form $6+4n$ with $n$ a nonnegative integer. In particular, $F(6)\approx.38$ so that the cycles of length $\ell=6$ make up nearly forty percent of all cycles up to time $t=10^6$. A list of the first 180 cycle lengths in the $(H,\mathbf{R})$ model is given in Appendix A.

\section{The Time-Averaged Mean Square Displacement}\label{sec:tamsd}

In this section we introduce the notion of a particle's time-averaged mean square displacement in an LLG. We do this for two reasons. First, it allows us to describe the dynamics observed in the $(H,\mathbf{R})$ system, and other LLGs that have been previously studied, in a unified way. Second, it will allow us to compute the average rate at which the particle's displacement in the $(H,\mathbf{R})$ system increases, which will lead us to introduce the idea of \emph{pulsation} in an LLG.

To compare the dynamics of the particle in the $(H,\mathbf{R})$ system to the dynamics observed in other LLGs, we note the following. In previous studies, a typical LLG has a large number of initial configurations. By averaging over this collection of initial configurations it possible to describe the particle's dynamics in terms of the growth of its mean square displacement \cite{Cao97,Grosfils99,Ruijgrok88,Wang94,Wang95.1,Wang95.3,Wang95.2,Kong89,Kong90.2}.

In the $(H,\mathbf{R})$ system there is only a single initial configuration, so the situation is quite different. As we will be show, the particle's mean square displacement cannot be used to describe the dynamics in this system in the same way that it has been used in other LLGs. To formally define the particle's mean square displacement we first need to define the collection of initial configurations over which this average is taken. To do so we introduce the notion of a general LLG model.

Let $(L_{sr},I,C)$ be the \emph{LLG system} on the lattice $L$, with scattering rule $sr$, initial condition $I$ and initial configuration $C$. If $\mathcal{I}$ is a collection of initial conditions and $\mathcal{C}$ a collection of initial configurations, we call $(L_{sr};\mathcal{I},\mathcal{C})$ a \emph{LLG model}. If $I\in\mathcal{I}$ and $C\in\mathcal{C}$, then the system $(L_{sr},I,C)$ is a particular instance of the $(L_{sr};\mathcal{I},\mathcal{C})$ model. The \emph{mean square displacement} of a particle at time $t\geq 0$ in the LLG model $(L_{sr};\mathcal{I},\mathcal{C})$ with position $\mathbf{r}(t)$ is defined as
\begin{equation}\label{eq:msd}
\triangle(t)\equiv\langle|\mathbf{r}(t)-\mathbf{r}(0)|^2\rangle \ \text{for} \ t\geq 0,
\end{equation}
where the average $\langle\cdot\rangle$ is taken over all systems $(L_{sc},I,C)$ for which $I\in\mathcal{I}$ and $C\in\mathcal{C}$. Here, the norm $|\cdot|$ refers to the standard Euclidean distance in the plane.

\begin{figure}
\begin{center}
\begin{tabular}{cc}
\begin{overpic}[scale=.653]{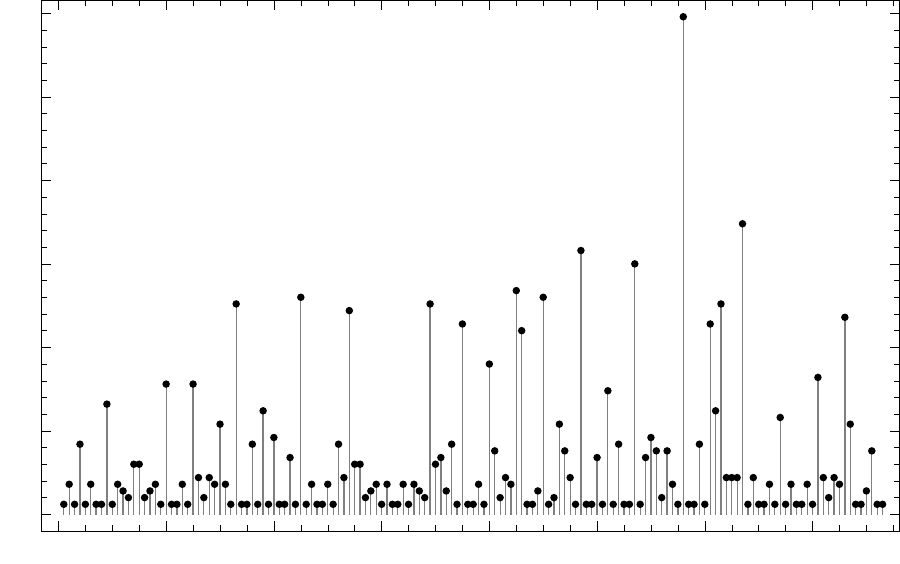}
    \put(5.25,0){\tiny$1$}
    \put(28.5,0){\tiny$40$}
    \put(52,0){\tiny$80$}
    \put(75,0){\tiny$120$}

    \put(1,2.75){\tiny$0$}
    \put(-2.5,23){\tiny$100$}
    \put(-2.5,41.5){\tiny$200$}
    \put(-2.5,60){\tiny$300$}

    \put(27,-6){(a) $L(i)=\tau_i-\tau_{i-1}$}
    \end{overpic}&
    \begin{overpic}[scale=.65]{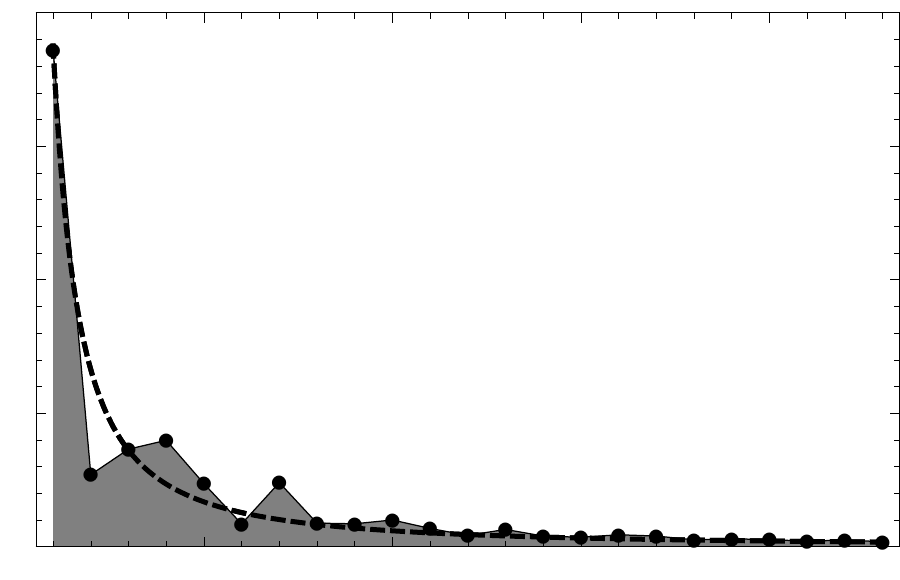}
    \put(5,0){\tiny$6$}
    \put(29,0){\tiny$30$}
    \put(71,0){\tiny$70$}
    \put(96,0){\tiny$94$}

    \put(1,2.75){\tiny$0$}
    \put(0,17.5){\tiny$.1$}
    \put(0,32){\tiny$.2$}
    \put(0,47){\tiny$.3$}
    \put(0,61.5){\tiny$.4$}

    \put(40,-6){(b) $F(\ell)$}
    \end{overpic}
\end{tabular}
\end{center}
\caption{The cycle lengths $L(i)$ for $1\leq i\leq 154$ and the fraction $F(\ell)$ of all cycles of length $\ell$ for the first $t\leq 10^6$ time steps are shown for the $(H,\mathbf{R})$ model in (a) and (b), respectively. The function $F(\ell)\approx \frac{3}{8}\ell^{-3/2}$, shown as the dashed line, indicates a power law decay of the cycle lengths.}\label{fig:6}
\end{figure}

The way the particle's motion in a LLG model is described, in a large number of previous studies, is by the growth of the particle's mean square displacement. To do the same here we introduce the following notation. For the non-negative functions $f(t)$ and $g(t)$, we write
\begin{align}\label{eq:equal}
f(t)\simeq g(t)& \ \text{if} \ \lim_{t\rightarrow\infty}f(t)/g(t)=c \ \text{for some} \ c>0; \ \text{and}\\ \label{eq:unequal}
f(t)\prec g(t)& \ \text{if} \ \lim_{t\rightarrow\infty}f(t)/g(t)=0.
\end{align}
That is, if $f(t)\simeq g(t)$, the growth of $f(t)$ is asymptotically the same as that of $g(t)$. If $f(t)\prec g(t)$, the asymptotic growth of $f(t)$ is dominated by that of $g(t)$.

If the particle's mean square displacement $\triangle(t)\simeq t^{\alpha}$, the particle is said to \emph{diffuse} if $\alpha=1$ and \emph{propagate} if $\alpha=2$, respectively. If $c \prec \triangle(t)\prec t$, for some $c>0$, the particle is said to \emph{subdiffuse} and is said to \emph{superdiffuse} if $t \prec \triangle(t)\prec t^2$. If $\triangle(t)<c$ for all $t\geq 0$ then the particle's trajectory is said to be \emph{bounded}. We note that if the particle subdiffuses, diffuses, superdiffuses, or propagates, the particle has an unbounded trajectory.

Each of these types of motion, i.e. subdiffusion, diffusion, superdiffusion, propagation, as well as bounded motion have been observed in other LLG models \cite{Bunimovich93,Wang94,Wang95.1,Wang95.3,Bunimovich91,Kong90.2,Kong90.1,Meng94,Ziff91}. However, in the $(H;\mathbf{R})$ model, equivalently the $(H,\mathbf{R})$ system, we observe motion, which cannot be described in these terms.

In the $(H;\mathbf{R})$ model, $\triangle(\tau_i)=0$ at each return time $\{\tau_i\}_{i\geq 0}$ to its initial position (see figure \ref{fig:7} (a)). Since there is no last time $\tau_i<\infty$ at which the particle returns to the origin, the particle cannot be said to be either subdiffusing, diffusing, superdiffusing, or propagating. That is, we are not able to describe the particle's dynamics in the $(H;\mathbf{R})$ model in those terms that are typically used to describe other LLG models.

The reason is that each of these terms, e.g. propagation, diffusion, etc. is based on the particle's mean square displacement. A much more appropriate average to consider is the following time-averaged mean square displacement.

\begin{definition}
For the LLG model $(L_{sr};\mathcal{I},\mathcal{C})$, with position $\mathbf{r}(t)$, the quantity
\begin{equation}\label{eq:tamsd}
\bar{\triangle}(t)\equiv\frac{1}{t}\sum_{i=1}^{t}\triangle(i)= \frac{1}{t}\sum_{i=1}^{t}\langle|\mathbf{r}(i)-\mathbf{r}(0)|^2\rangle
\end{equation}
is the particle's \emph{time-averaged mean square displacement} up to time $t$.
\end{definition}

Using the particle's time-averaged mean square displacement $\bar{\triangle}(t)$, we define the following types of dynamics. If $\bar{\triangle}(t)\simeq t^{\alpha}$, we say the particle exhibits \emph{time-averaged diffusion} if $\alpha=1$ and \emph{time-averaged propagation} if $\alpha=2$. If $c \prec \bar{\triangle}(t)\prec t$ for some $c>0$, the particle is said to exhibit \emph{time-averaged subdiffusion} and is said to exhibit \emph{time-averaged superdiffusion} if $t \prec \bar{\triangle}(t)\prec t^2$.

The standard mean square displacement $\triangle(t)$ and the time-averaged mean square displacement $\bar{\triangle}(t)$ of an particle are related in the following way. If a particle exhibits either subdiffusion, diffusion, superdiffusion, or propagation, then the particle also displays time-averaged subdiffusion, diffusion, superdiffusion, or propagation, respectively. However, the converse of this statement does not always hold. This is summarized in the following proposition.

\begin{proposition}\label{thm:0}
For the Lorentz lattice gas model $(L_{sr};\mathcal{I},\mathcal{C})$ the following hold:\\
(a) If $\triangle(t)\simeq t$ then $\bar{\triangle}(t)\simeq t$.\\
(b) If $c\prec\triangle(t)\prec t$ for some $c>0$ then $c\prec\bar{\triangle}(t)\prec t$.\\
(c) If $t\prec\triangle(t)\prec t^2$ then $t\prec\bar{\triangle}(t)\prec t^2$.\\
(d) If $\triangle(t)\simeq t^2$ then $\bar{\triangle}(t)\simeq t^2$.\\
The converse of (a)-(d) do not hold in general.
\end{proposition}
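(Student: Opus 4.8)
The plan is to recognize $\bar{\triangle}(t)=\frac{1}{t}\sum_{i=1}^{t}\triangle(i)$ as a discrete Ces\`aro average of $\triangle$ and to reduce all four implications to a single averaging principle. Writing $S(t)=\sum_{i=1}^{t}\triangle(i)$, so that $\bar{\triangle}(t)=S(t)/t$, I first note that $\triangle(i)\geq 0$ for every $i$ (it is an average of squared distances), hence $S$ is nondecreasing. Each of (a)--(d) then asserts a specific asymptotic comparison of $S(t)/t$ with a power $t^{\alpha}$, equivalently of $S(t)$ with $t^{\alpha+1}$. The uniform tool I would use is the Stolz--Ces\`aro theorem: if $b_t$ is strictly increasing with $b_t\to\infty$ and $(S(t+1)-S(t))/(b_{t+1}-b_t)\to\ell\in[0,\infty]$, then $S(t)/b_t\to\ell$. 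Since $S(t+1)-S(t)=\triangle(t+1)$ and, for $b_t=t^{\beta}$, one has $b_{t+1}-b_t\sim\beta t^{\beta-1}$, the increment ratio is asymptotic to $\triangle(t+1)/(\beta t^{\beta-1})$, which I can evaluate directly from the hypothesis on $\triangle$ in each case.

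For the $\simeq$ statements (a) and (d) I would apply this with $\beta=\alpha+1$. In (a), $\triangle(t)\simeq t$ means $\triangle(t)/t\to c$ for some $c>0$; taking $b_t=t^{2}$, the increment ratio $\triangle(t+1)/(2t+1)\to c/2>0$, so $S(t)/t^{2}\to c/2$ and therefore $\bar{\triangle}(t)/t\to c/2>0$, i.e. $\bar{\triangle}(t)\simeq t$. Case (d) is identical with $b_t=t^{3}$, giving $\bar{\triangle}(t)/t^{2}\to c/3>0$.

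For the $\prec$ statements (b) and (c), each needs two applications, one for the lower comparison and one for the upper. In (b), $c\prec\triangle(t)$ means $\triangle(t)\to\infty$; taking $b_t=t$ the increment ratio is simply $\triangle(t+1)\to\infty$, so $S(t)/t\to\infty$, giving $c\prec\bar{\triangle}(t)$. Simultaneously $\triangle(t)\prec t$ means $\triangle(t)/t\to 0$; taking $b_t=t^{2}$ the increment ratio $\triangle(t+1)/(2t+1)\to 0$, so $\bar{\triangle}(t)\prec t$. Case (c) runs the same way with the pair $b_t=t^{2}$ (for $t\prec\bar{\triangle}(t)$, using $\triangle(t)/t\to\infty$) and $b_t=t^{3}$ (for $\bar{\triangle}(t)\prec t^{2}$, using $\triangle(t)/t^{2}\to 0$). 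Here I would want to invoke the $\ell=\infty$ form of Stolz--Ces\`aro explicitly, which is valid since each $b_t=t^{\beta}$ is strictly increasing to infinity.

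Finally, to show the converses fail I would exhibit a single $\triangle$ whose Ces\`aro average satisfies one of (a)--(d) while $\triangle$ itself does not. The motivating example is the $(H,\mathbf{R})$ system, where $\triangle(\tau_i)=0$ at infinitely many return times, so $\triangle(t)/t^{\alpha}$ cannot converge to a positive limit (it keeps returning to $0$), yet $\bar{\triangle}(t)$ may still grow as a power law. A fully explicit variant suffices: set $\triangle(t)=2ct$ for even $t$ and $\triangle(t)=0$ for odd $t$. Then $\triangle(t)/t$ oscillates between $0$ and $2c$ and has no limit, so $\triangle(t)\not\simeq t$; but summing only the even terms gives $S(t)\sim ct^{2}/2$, whence $\bar{\triangle}(t)\simeq t$, contradicting the converse of (a). Analogous sparse-vanishing sequences (for instance $\triangle(t)=2\sqrt{t}$ on even $t$ and $0$ on odd $t$, which forces $\bar{\triangle}(t)\simeq t^{1/2}$ while $\triangle$ vanishes infinitely often) defeat the converses of (b)--(d). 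I expect the only genuine care to lie in this last step, namely verifying that each chosen sequence simultaneously violates the pointwise $\simeq$ or $\prec$ condition and realizes the averaged one; the forward implications (a)--(d) are an immediate and uniform consequence of Stolz--Ces\`aro.
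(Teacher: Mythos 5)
Your proof is correct, and it reaches the same conclusions by a more unified route than the paper. The paper proves (a), (b), (d) by hand: it fixes $\epsilon$, splits the sum $\frac{1}{t}\sum_{i=1}^{t}\triangle(i)$ at a threshold $T$, and evaluates the tail with the explicit identities $\sum_{i=1}^{n}i=n(n+1)/2$ and $\sum_{i=1}^{n}i^{2}=n(n+1)(2n+1)/6$, obtaining the same limits $c/2$ and $c/3$ that you get; part (c) is declared analogous and omitted. In effect the paper re-derives, case by case, exactly the special instances of the Stolz--Ces\`aro theorem that you invoke once and for all. What your approach buys is economy and uniformity: all four implications, including the $\ell=0$ and $\ell=\infty$ endpoints needed for (b) and (c), follow from a single standard theorem applied with $b_{t}=t^{\beta}$, and you correctly flag that the infinite-limit form is the one needed for the lower comparisons. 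What the paper's approach buys is self-containedness -- no external theorem is cited, and the constants $c/2$, $c/3$ appear explicitly. Your counterexample to the converse of (a) is essentially identical to the paper's (the paper takes $\triangle(t)=2t-1$ on odd $t$ and $0$ on even $t$, yielding $\bar{\triangle}(t)\simeq t$ while $\triangle(t)/t$ oscillates), and both you and the paper dispatch the converses of (b)--(d) by pointing to analogous sparse-vanishing sequences rather than writing them out. The only point worth double-checking in your write-up is the degenerate arithmetic in the failed converses: when $\triangle(t)=0$ infinitely often the ratio $c/\triangle(t)$ is undefined at those times, so one should say that the limit defining $c\prec\triangle(t)$ fails to exist rather than that it is nonzero; this is cosmetic and does not affect the argument.
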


A proof of proposition \ref{thm:0} is given in the Appendix B. We note that since the converse of (a)-(d) in proposition \ref{thm:0} do not always hold, the notions of time-averaged subdiffusion, diffusion, superdiffusion, and propagation are more general than their unaveraged counterparts.

\begin{figure}
\begin{center}
\begin{tabular}{cc}
    \begin{overpic}[scale=.64]{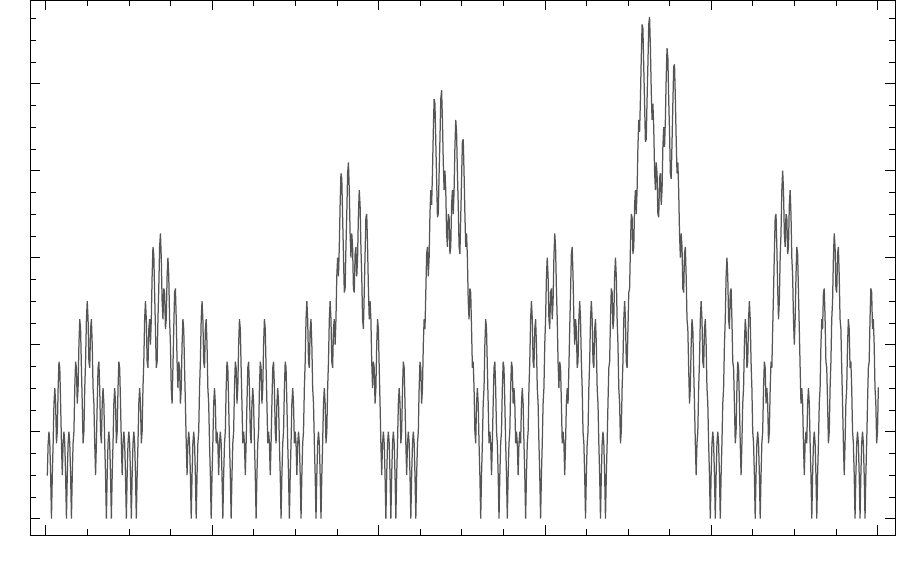}
    \put(40,-7){(a) \small$\triangle(t)$}

    \put(5,0){\tiny$0$}
    \put(21,0){\tiny$200$}
    \put(39,0){\tiny$400$}
    \put(57,0){\tiny$600$}
    \put(76,0){\tiny$800$}
    \put(92,0){\tiny$1000$}

    \put(0,4){\tiny$0$}
    \put(0,14){\tiny$2$}
    \put(0,24){\tiny$4$}
    \put(0,33){\tiny$6$}
    \put(0,43){\tiny$8$}
    \put(-2,52.5){\tiny$10$}
    \end{overpic} &
    \begin{overpic}[scale=.665]{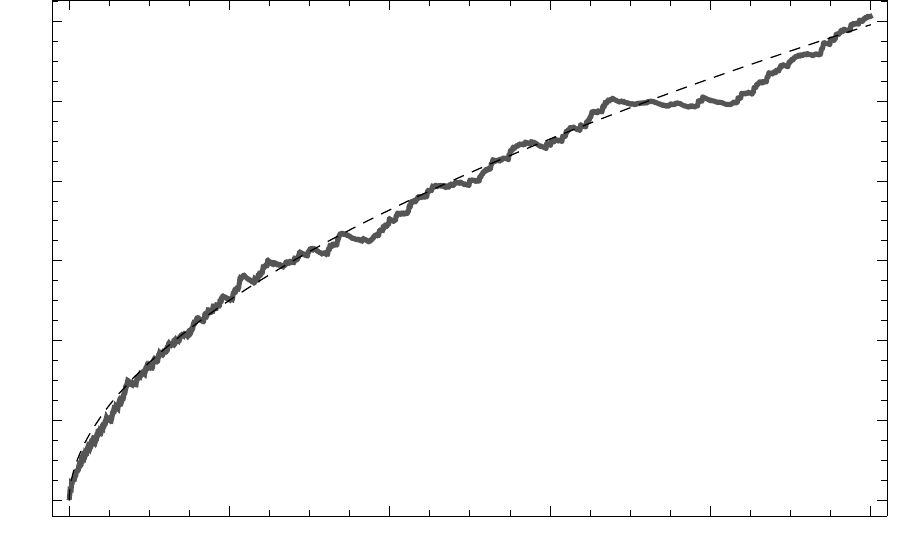}
    \put(43,-7){(b) \small$\bar{\triangle}(t)$}

    \put(5,0){\tiny$0$}
    \put(21,0){\tiny$2(10^5)$}
    \put(39,0){\tiny$4(10^5)$}
    \put(57,0){\tiny$6(10^5)$}
    \put(76,0){\tiny$8(10^5)$}
    \put(94,0){\tiny$10^6$}

    \put(2,4){\tiny$0$}
    \put(-1,13){\tiny$200$}
    \put(-1,21.75){\tiny$400$}
    \put(-1,30.5){\tiny$600$}
    \put(-1,39.5){\tiny$800$}
    \put(-3,48.5){\tiny$1000$}
    \put(-3,57){\tiny$1200$}
    \end{overpic}
\end{tabular}
\end{center}
\caption{The mean square displacement $\triangle(t)$ for $t\leq 10^3$, shown in (a), and the time-averaged mean square displacement $\bar{\triangle}(t)$ for $t\leq10^6$, shown in (b), of the $(H;\mathbf{R})$ model. Here, $\bar{\triangle}(t)\approx \frac{7}{10}t^{7/13}$, which is shown as the dashed line.}\label{fig:7}
\end{figure}

In the $(H;\mathbf{R})$ model, the particle's time-averaged mean square displacement is given by
\begin{equation}
\bar{\triangle}(t)=\frac{1}{t}\sum_{i=1}^t|\mathbf{r}(i)|^2,
\end{equation}
which is the particle's average square distance from the origin from time $1$ to time $t$, cf. equation \eqref{eq:tamsd}. The $(H;\mathbf{R})$ model's mean square displacement $\triangle(t)$ and its time-averaged mean square displacement $\bar{\triangle}(t)$ are compared in figure \ref{fig:7}.

As can be seen in figure \ref{fig:7} (a), the particle's mean square displacement is irregular both in its returns to zero and in its growth away from the horizontal axis.
However, the particle's time-averaged mean square displacement, shown in figure \ref{fig:7} (b), exhibits a fairly regular growth that can be approximated by the power law $\bar{\triangle}(t)\approx \frac{7}{10}t^{7/13}$ for $t\leq 10^6$. One way to interpret this is to say that, the particle displays time-averaged subdiffusion up to this point in time. In fact, as far as is known numerically, the particle always displays time-averaged subdiffusion, which strongly suggests that the particle's trajectory is unbounded.

The time-averaged mean square displacement also allows us to compare the dynamics of the particle in the $(H;\mathbf{R})$ model with previous LLG models, where only the particle's mean square displacement has ever been considered. In particular, any LLG that displays subdiffusive dynamics also displays time-averaged subdiffusion, and in this sense is dynamically similar to the $(H;\mathbf{R})$ model. Additionally, the growth of the particle's time-averaged mean square displacement in the $(H;\mathbf{R})$ model not only suggests that the particle has an unbounded trajectory but also indicates the average rate at which the particle moves away from it initial position.

Under the assumption that the particle's trajectory is unbounded, the particle has what we refer to as a \emph{pulsating} motion.

\begin{definition}\label{def:pulsating}
The particle in an LLG $(L_{sr},I,C)$ is said to \emph{pulsate}, if
there is an infinite sequence of times $\{\tau_i\}_{i\geq 0}$ such that $\mathbf{r}(\tau_i)=\mathbf{r}(0)$ for $i\geq 0$ and the particle's trajectory is unbounded.
\end{definition}

If the particle does pulsate in the $(H;\mathbf{R})$ model, this would have a number of consequences for its sequence of cycles $\{\gamma_i\}_{i\geq 1}$. One is that there would then be an infinite number of distinct cycles. Another is that these cycles would become arbitrarily large as the particle explored more of the lattice. Currently, it is an open question as to whether the particle in the $(H;\mathbf{R})$ model has an unbounded trajectory, although we conjecture this is the case.

Before continuing, we note that a particle with pulsating dynamics has been previously observed in a one-dimensional LLG (see theorem 2 in \cite{Bunimovich04}). The main difference between this and the motion observed in the $(H;\mathbf{R})$ model is that in one-dimension a pulsating particle necessarily intersects its trajectory when returning to its initial position. To the best of our knowledge, the systems considered in this paper are the only known class of LLGs in which the particle has a self-avoiding motion, between returns to its initial position.

\begin{figure}
    \begin{overpic}[scale=.33]{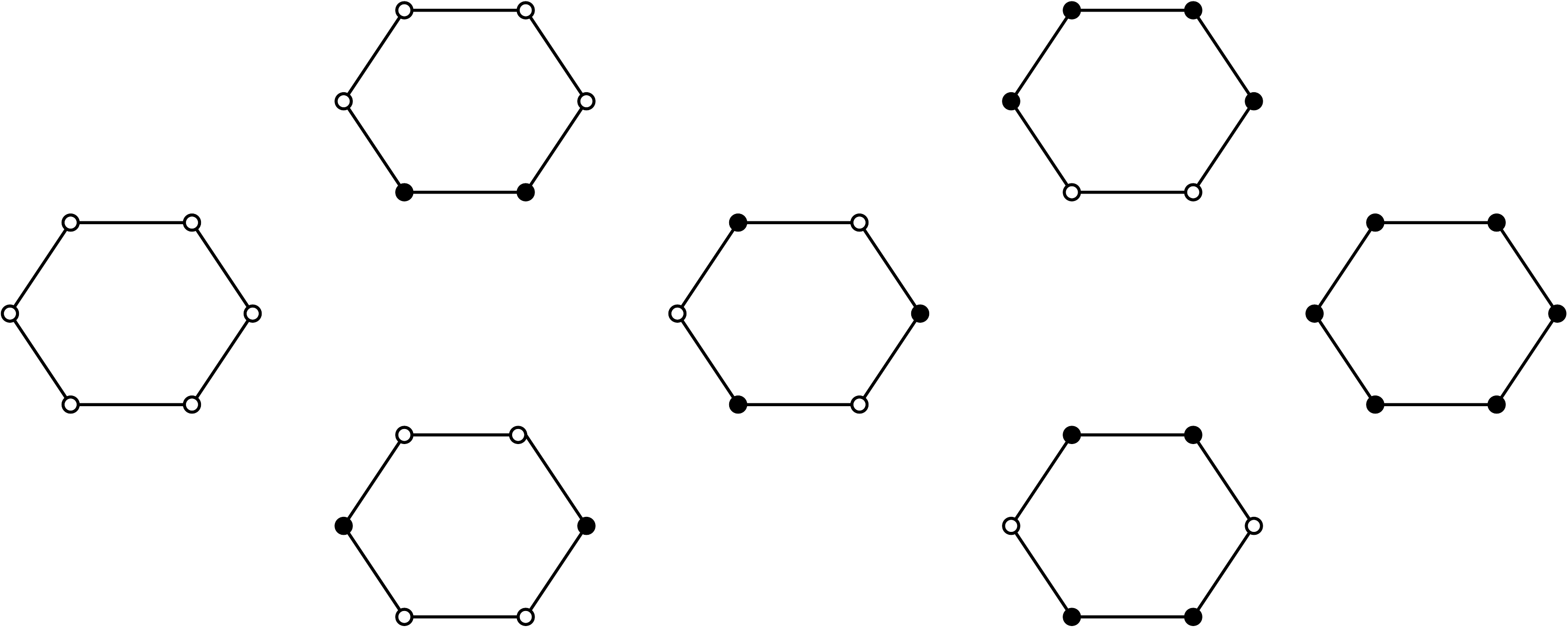}
    \put(7,10){$(i)$}
    \put(27.5,24){$(ii)$}
    \put(27,-3){$(iii)$}
    \put(48.5,10){$(iv)$}
    \put(70,24){$(v)$}
    \put(69.5,-3){$(vi)$}
    \put(88,10){$(vii)$}
    \end{overpic}
\vspace{0.1in}
\caption{The seven \emph{admissible configurations} $\mathcal{A}\subset\mathcal{S}$ on the honeycomb lattice $H$ up to symmetry and reflection.}\label{fig:8}
\end{figure}

\section{Admissible Configurations}\label{sec:5}
In this section, our goal is to discuss the mechanism that causes the self-avoiding behavior found in the $(H;\mathbf{R})$ model. To accomplish this, our strategy is to begin by studying how the particle's movement across a single hexagon of $H$ influences the configuration of scatterers on this hexagon. What we find is that there are two distinct classes of scattering configurations on a hexagon, both of which are invariant to the passage of the particle.

Of these two classes, the first, which we call \emph{admissible}, has the following property. If the particle moves through a hexagon with an admissible configuration, its trajectory on the hexagon will be self-avoiding. Moreover, if two adjacent hexagons have admissible configurations, the particle's trajectory through these two hexagons will also be self-avoiding. By piecing together more and more hexagons we are able to show that a particle's trajectory will remain self-avoiding, away from its initial position, if each hexagon of the lattice has an admissible configuration.

Once this has been shown, we then consider the case in which the particle returns to its initial position. Here, we will prove that after the particle has returned to its initial position, the configuration of any hexagon of the lattice will again be admissible. Hence, the particle will again experience a self-avoiding trajectory until its next return and so on. This is the main result in this section (see theorem \ref{lem:1}). In particular, since the initial configuration of all right scatterers is admissible this will imply that the particle in the $(H;\mathbf{R})$ model has this type of self-avoiding behavior.

Our method for proving this result is to consider the particle's motion as it passes through a finite subset of the honeycomb lattice. To make this notion precise, suppose $\Omega$ is a finite subset of the honeycomb lattice $H$. By way of notation, we write $\mathbf{r}(t)\in\Omega$, if the particle is at a lattice of $\Omega$ at time $t$. If there are two times $t_1<t_2$, such that $\mathbf{r}(t)\in\Omega$ for $t_1\leq t\leq t_2$ and $\mathbf{r}(t_1-1)$, $\mathbf{r}(t_2+1)\notin\Omega$, the particle is said to \emph{enter} $\Omega$ at time $t_1$ and \emph{exit} $\Omega$ at time $t_2$, respectively.

We begin by studying how the particle's movement across a single hexagon $\mathcal{H}\subset H$ changes the hexagon's scattering configuration. Since each lattice site of $\mathcal{H}$ can be either a right or a left scatterer, the number of possible scattering configurations on $\mathcal{H}$ is $2^6=64$. In fact, up to rotation and reflection, there are only thirteen different configurations on $\mathcal{H}$, which are shown collectively in figures \ref{fig:8} and \ref{fig:11}. In these figures, we label each scattering configurations by some element of the set
\begin{equation}
\mathcal{S}=\{(i),(ii),\dots,(xiii)\},
\end{equation}
which is the set of all possible configurations on a single hexagon of $H$.

To understand the particle's influence on these configurations, suppose the particle enters the hexagon $\mathcal{H}$ at $t=t_1$. Then, before it exits the hexagon at $t=t_2$, the particle will cause a change in orientation of a number of scatterers. Therefore, the passage of the particle through $\mathcal{H}$ will cause a change from one configuration of $\mathcal{S}$ to another. If $\alpha\in \mathcal{S}$ is the configuration on $\mathcal{H}$ at time $t=t_1-1$ and $\beta\in \mathcal{S}$ the configuration on $\mathcal{H}$ at time $t=t_2+1$, we say the particle induces a \emph{transition} from $\alpha$ to $\beta$ during this time.

The notion of a transition between two configurations on a hexagon is illustrated in the following example (cf. figure \ref{fig:9}).

\begin{figure}
    \begin{overpic}[scale=.33]{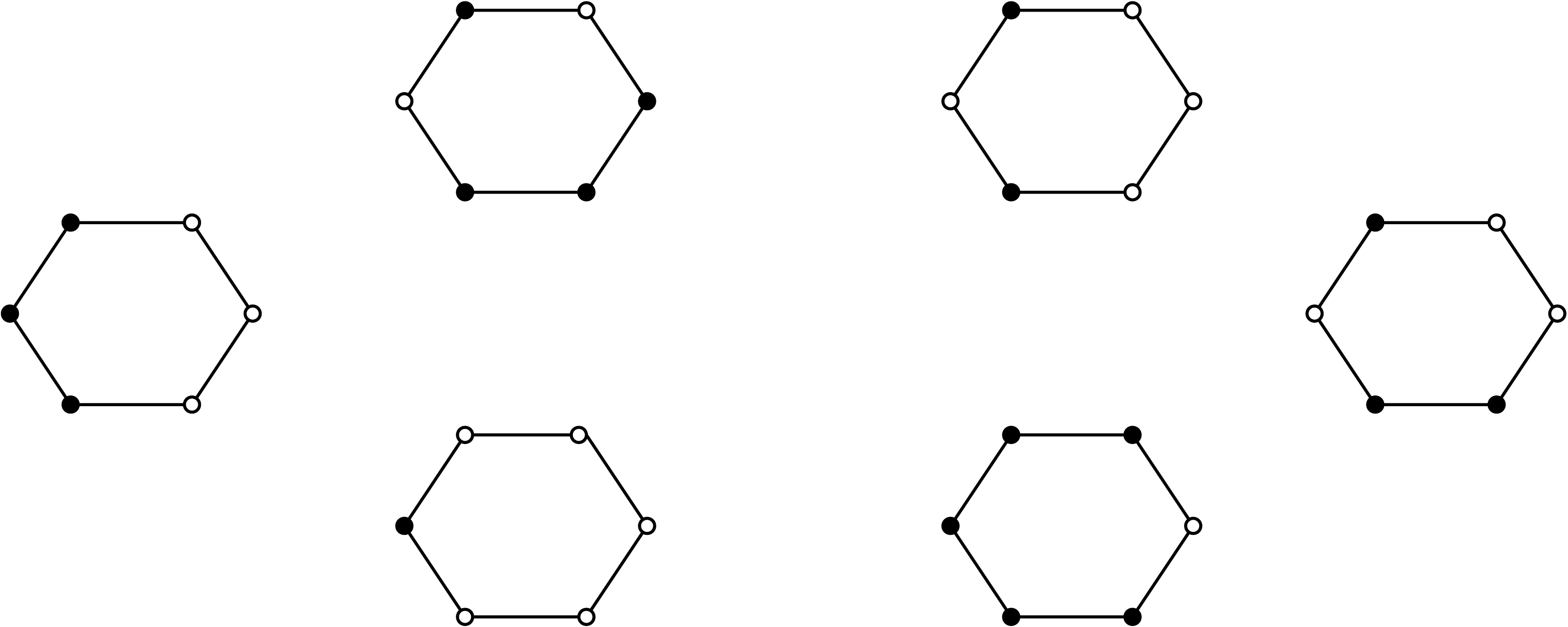}
    \put(4.75,10){$(viii)$}
    \put(31,24){$(ix)$}
    \put(31.5,-3){$(x)$}
    \put(65.75,24){$(xi)$}
    \put(65.5,-3){$(xii)$}
    \put(87.75,10){$(xiii)$}
    \end{overpic}
\vspace{0.1in}
\caption{The six nonadmissible local configurations $\mathcal{N}\subset \mathcal{S}$ on the honeycomb lattice $H$ up to symmetry and reflection.}\label{fig:11}
\end{figure}

\begin{example}\label{ex:1.1}
Consider the single hexagon shown in figure \ref{fig:9} (a), with the configuration $(v)\in \mathcal{S}$. If the particle enters the hexagon from the left at time $t=t_1$, it exits the hexagon at time $t=t_2$ after being scattered at the left, right, and top two lattice sites, as shown in figure \ref{fig:9} (b). In doing so, the particle induces a transition from the configuration $(v)$ to the configuration $(vi)$ on the hexagon.
\end{example}

\begin{figure}
    \begin{overpic}[scale=.35]{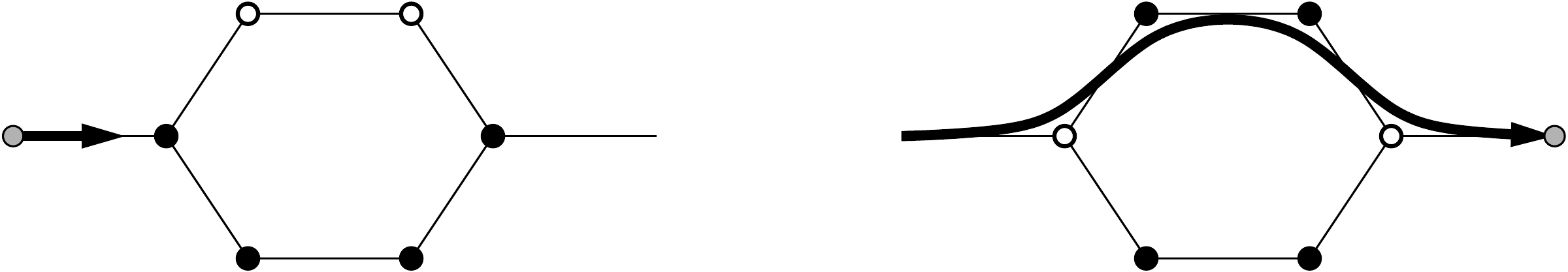}
    \put(4,-9){(a) \small\emph{entry} at time $t_1$}
    \put(18.5,-3.5){\small(v)}
    \put(-15,7.75){\small$\mathbf{r}(t_1-1)$}

    \put(63,-9){(b) \small\emph{exit} at time $t_2$}
    \put(75.5,-3.5){\small(vi)}
    \put(101,7.75){\small$\mathbf{r}(t_2+1)$}
    \end{overpic}
\vspace{0.25in}
\caption{The particle considered in example \ref{ex:1.1} enters a hexagon at time $t_1$ in (a) and exits at time $t_2$ in (b), inducing a transition from the configuration (v) to the configuration (vi).}\label{fig:9}
\end{figure}

If it is possible to transform the configuration $\alpha\in \mathcal{S}$ to the configuration $\beta\in \mathcal{S}$ via a sequence of transitions, we write $\alpha\sim\beta$. This allows us then to state the following theorem, which says that the configurations of $\mathcal{S}$ can be separated into two disjoint sets between which no transitions can occur.

\begin{theorem}\label{thm:partition}
The relation $\sim$ is an equivalence relation that partitions the set of configurations $\mathcal{S}$ into the two subsets
\begin{equation}
\mathcal{A}=\{(i),(ii),\dots,(vii)\} \  \ \text{and} \ \ \mathcal{N}=\{(viii),(ix),\dots,(xiii)\}.
\end{equation}
\end{theorem}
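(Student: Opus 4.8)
The plan is to reduce the statement to a finite combinatorial computation and then prove the two required features (equivalence relation, two classes) separately, with the symmetry of $\sim$ carried by a reversibility lemma for the flipping--rotator dynamics. The first step is to describe a single passage through a hexagon $\mathcal{H}$ explicitly. When the particle enters $\mathcal{H}$ at a vertex along its external bond, equation \eqref{eq:2} with the matrix \eqref{eq:matrix} rotates its velocity by $\pm\pi/3$, which by the honeycomb geometry necessarily places it on one of the two boundary edges at that vertex. At each subsequent vertex the local rotator either turns the particle by $+\pi/3$ (continuing along the boundary in the same rotational sense) or by $-\pi/3$ (ejecting it along that vertex's external bond). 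Hence the trajectory inside $\mathcal{H}$ is a connected arc of the boundary: the particle continues until the first vertex whose orientation ejects it, and by \eqref{eq:3} every visited rotator is flipped exactly once (a vertex is flipped twice, hence unchanged, only in the degenerate case where the particle traverses all six edges and leaves through the entry vertex). Thus each transition $\alpha\to\beta$ is a flip of a \emph{contiguous arc} of rotators, completely determined by $\alpha$ and the entry vertex. Since there are only finitely many configurations and six entry vertices, the single--step transition relation is finite and can be tabulated; reducing by the dihedral symmetry of $\mathcal{H}$ (which commutes with the equations of motion) lets us work directly with the $13$ classes of $\mathcal{S}$ in figures \ref{fig:8} and \ref{fig:11}.

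For the assertion that $\sim$ is an equivalence relation, reflexivity holds via the empty sequence of transitions and transitivity via concatenation, so the only real content is symmetry. Here I would prove a reversibility lemma: the flipping--rotator dynamics is invariant under reversal of the trajectory together with reversal of the velocity. At a single scattering site $\mathbf{h}$ the forward rule sends $\mathbf{v}_{\mathrm{in}}\mapsto \mathbf{v}_{\mathrm{out}}=R(C_{\mathbf{h}})\mathbf{v}_{\mathrm{in}}$ and $C_{\mathbf{h}}\mapsto -C_{\mathbf{h}}$. Feeding in $-\mathbf{v}_{\mathrm{out}}$ at $\mathbf{h}$ with the already--flipped orientation $-C_{\mathbf{h}}$ produces $R(-C_{\mathbf{h}})(-\mathbf{v}_{\mathrm{out}})=-R(-C_{\mathbf{h}})R(C_{\mathbf{h}})\mathbf{v}_{\mathrm{in}}=-\mathbf{v}_{\mathrm{in}}$, using $R(-z)=R(z)^{\top}=R(z)^{-1}$, and flips the orientation back to $C_{\mathbf{h}}$. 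Chaining this step along a passage shows that a particle launched into $\mathcal{H}$ at the exit vertex of $\alpha\to\beta$, in configuration $\beta$ and with reversed velocity, retraces the arc in reverse, restores every flipped rotator, and leaves through the original entry vertex in configuration $\alpha$. Therefore $\alpha\to\beta$ implies $\beta\to\alpha$, the single--step relation is symmetric, and $\sim$ is genuinely an equivalence relation.

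It then remains to show the quotient has exactly the two blocks $\mathcal{A}=\{(i),\dots,(vii)\}$ and $\mathcal{N}=\{(viii),\dots,(xiii)\}$. From the tabulated transitions I would (a) verify that the seven configurations of $\mathcal{A}$ are mutually connected under $\sim$ and likewise the six of $\mathcal{N}$, and (b) verify that no transition ever connects a configuration of $\mathcal{A}$ to one of $\mathcal{N}$. Part (a) is a connectivity check on the finite transition graph: starting from a representative (for instance the all--right configuration, which the contiguous--arc rule carries to a single--right configuration) one reaches every other member of its block by successive arc flips. Part (b) is cleanest to establish by exhibiting a quantity preserved by every contiguous--arc flip that takes one value on $\mathcal{A}$ and another on $\mathcal{N}$; since the admissible block already mixes configurations with different numbers of right rotators, this invariant cannot be the raw count of right rotators but should instead be a parity built from the arrangement of the rotators on the two sublattices $\mathbb{H}^{+},\mathbb{H}^{-}$ that alternate around $\mathcal{H}$. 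One checks that flipping any admissible contiguous arc leaves this parity fixed, so no transition crosses between the blocks.

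I expect the main obstacle to be part (b), and more specifically the bookkeeping underlying it. The turn induced at a vertex depends on whether that vertex lies in $\mathbb{H}^{+}$ or $\mathbb{H}^{-}$, so the precise arc flipped by a given entry is sublattice--dependent; getting the transition table exactly right, correctly handling the wrap--around passages where a vertex is visited twice, and then isolating the separating invariant are the delicate points. Everything else -- the reduction to contiguous--arc flips, the reversibility lemma, and the within--block connectivity -- is either structural or a routine finite verification.
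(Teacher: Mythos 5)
Your proposal is correct in outline and, at its core, rests on the same finite verification as the paper: the paper's proof simply constructs the transition graph $\Gamma$ by exhaustively checking all $6\cdot 13=78$ entry-vertex/configuration pairs, reads reflexivity and symmetry off the resulting graph, gets transitivity from concatenation, and observes that $\Gamma$ has exactly the two components $\mathcal{A}$ and $\mathcal{N}$. What you add on top of that is genuine structure that the paper does not supply. Your description of a single passage as the flip of a contiguous boundary arc (with the wrap-around case correctly identified as the only way a vertex is flipped twice) is accurate: a $\pm\pi/3$ rotation can never reverse the velocity, so the particle cannot backtrack, and the turn that keeps it on the hexagon is the same at every vertex for a fixed rotational sense, so it circulates until first ejected. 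Your reversibility lemma is also correct --- $R(-z)=R(z)^{-1}$ is immediate from \eqref{eq:matrix}, and the post-scattering flip in \eqref{eq:3} is exactly what makes the time-reversed trajectory a legal forward trajectory --- and it gives a clean conceptual proof that the one-step relation is symmetric, where the paper only observes symmetry a posteriori from the computed graph. These additions make the exhaustive check smaller and more transparent, which is a real gain.

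Two caveats. First, the separating invariant in your part (b) is the one piece of your plan that is not actually carried out: you say it ``should'' be a sublattice-weighted parity but never exhibit it or verify its conservation, so as written that step is a conjecture, not a proof. This is non-fatal only because the brute-force alternative (check that none of the finitely many transitions crosses between $\mathcal{A}$ and $\mathcal{N}$) is already contained in your tabulation step and is precisely what the paper does; if you want the invariant version you must still produce and check the invariant. Second, be careful with the claim that the dihedral symmetry of $\mathcal{H}$ ``commutes with the equations of motion'': rotations do, but a pure reflection conjugates $R(z)$ to $R(-z)$, so the dynamical symmetry is reflection \emph{composed with a global orientation flip}, not reflection alone. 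The reduction from $64$ configurations to the $13$ classes of figures \ref{fig:8} and \ref{fig:11} therefore needs a word of justification (or one simply tabulates all $6\cdot 64$ transitions and passes to the quotient afterwards); this subtlety is glossed over in the paper as well, but your proof should not inherit it silently.
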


\begin{proof}
Let $\Gamma$ be the graph of all possible transitions between elements of $\mathcal{S}$. That is, $\Gamma$ is the directed graph with vertices labelled by elements of $\mathcal{S}$, where there is a directed edge (arrow) from the configuration $\alpha\in \mathcal{S}$ to the configuration $\beta\in \mathcal{S}$ if and only if $\alpha\sim\beta$.

The graph $\Gamma$ is constructed by checking each possible transition as follows. For a given configuration $\alpha\in \mathcal{S}$ there are six directions from which the particle can enter a hexagon. Hence, there are at most six transitions from any $\alpha\in \mathcal{S}$ to another configuration $\beta\in \mathcal{S}$. Since there are thirteen configurations in $\mathcal{S}$, all transitions between these configurations can be found by exhaustively checking each of these $6\cdot13=78$ possibilities.

The graph $\Gamma$ is shown in figure \ref{fig:12}, from which one can immediately check that the relation $\sim$ is reflexive and symmetric. Transitivity follows from the fact that if $\alpha\sim\beta$ and $\beta\sim\delta$ then there is a sequence of transitions that transform the configuration $\alpha$ into $\delta$. Since $\Gamma$ has two connected components containing the elements of $\mathcal{A}$ and $\mathcal{N}$ respectively, this completes the proof.
\end{proof}

An important consequence of theorem \ref{thm:partition} is that the particle can only induce a transition between elements of $\mathcal{A}$ or elements of $\mathcal{N}$, but not between these sets. For example, if a hexagon begins with the configuration $\alpha\in\mathcal{A}$ then the particle's motion through the hexagon can only change the hexagon's configuration to another element of $\mathcal{A}$. Phrased another way, the configurations $\mathcal{A}$ and $\mathcal{N}$ are invariant with respect to the passage of the particle through the hexagon.

In this paper, our focus is on the set of configurations $\mathcal{A}$, which we use to define the following concept.

\begin{definition}\label{def:admiss}
We call the set of configurations $\mathcal{A}\subset \mathcal{S}$ the \emph{admissible configurations} of a hexagon. We say the configuration $C$, on the honeycomb lattice $H$, is \emph{admissible} if this configuration restricted to each hexagon of $H$ is an admissible configuration.
\end{definition}

An example of an admissible configuration is the configuration of all right scatterers, found in the $(H;\mathbf{R})$ model. The reason this configuration is admissible is that its restriction to any hexagon of $H$ is the admissible configuration $(i)\in\mathcal{A}$, shown in figure \ref{fig:8}.

Having defined the notion of an admissible configuration, we now consider the effect this type of configuration has on the particle's motion. We begin by studying the particle's trajectory restricted to a subset of $H$. To be precise about what we mean, suppose $\Omega$ is a finite subset of $H$ with a given configuration of scatterers. If the particle enters $\Omega$ at time $t_1$ and exits at time $t_2$ we call the sequence of positions $\{\mathbf{r}(t):t_1 \leq t\leq t_2\}$ a \emph{local crossing} of $\Omega$ (cf. figure \ref{fig:9}). Additionally, we say the sequence $\gamma=\{\mathbf{r}(t):t_1\leq t\leq t_2\}$ is a \emph{local cycle} of $\Omega$ if $\gamma\subset\Omega$ is a cycle and has the additional property that $\mathbf{v}(t_1)=\mathbf{v}(t_2)$, i.e. $\mathbf{r}(t_1+1)=\mathbf{r}(t_2+1)$.

\begin{figure}
    \begin{overpic}[scale=.99]{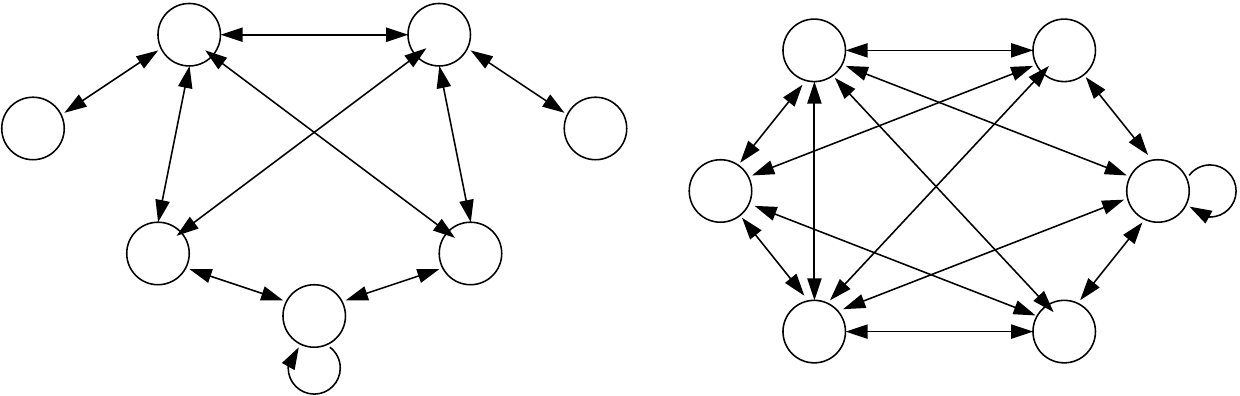}
    \put(2.25,20.75){$i$}
    \put(14.25,28.5){$ii$}
    \put(11.25,10.75){$iii$}
    \put(24.25,5.75){$iv$}
    \put(34.75,28.5){$v$}
    \put(36.75,10.75){$vi$}
    \put(46.25,20.75){$vii$}

    \put(55.85,15.75){$viii$}
    \put(64.5,27.25){$ix$}
    \put(65,4.5){$x$}
    \put(84.5,27.25){$xi$}
    \put(84,4.5){$xii$}
    \put(91.2,15.75){$xiii$}


    \put(48,-3){$\Gamma$}
    \end{overpic}
\vspace{0.1in}
\caption{The \emph{graph of transitions} $\Gamma$ of all transitions that are possible between all elements of the configurations $\mathcal{S}=\mathcal{A}\cup\mathcal{N}$. Shown to the left are all the possible transitions between elements of $\mathcal{A}=\{(i),\dots,(vii)\}$ and right, all the transitions between elements of $\mathcal{N}=\{(viii),\dots,(xiii)\}$, respectively.}\label{fig:12}
\end{figure}

To understand the notion of a local cycle, suppose $\gamma=\{\mathbf{r}(t):0\leq t\leq t_2\}$ is a local cycle of the system $(H,I,C)$ where $I=(\mathbf{r},\mathbf{v})$. Moreover, consider the particle's trajectory in the related system $(H,J,C)$ where $J=(\mathbf{r}(\tau),\mathbf{v}(\tau))$ for some $0<\tau<t_2$. That is, the particle in the $(H,J,C)$ system starts at the point $\mathbf{r}(\tau)$ on $\gamma$ and moves along this cycle. When it reaches the base $\mathbf{r}=\mathbf{r}(0)$ of $\gamma$, implies that $\gamma$ is a local cycle, i.e. $\mathbf{v}(0)=\mathbf{v}(t_2)$, means that the particle continues along $\gamma$ until it reaches its initial position $\mathbf{r}(\tau)$.

Thus, if $\mathbf{s}(t)$ is the particle's position in the $(H,J,C)$ system, $\delta=\{\mathbf{s}(t):0\leq t\leq t_2\}$ is a cycle. Moreover, $\delta=\gamma$ as a set of positions. However, $\gamma$ and $\delta$ are not the same cycle, since they have different bases. In this sense, the local cycle $\gamma$ is what we call \emph{baseless}, since beginning at any other position on the cycle $\gamma$, the particle will move along the same set of lattice sites.

If $\gamma$ is not a local cycle this will not be the case. Therefore, a local and a nonlocal cycle are different, as is illustrated in the following example.

\begin{example}\label{ex:localcyc}
Consider the particle that moves along the hexagon in figure \ref{fig:localcyc} (a), starting at $\mathbf{h}_1$. Since each scatterer on the hexagon is a right scatterer, the sequence of positions $\gamma=\{\mathbf{r}(t):0 \leq t \leq 6\}$ is a cycle. Moreover, since $\mathbf{v}(0)=\mathbf{v}(6)$ then $\gamma$ is a local cycle. To see that $\gamma$ is baseless, suppose instead that the particle starts at $\mathbf{h}_2$, as is shown in \ref{fig:localcyc} (b). In this case, the particle moves along the cycle $\delta$ up to time $t=6$. As can be seen, $\gamma=\delta$ as a set of positions, although these cycles have different bases.

Now suppose the particle moves along the hexagon in figure \ref{fig:localcyc} (c), starting at $\mathbf{h}_1$. In this case, each scatterer is a right scatterer except the scatterer at $\mathbf{h}_1$. Therefore, $\mu=\{\mathbf{r}(t):0 \leq t \leq 6\}$ is a cycle but is not a local scatterer as $\mathbf{v}(0)\neq\mathbf{v}(6)$. If the particle instead starts at $\mathbf{h}_2$, as is shown in \ref{fig:localcyc} (d), the particle exits the hexagon after three time steps. That is, if $\nu$ is the particle's trajectory up to this point, $\mu\neq\nu$. The reason these are different is that $\mu$ is not a local cycle.
\end{example}

With the notion of a local crossing and a local cycle in place, we now combine the two. If $C$ is a configuration of scatterers on $H$ and $\Omega$ a subset of $H$, we let $\mathcal{L}_{\Omega}(C)$ denote the set of all local crossings and local cycles on $\Omega$. We call $\mathcal{L}_{\Omega}(C)$ the set of \emph{local trajectories} on $\Omega$ with the configuration $C$. In what follows, we typically let the trajectories $\mathcal{L}_{\Omega}(C)=\{\ell_1,\dots,\ell_N\}$ so that each local crossing or local cycle of $\Omega$ is denoted by some $\ell_i$.

\begin{figure}
    \begin{overpic}[scale=.425]{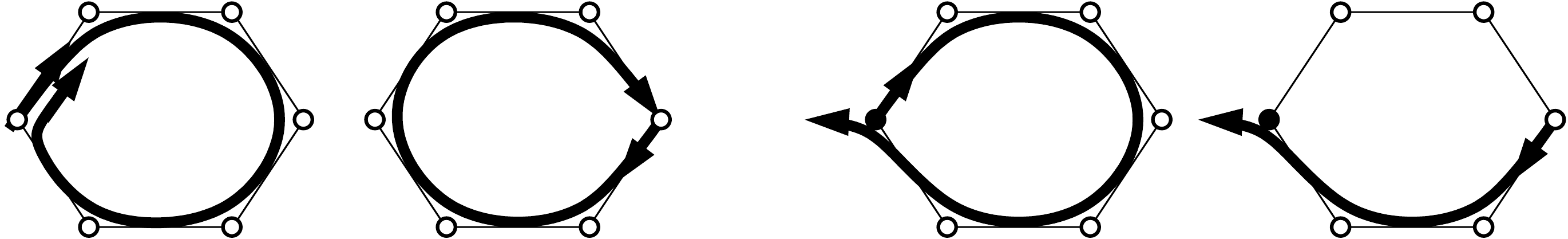}
    \put(-3,11){\small$\mathbf{v}(0)$}
    \put(5,7){\small$\mathbf{v}(6)$}
    \put(-1.5,4){$\mathbf{h}_1$}

    \put(41.5,3){\small$\mathbf{v}(0)$}
    \put(42,9){$\mathbf{h}_2$}

    \put(7,-4){(a) $\gamma$}
    \put(29.5,-4){(b) $\delta$}
    \put(62,-4){(c) $\mu$}
    \put(87,-4){(d) $\nu$}

    \put(50,9.5){\small$\mathbf{v}(6)$}
    \put(59,8){\small$\mathbf{v}(0)$}
    \put(53.5,4){$\mathbf{h}_1$}

    \put(99,9){$\mathbf{h}_2$}
    \put(98.5,3){\small$\mathbf{v}(0)$}
    \end{overpic}
\vspace{0.15in}
\caption{Shown in (a) is the local cycle $\gamma$ with base $\mathbf{h}_1$. In (b) is the cycle $\delta$ is shown, which is the cycle $\gamma$ but with the base $\mathbf{h}_2$. The cycle $\mu$, shown in (c), is a nonlocal cycle since it is not the same as the trajectory $\nu$ shown in (d).}\label{fig:localcyc}
\end{figure}

To show that admissible configurations lead to self-avoiding motion, we will heavily rely on this notion of the local trajectories $\mathcal{L}_{\Omega}(C)$. In particular, we will need to partition these trajectories into three sets in a specific way that will allow us to inductively piece together a trajectory that is self-avoiding.

To do this, we first need to define the type of partition we will consider. A \emph{partition} of the local trajectories $\mathcal{L}_{\Omega}(C)$ into the three sets $P=\{P_1,P_2,P_3\}$ is a function $\chi:\mathcal{L}_{\Omega}(C)\rightarrow P$. For the partition $\chi$ we let $\mathcal{L}^k_{\Omega}(C)=\{\ell\in\mathcal{L}_{\Omega}(C):\chi(\ell)=P_k\}$ for $k=1,2,3$. This will allow us to define the notion of a triperfect partition, which will be our main tool for showing why admissible configurations lead to self-avoiding behavior.

\begin{definition}\label{def:tri}
Let $\chi:\mathcal{L}_{\Omega}(C)\rightarrow P$ be a partition of the local trajectories $\mathcal{L}_{\Omega}(C)$. If the partition has the following properties \ref{property:1}-\ref{property:4}, we call it a \emph{triperfect partition}.

\begin{property}\label{property:1}\textbf{Local Self-Avoiding Property}: Let $\ell=\{\mathbf{r}(t_1),\dots,\mathbf{r}(t_2)\}\in\mathcal{L}_{\Omega}(C)$. Then
$\mathbf{r}(t)\neq\mathbf{r}(\tau)$ for all $t_1\leq t<\tau\leq t_2$ if $\ell$ is a local crossing of $\Omega$.
\end{property}

\begin{property}\label{property:4}\textbf{Site Partition Property}: For each lattice site $\mathbf{h}\in\Omega$ there are exactly three local trajectories $\ell_1,\ell_2,\ell_3\in\mathcal{L}_{\Omega}(C)$ containing $\mathbf{h}$ where $\ell_k\in\mathcal{L}^k_{\Omega}(C)$ for each $k=1,2,3$.
\end{property}
\end{definition}

If the sets $\mathcal{L}^k_{\Omega}(C)$ for $k=1,2,3$ form a triperfect partition of $\mathcal{L}_{\Omega}(C)$ then we call these sets the \emph{local families} on $\Omega$ with the configuration $C$. If we do have a triperfect partition of  $\mathcal{L}_{\Omega}(C)$, property \ref{property:1} states that any local crossing on $\Omega$ is self-avoiding, and property \ref{property:4} that exactly one trajectory from each local family goes through each lattice site of $\Omega$.

Having defined the notion of a triperfect partition for a general subset $\Omega
\subset H$, we now consider the specific case in which $\Omega$ is a single hexagon. What we find it that it is always possible to create a triperfect partition of a hexagon's local trajectories, if the hexagon's configuration is admissible. This is stated in the following lemma.

\begin{lemma}\label{lem:tri}
There exists a triperfect partition of the local trajectories $\mathcal{L}_{\mathcal{H}}(C)$ for any admissible configuration $C$ and hexagon $\mathcal{H}\subset H$.
\end{lemma}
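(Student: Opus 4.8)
The plan is to reduce the lemma to a finite verification over the seven admissible configurations $(i)$--$(vii)$ of figure \ref{fig:8}, and for each to exhibit the required partition explicitly. First I would note that the $60^\circ$ rotations about a hexagon's center and the reflection that generate the symmetry classes of $\mathcal{A}$ act on a hexagon by permuting its six sites and their incident bonds, while carrying the equations of motion \eqref{eq:1}--\eqref{eq:3} to themselves (rotations commute with the operator $R$ and preserve scatterer chirality; a reflection is equivariant once left and right scatterers are interchanged). Hence such a symmetry maps $\mathcal{L}_{\mathcal{H}}(C)$ bijectively onto the set of local trajectories of a symmetric configuration, preserving both self-avoidance and the incidence of trajectories with sites. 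A triperfect partition is therefore transported by any of these symmetries, so it suffices to produce one for a single representative of each of the seven classes.

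For a fixed representative I would enumerate $\mathcal{L}_{\mathcal{H}}(C)$ directly. Each of the six sites carries one exterior bond, so the particle may enter along six directions; tracing \eqref{eq:1}--\eqref{eq:3} from each entry, applying the flip \eqref{eq:3} as the particle passes each site, yields the local crossings, and any internal loop yields a local cycle. In each case one checks that the entering particle exits after finitely many steps, so the list is finite. While enumerating, I would record that the sites visited by each crossing are distinct (the self-avoiding property \ref{property:1}) and tally, for every site, how many local trajectories pass through it. The point to confirm is that this tally is exactly three at each site: since the three families of a triperfect partition each meet every site once, the total incidence must equal $6\cdot 3=18$, and the enumeration is meant to verify this count configuration by configuration.

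The core of the argument is the construction of the partition itself (the site-partition property \ref{property:4} of definition \ref{def:tri}). Identifying each local trajectory with the set of sites it covers, I must split $\mathcal{L}_{\mathcal{H}}(C)$ into three families, each covering the six sites exactly once, i.e.\ into three disjoint exact covers. For the all-right configuration $(i)$ this is transparent: the six crossings each join two consecutive sites and form the edges of a $6$-cycle on the sites, while the single local cycle covers all six; one takes the local cycle as one family and the two perfect matchings of the $6$-cycle as the other two. For the less symmetric configurations the crossings may be longer and a local cycle may be absent, so I would display the three families by an explicit table rather than by this matching argument.

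The step I expect to be the main obstacle is exactly this last one: checking, for the asymmetric admissible configurations, that the enumerated trajectories do decompose into three exact covers, since there is then no a priori matching structure to invoke. The safeguard is that the incidence count of three per site, together with the explicit trajectory list, pins down the decomposition tightly enough to be exhibited by inspection; the real risk is an arithmetic error in tracing a long crossing or in bookkeeping the scatterer flips, which would corrupt the tally. As an independent consistency check on the enumeration I would compare each configuration's induced transitions against the graph $\Gamma$ of figure \ref{fig:12} before assembling the families.
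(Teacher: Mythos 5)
Your proposal is correct and follows essentially the same route as the paper: the paper's proof of lemma \ref{lem:tri} is precisely a finite verification that exhibits, for each of the seven admissible configurations (up to the hexagon's symmetries), an explicit assignment of the local trajectories to three families and checks properties \ref{property:1} and \ref{property:4} by inspection (figure \ref{fig:13}). Your added justification of the symmetry reduction and the matching-structure observation for configuration $(i)$ are harmless refinements of the same argument.
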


\begin{proof}
Suppose $C$ is an admissible configuration on $H$. By definition, the configuration $C|_{\mathcal{H}}$, which is the restriction of $C$ to the hexagon $\mathcal{H}$, is an element of $\mathcal{A}$. If $C|_{\mathcal{H}}=\alpha$, we will then write the set of local trajectories $\mathcal{L}_{\mathcal{H}}(C)$ as $\mathcal{L}_{\mathcal{H}}(\alpha)$.

To demonstrate that $\mathcal{L}_{\mathcal{H}}(\alpha)$ has a triperfect partition for each $\alpha\in\mathcal{A}$, consider the local trajectories shown in figure \ref{fig:13}. In this figure each local trajectory is given a specific line type to indicate which local family it belongs to: either \emph{dashed-dotted black}, \emph{dashed gray}, or \emph{solid light gray}. One can check that this scheme partitions each set of local trajectories $\mathcal{L}_{\mathcal{H}}(\alpha)$ into three distinct families $\mathcal{L}^k_{\mathcal{H}}(\alpha)$ for $k=1,2,3$ for which properties \ref{property:1}-\ref{property:4} hold.
\end{proof}

\begin{figure}
    \begin{overpic}[scale=.35]{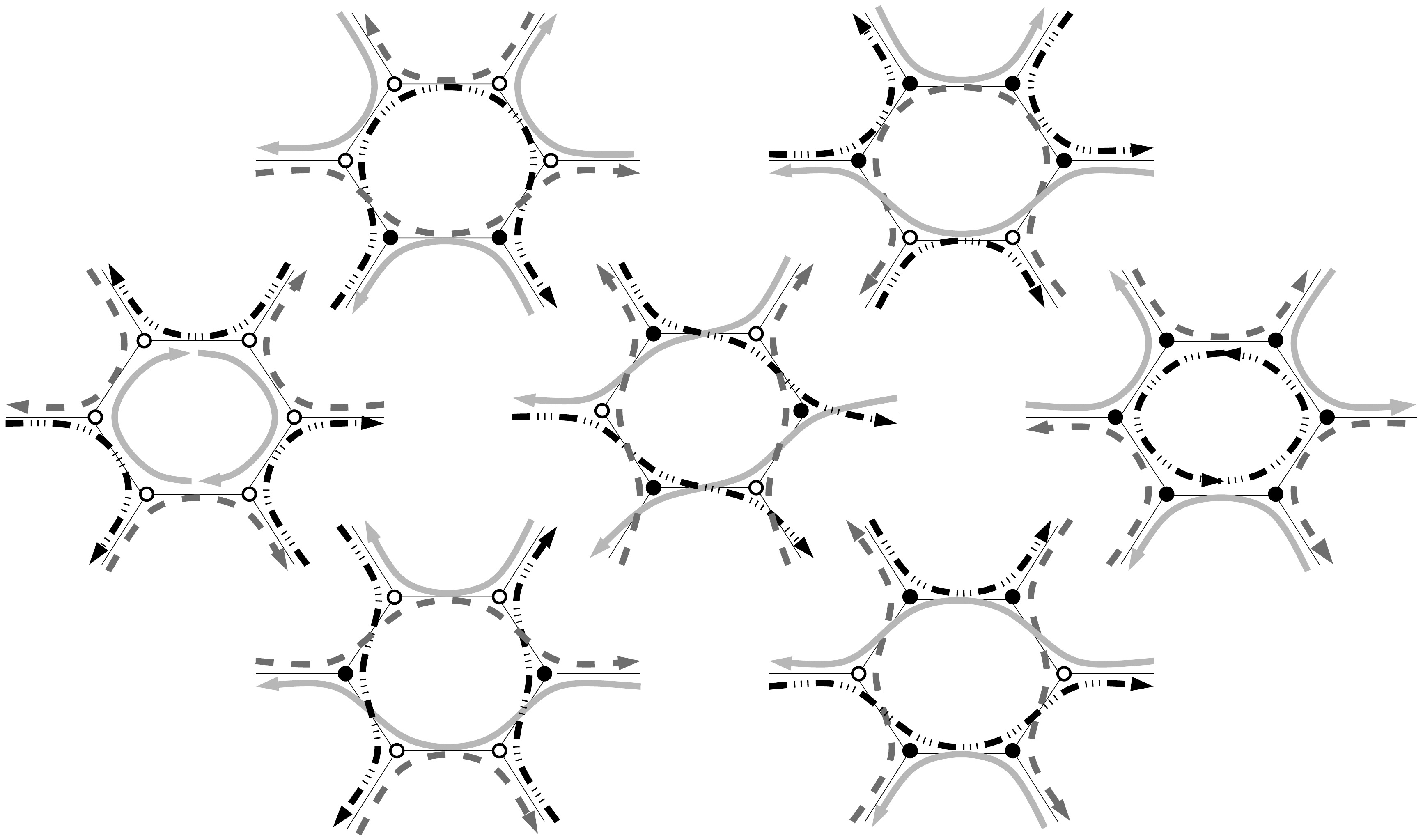}
    \put(10,19){$\mathcal{L}_{\mathcal{H}}(i)$}
    \put(27.25,37){$\mathcal{L}_{\mathcal{H}}(ii)$}
    \put(27,1){$\mathcal{L}_{\mathcal{H}}(iii)$}
    \put(45,19){$\mathcal{L}_{\mathcal{H}}(iv)$}
    \put(64,37){$\mathcal{L}_{\mathcal{H}}(v)$}
    \put(63,1){$\mathcal{L}_{\mathcal{H}}(vi)$}
    \put(81,19){$\mathcal{L}_{\mathcal{H}}(vii)$}
    \end{overpic}
\caption{A triperfect partition of the local trajectories $\mathcal{L}_{\mathcal{H}}(\alpha)$ on a hexagon $\mathcal{H}$ is shown for each configuration $\alpha\in\mathcal{A}$. Trajectories in the same family $\mathcal{L}^k_{\mathcal{H}}(\alpha)$ for $k=1,2,3$ are given the same line type: either \emph{dashed-dotted black}, \emph{dashed gray}, or \emph{solid light gray}, respectively.}\label{fig:13}
\end{figure}

Our strategy is to use lemma \ref{lem:tri} to create a triperfect partition of $\mathcal{L}_{\Omega}(C)$, where $\Omega$ consists of a number of hexagons. In fact, our goal is to show that the trajectories $\mathcal{L}_{H}(C)$ over the entire lattice $H$ have a triperfect partition if $C$ is admissible. To do this, we will need a way of creating a triperfect partition from two smaller partitions.

With this in mind, suppose $\chi:\mathcal{L}_{\Omega}(C)\rightarrow P$ partitions the local trajectories $\mathcal{L}_{\Omega}(C)$. Then, for any subset $\Psi\subseteq\Omega$, the map $\chi$ \emph{induces} a partition $\overline{\chi}:\mathcal{L}_{\Psi}(C)\rightarrow P$ given by $\overline{\chi}(\ell_*)=\chi(\ell)$ where $\ell_*=\ell|_{\Psi}$. We call the partition $\bar{\chi}$ the \emph{induced partition} of $\chi$ on the set of local trajectories $\mathcal{L}_\Psi(C)$. For $i=1,2$ suppose there are sets $\Omega_i\subset H$ and partitions $\chi_i:\mathcal{L}_{\Omega_i}(C)\rightarrow P$. Then the partitions $\chi_1$ and $\chi_2$ are said to \emph{agree} on the set $\Psi\subseteq\Omega_1\cup\Omega_2$ if $\overline{\chi}_1(\ell)=\overline{\chi}_2(\ell) \ \text{for all} \  \ell\in\mathcal{L}_{\Psi}(C)$, where $\overline{\chi}_i$ is the induced partition of $\chi_i$ on the set $\Psi$ for $i=1,2$.

One of the main ideas, in this section is that, if two different subsets of $H$ have a triperfect partition, these partitions can be merged into a single triperfect partition. This method, which we will use in the proof of lemma \ref{thm:noncross}, is demonstrated in the following example.

\begin{figure}
    \begin{overpic}[scale=.37]{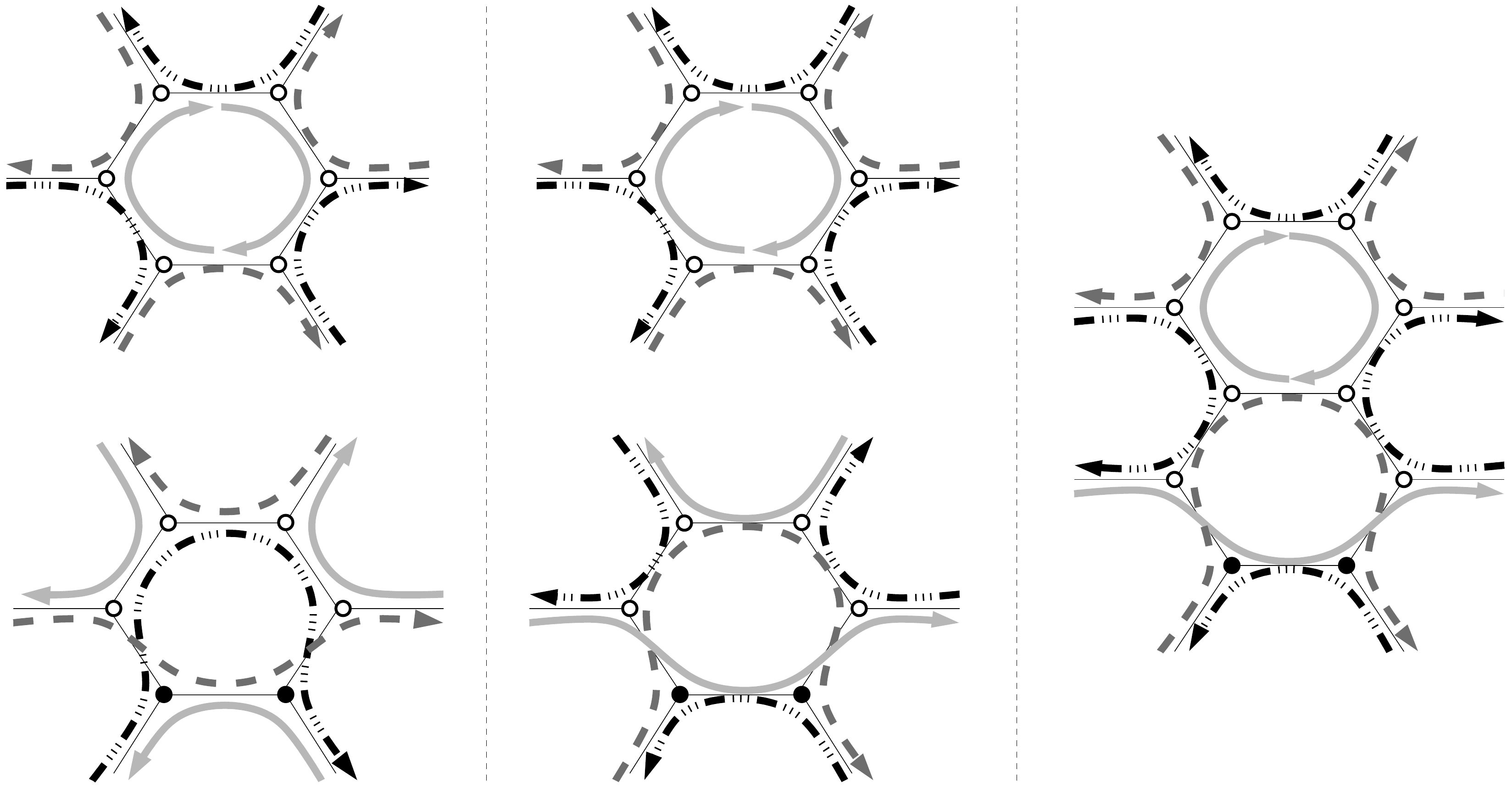}
    \put(5,26){(a) $\chi_1(\mathcal{L}_{\mathcal{H}_1}(i))$}
    \put(5.5,34){\small$\mathbf{h}_1$}
    \put(21,34){\small$\mathbf{h}_2$}

    \put(5,-2.5){(b) $\chi_0(\mathcal{L}_{\mathcal{H}_2}(ii))$}
    \put(6,17){\small$\mathbf{h}_1$}
    \put(21.5,17){\small$\mathbf{h}_2$}

    \put(40,26){(c) $\chi_1(\mathcal{L}_{\mathcal{H}_1}(i))$}
    \put(40.5,34){\small$\mathbf{h}_1$}
    \put(56,34){\small$\mathbf{h}_2$}

    \put(39,-2.5){(d) $\chi_2(\mathcal{L}_{\mathcal{H}_2}(ii))$}
    \put(40.5,17){\small$\mathbf{h}_1$}
    \put(56,17){\small$\mathbf{h}_2$}

    \put(72,5){(e) $\chi\big(\mathcal{L}_{\mathcal{H}_2\cup\mathcal{H}_2}(i,ii)\big)$}
    \put(76.5,25.5){\small$\mathbf{h}_1$}
    \put(92,25.5){\small$\mathbf{h}_2$}

    \put(13,39.5){$\mathcal{H}_1$}
    \put(13,11){$\mathcal{H}_2$}

    \put(47.5,39.5){$\mathcal{H}_1$}
    \put(47.5,11){$\mathcal{H}_2$}

    \put(83.5,31){$\mathcal{H}_1$}
    \put(83.5,19.5){$\mathcal{H}_2$}
    \end{overpic}
    \vspace{0.1in}
\caption{Initially, the local trajectories of $\mathcal{L}_{\mathcal{H}_1}(i)$ and $\mathcal{L}_{\mathcal{H}_2}(ii)$ are partitioned by $\chi_1$ and $\chi_0$, shown respectively in (a) and (b). The partition $\chi_1$ is then used to generate a partition $\chi_2$ on $\mathcal{L}_{\mathcal{H}_2}(ii)$ in (d). The two partitions are then merged into the single partition $\chi$ on $\mathcal{L}_{\mathcal{H}_1\cup\mathcal{H}_2}(i,ii)$ in (e) as described in example \ref{ex:0}.}\label{fig:14}
\end{figure}

\begin{example}\label{ex:0}
In figure \ref{fig:14} (a) we consider the hexagon $\mathcal{H}_1$ with the admissible configuration $(i)$ and triperfect partition $\chi_1:\mathcal{L}_{\mathcal{H}_1}(i)\rightarrow P$. In figure \ref{fig:14} (b) we furthermore consider the hexagon $\mathcal{H}_2$ with the admissible configuration $(ii)$ and triperfect partition $\chi_0:\mathcal{L}_{\mathcal{H}_2}(ii)\rightarrow P$. We note that these two partitions are the same as those given in figure \ref{fig:13} for the hexagon with configuration $(i)$ and $(ii)$, respectively. As in figure \ref{fig:13}, local trajectories with the same line type: either \emph{dashed-dotted black}, \emph{dashed gray}, or \emph{solid gray}, belong to the same local family.

As is shown in figure \ref{fig:14}, the two hexagons share the lattice sites $\mathbf{h}_1,\mathbf{h_2}\in\mathbb{H}$. However, on the set $\Psi=\mathcal{H}_1\cap\mathcal{H}_2$ the partitions $\chi_1$ and $\chi_0$ do not agree. The question is whether there is a triperfect partition of $\mathcal{L}_{\mathcal{H}_2}(ii)$ that does agree with $\chi_1$ on $\Psi$. As we will show, this is indeed the case. Our strategy is to use the partition $\chi_1$ to generate a new triperfect partition of $\mathcal{L}_{\mathcal{H}_2}(ii)$. To do so, we let $\ell_{ij}(\mathcal{H}_m)$ be the unique local trajectory on $\mathcal{H}_m$ that first passes through $\mathbf{h}_i$ then $\mathbf{h}_j$, for $i,j,m=1,2$ where $i\neq j$.

First, note that the local family on $\mathcal{H}_1$ that contains $\ell_{12}(\mathcal{H}_1)$ in figure \ref{fig:14} (a) has the line type \emph{dashed gray}, whereas the local family on $\mathcal{H}_2$ that contains $\ell_{12}(\mathcal{H}_2)$ in figure \ref{fig:14} (b) has the line type \emph{dashed-dotted black}. So that $\ell_{12}(\mathcal{H}_1)$ and $\ell_{12}(\mathcal{H}_2)$ have the same line type we change the local family on $\mathcal{H}_2$ in \ref{fig:14} (b) that contains $\ell_{12}(\mathcal{H}_2)$ so that it has the line type \emph{dashed gray}. Next, so that $\ell_{21}(\mathcal{H}_1)$ and $\ell_{21}(\mathcal{H}_2)$ have the same line type, we likewise change the local family on $\mathcal{H}_2$ that contains $\ell_{21}(\mathcal{H}_2)$ so that it has the line type \emph{solid gray}. We give the remaining local family on $\mathcal{H}_2$ the line type \emph{dashed gray}. The resulting partition $\chi_2:\mathcal{L}_{\mathcal{H}_2}(ii)\rightarrow P$ is shown in figure \ref{fig:14} (d).

As one can check the partitions shown in figures \ref{fig:14} (c) and (d) agree on $\Psi$ and can be merged into the single triperfect partition $\chi:\mathcal{L}_{\mathcal{H}_1\cup\mathcal{H}_2}(i,ii)\rightarrow P$ shown in figure \ref{fig:14} (e). Here, the notation $\mathcal{L}_{\mathcal{H}_1\cup\mathcal{H}_2}(i,ii)$ denotes the set of local trajectories on the union $\mathcal{H}_1\cup\mathcal{H}_2$ with the local configuration shown in figure \ref{fig:14} (e). This technique of combining local trajectories is used in the following proof of lemma \ref{thm:noncross} to show that the local self-avoiding property can be extended to the entire honeycomb lattice.
\end{example}

Before we prove the following lemma, we introduce the notion of a crossing. We say that the particle \emph{crosses} its trajectory at time $\tau>0$ if $\mathbf{r}(\tau)=\mathbf{r}(t)$ for some $t<\tau$. We now show that if a flipping rotator system has an admissible initial configuration then the particle's trajectory is a self-avoiding walk (non-crossing), up to the first time it returns to its initial position.

\begin{lemma}\label{thm:noncross}
Suppose the initial configuration $C$ is admissible. If $\tau>0$ is the first time the particle crosses its trajectory in the system $(H,I,C)$ then $\mathbf{r}(\tau)=\mathbf{r}(0)$. Moreover, $\gamma=\{\mathbf{r}(t):0\leq t\leq\tau\}$ is a local cycle.
\end{lemma}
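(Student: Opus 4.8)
The plan is to build a triperfect partition of the local trajectories over the entire honeycomb lattice $H$ by patching together the single-hexagon triperfect partitions guaranteed by lemma \ref{lem:tri}, using the merging technique of example \ref{ex:0}. Once I have a global triperfect partition $\chi:\mathcal{L}_H(C)\to P$, the key structural fact I want is that the particle's actual trajectory, read off from its initial state $I=(\mathbf{r},\mathbf{v})$, coincides with a concatenation of pieces of trajectories lying in the local families. The local self-avoiding property (property \ref{property:1}) then forbids the trajectory from crossing itself through any lattice site, \emph{except} possibly by closing up into a cycle, which is exactly the escape hatch the statement allows at $\mathbf{r}(0)$.

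First I would construct the global partition. Since $C$ is admissible, every hexagon $\mathcal{H}\subset H$ has $C|_{\mathcal{H}}\in\mathcal{A}$, so lemma \ref{lem:tri} gives a triperfect partition on each hexagon. I would order the hexagons (say, by a spiral enumeration starting from the hexagon containing the origin) and inductively merge: given a triperfect partition on the union $\Omega_n$ of the first $n$ hexagons, and the fixed single-hexagon partition on the next hexagon $\mathcal{H}_{n+1}$, I recolor the three local families of $\mathcal{H}_{n+1}$ so that the induced partitions agree on the shared edge $\Psi=\Omega_n\cap\mathcal{H}_{n+1}$, exactly as in example \ref{ex:0}. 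The main point to verify is that such a recoloring always exists and is consistent: since two adjacent hexagons meet in a single bond (two shared sites $\mathbf{h}_1,\mathbf{h}_2$), there is exactly one local trajectory of $\mathcal{H}_{n+1}$ running $\mathbf{h}_1\to\mathbf{h}_2$ and one running $\mathbf{h}_2\to\mathbf{h}_1$; matching their two colors to the already-fixed colors on $\Omega_n$ determines two of the three families, and the third is forced, after which property \ref{property:4} still holds on the union. Taking the direct limit over $n$ produces a triperfect partition of $\mathcal{L}_H(C)$.

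Next I would connect the partition to dynamics. By property \ref{property:4}, through the origin $\mathbf{r}(0)$ there pass exactly three local trajectories, one from each family; the one whose outgoing velocity at $\mathbf{r}(0)$ equals the initial velocity $\mathbf{v}$ is the trajectory the particle actually follows initially. Because the global partition was assembled so that adjacent hexagons agree on shared bonds, the particle exiting one hexagon enters the next along a local trajectory lying in the \emph{same} local family; hence the entire particle trajectory $\{\mathbf{r}(t)\}_{t\ge 0}$, up to its first self-intersection, is a sub-walk of a single local family. Now suppose $\tau>0$ is the first crossing time, so $\mathbf{r}(\tau)=\mathbf{r}(t)$ for some $t<\tau$. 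If $\mathbf{r}(\tau)\neq\mathbf{r}(0)$, then some interior lattice site is visited twice by one local-family walk; but each site carries exactly one trajectory per family (property \ref{property:4}) and each local crossing is self-avoiding (property \ref{property:1}), so two visits to an interior site would force the trajectory through that site to be a local \emph{crossing} that revisits a point — contradicting the local self-avoiding property, or else force it to be a local cycle based off the origin, which by the velocity-matching condition ($\mathbf{v}(t_1)=\mathbf{v}(t_2)$ in the definition of a local cycle) can only close at a point consistent with having entered it once. Either way $\mathbf{r}(\tau)=\mathbf{r}(0)$, and $\gamma=\{\mathbf{r}(t):0\le t\le\tau\}$ must be a local cycle because $\mathbf{v}(\tau)$ is forced to match $\mathbf{v}(0)$ in order for the trajectory to have returned to the origin without a prior crossing.

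The hard part will be the bookkeeping in step two: making rigorous the claim that ``the trajectory stays within one local family across hexagon boundaries.'' This requires checking that the merging recoloring in step one genuinely respects the dynamical transition rule — i.e., that the local family an outgoing trajectory belongs to is an invariant the particle carries as it crosses from hexagon to hexagon, rather than merely a bookkeeping label. I expect this to hinge on a careful induction tying the velocity-continuity at shared bonds (the particle leaves $\mathcal{H}_n$ with the velocity it enters $\mathcal{H}_{n+1}$) to the agreement of the induced partitions on $\Psi$, so that ``same velocity at a shared site'' forces ``same family.'' The self-avoiding and cycle conclusions then follow almost immediately from properties \ref{property:1} and \ref{property:4}, but that invariance is the load-bearing lemma and is where I would concentrate the argument.
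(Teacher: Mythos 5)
Your overall strategy is the same as the paper's: build a global triperfect partition of $\mathcal{L}_H(C)$ by inductively merging the single-hexagon partitions of lemma \ref{lem:tri}, then observe that the particle's trajectory up to its first crossing is contained in the unique local trajectory through $\mathbf{r}(0)$ and $\mathbf{r}(0)+\mathbf{v}(0)$, which must therefore be a local cycle. However, there is a genuine gap at the load-bearing step, and it is not the one you flag. When you merge the partition on a union $\Omega$ with that on an adjacent hexagon $\mathcal{H}$, you assert that after recoloring, property \ref{property:4} still holds on the union --- but you never verify property \ref{property:1}, i.e.\ that a local crossing of $\Omega\cup\mathcal{H}$ is itself self-avoiding. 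This does not follow formally from the two constituent partitions being triperfect: a local trajectory $\ell$ of the union decomposes into alternating pieces $\ell_1,\dots,\ell_n$ lying in $\mathcal{L}_\Omega(C)$ and $\mathcal{L}_{\mathcal{H}}(C)$, and while each piece is self-avoiding, two \emph{different} pieces $\ell_p,\ell_q$ could a priori share a site. Your appeal to ``each local crossing is self-avoiding (property \ref{property:1})'' at that point is circular, since the self-avoidance of crossings of the union is exactly what is being established. The paper closes this by a dynamical argument you do not have: if $\ell_p$ and $\ell_q$ ($p<q$) met at a site, property \ref{property:4} would force $\ell_p=\ell_q$, so the particle would arrive at the same site with the same velocity at times $T_p$ and $T_q$; but the scatterer at the next site has flipped in the meantime, $C_{\mathbf{r}(T_p+1)}(T_p)=-C_{\mathbf{r}(T_q+1)}(T_q)$, so the two continuations diverge and $\ell_p\neq\ell_q$ --- a contradiction. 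This use of the flipping rule is the essential input; without it the merging step is unproven.

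Two smaller points. First, your justification of the recoloring assumes the new hexagon meets the previously built union in a single bond (two sites); in a spiral or ring enumeration the intersection is in general a path of several sites (and several bonds), so you must also check that agreement forced on one shared bond propagates consistently to the others --- the paper does this by showing, via property \ref{property:4}, that agreement at a shared site implies agreement at the adjacent shared site. Second, the final step is cleaner than your case analysis suggests: once the global partition exists, property \ref{property:4} gives a unique local trajectory $\ell$ through $(\mathbf{r}(0),\mathbf{r}(0)+\mathbf{v}(0))$ containing the particle's path; on the infinite lattice $\ell$ cannot be a local crossing, so the existence of a first crossing time forces $\ell$ to be a local cycle, whose base is $\mathbf{r}(0)$ and whose closing condition $\mathbf{v}(t_1)=\mathbf{v}(t_2)$ is part of the definition rather than something to be deduced separately.
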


\begin{proof}
Let $\Omega\subset H$ be a finite union of hexagons, $C$ an admissible configuration, and let $\chi_1:\mathcal{L}_{\Omega}(C)\rightarrow P$ be a triperfect partition of $\mathcal{L}_{\Omega}(C)$. Additionally let, $\mathcal{H}$ be a single hexagon of $H$ where the lattice sites of $\Omega\cap\mathcal{H}=\{\mathbf{h}_1,\dots,\mathbf{h}_j\}$ have the property that $\mathbf{h}_i$ is adjacent to $\mathbf{h}_{i+1}$ for each $1\leq i<j$. Since $\mathcal{H}$ is a single hexagon and $C$ is admissible, lemma \ref{lem:tri} implies the set $\mathcal{L}_{\mathcal{H}}(C)$ has a triperfect partition, which we denote by $\chi_2:\mathcal{L}_{\mathcal{H}}(C)\rightarrow P$.

The claim is that the union $\Omega\cap\mathcal{H}$ has a triperfect partition. To verify this, note that properties \ref{property:1}-\ref{property:4} of $\chi_1$ and $\chi_2$ imply that there is a single local trajectory $\ell_{ab}(\Omega)\in\mathcal{L}_\Omega(C)$ and a single local trajectory $\ell_{ab}(\mathcal{H})\in\mathcal{L}_\Omega(\mathcal{H})$, both of which first passe through $\mathbf{h}_a$ and then through $\mathbf{h}_b$, if $\mathbf{h}_a,\mathbf{h}_b\in\Omega\cap\mathcal{H}$ are adjacent. Since $\mathbf{h}_1,\mathbf{h}_2\in\Omega\cap\mathcal{H}$ are adjacent, without loss in generality, we may assume that $\ell_{12}(\Omega)\in\mathcal{L}^1_{\Omega}(C)$, $\ell_{12}(\mathcal{H})\in\mathcal{L}^1_{\mathcal{H}}(C)$ and that $\ell_{21}(\Omega)\in\mathcal{L}^2_{\Omega}(C)$, $\ell_{21}(\mathcal{H})\in\mathcal{L}^2_{\mathcal{H}}(C)$.

Now suppose, $\Psi\subset H$ is the two lattice sites $\mathbf{h}_1,\mathbf{h}_2\in\mathbb{H}$ together with the bond between them. As property \ref{property:4} holds at the lattice sites $\mathbf{h}_1,\mathbf{h}_2\in\mathbb{H}$ for both partitions $\chi_1$ and $\chi_2$ then these partitions agree on $\Psi$. Furthermore, we observe that property \ref{property:4} implies that if $\chi_1$ and $\chi_2$ agree at the lattice site $\mathbf{h}_i\in\Omega\cap\mathcal{H}$ then $\chi_1$ and $\chi_2$ agree at the lattice site $\mathbf{h}_{i+1}\in\Omega\cap\mathcal{H}$ adjacent to $\mathbf{h}_i$. Hence, the partitions $\chi_1$ and $\chi_2$ agree on the intersection $\Omega\cap\mathcal{H}$.

Suppose then that $\ell=\{\mathbf{r}(t_1),\dots,\mathbf{r}(t_2)\}\in\mathcal{L}_{\Omega\cup\mathcal{H}}(C)$ and that $\ell\not\subset\Omega,\mathcal{H}$. Hence, there must be a time $\tau$ where $t_1\leq\tau\leq t_2$ at which the particle either exits or enters $\Omega$. If $\ell$ is a local cycle then, as $\ell$ is baseless, we may assume that $\tau=t_1$. Under this assumption, we let $\{T_i\}_{1<i\leq n}$ denote the set of times at which the particle either enters or exits the set $\Omega$ for $t_1<T_i<t_2$. We also order these times so that $T_{i}<T_{i+1}$ and set $T_1=t_1$ and $T_{n+1}=t_2$. Using the notation, \begin{equation}\label{def:dec}
\ell_i=\{\mathbf{r}(T_i),\dots,\mathbf{r}(T_{i+1})\} \ \text{for} \ 1\leq i\leq n
\end{equation}
it follows that $\ell_i\in\mathcal{L}_{\Omega}(C)$ if and only if $\ell_{i+1}\in\mathcal{L}_{\mathcal{H}}(C)$ for each $1\leq i<n$. If $\ell$ is a local crossing, it can similarly be decomposed into the union $\ell=\cup_{i=1}^n\ell_i$ where $\ell_i$ is given by equation \eqref{def:dec} and $\ell_i\in\mathcal{L}_{\Omega}(C)$ if and only if $\ell_{i+1}\in\mathcal{L}_{\mathcal{H}}(C)$ for each $1\leq i<n$. In the case that $\ell\subset\Omega,\mathcal{H}$ then $\ell$ is either a local trajectory of $\mathcal{L}_{\Omega}(C)$ or $\mathcal{L}_{\mathcal{H}}(C)$. Hence, each $\ell\in\mathcal{L}_{\Omega\cup\mathcal{H}}(C)$ can be written as the union $\ell=\cup_{i=1}^n\ell_i$ of an alternating sequence of local trajectories from $\mathcal{L}_{\Omega}(C)$ and $\mathcal{L}_{\mathcal{H}}(C)$.

For the local trajectories $\ell_i$, that make up $\ell=\cup_{i=1}^n\ell_i$, let the function
$$\chi^*(\ell_i)=
\begin{cases}
\chi_1(\ell_i) \ \text{if} \ \ell_i\in\mathcal{L}_{\Omega}(C)\\
\chi_2(\ell_i) \ \text{if} \ \ell_i\in\mathcal{L}_{\mathcal{H}}(C)
\end{cases}.$$
Suppose then that $\chi^*(\ell_m)=P_k$ for some $1\leq m\leq n$ and some $1\leq k\leq 3$. Since the partitions $\chi_1$ and $\chi_2$ agree on $\mathbf{r}(T_i)\in\Omega\cap\mathcal{H}$ for each $1<i\leq n$ then $\chi^*(\ell_i)=P_k$ for each $1\leq i\leq n$. That is, $\chi_1$ and $\chi_2$ assign each $\ell_i\subseteq\ell$ the same element of $P$.

To use this, let $\chi:\mathcal{L}_{\Omega\cup\mathcal{H}}(C)\rightarrow P$ be the partition given by $\chi(\ell)=P_k$ if $\chi^*(\ell_1)=P_k$ where $\ell\in\mathcal{L}_{\Omega\cup\mathcal{H}}(C)$ has the decomposition $\cup_{i=1}^n\ell_i$ and $\ell_i$ is given by equation \eqref{def:dec}. The partition $\chi$ is then well defined on any local crossing $\ell\in\mathcal{L}_{\Omega\cup\mathcal{H}}(C)$ since $\ell_1$ is uniquely defined in this case. If $\ell\in\mathcal{L}_{\Omega\cup\mathcal{H}}(C)$ is a local cycle, $\ell_1$ is not uniquely defined, but $\chi$ is still well defined on $\ell$ since $\chi^*(\ell_i)=\chi^*(\ell_1)$ for any $1\leq i\leq n$, no matter where the base of $\ell$ is chosen to be.

The claim then, is that $\chi$ is a triperfect partition of $\mathcal{L}_{\Omega\cup\mathcal{H}}(C)$. To verify that $\chi$ has property \ref{property:1}, suppose to the contrary that $\ell=\cup_{i=1}^n\ell_i\in\mathcal{L}^k_{\Omega\cup\mathcal{H}}(C)$ is a local crossing that is not self-avoiding. Since each $\ell_i\subset\ell$ cannot cross itself, there are two $\ell_p,\ell_q\subset\ell$ where $p<q$ and $\ell_p\cap\ell_q\neq\emptyset$.
Moreover, either $\ell_p,\ell_q\in\mathcal{L}_{\Omega}^k(C)$ or $\ell_p,\ell_q\in\mathcal{L}_{\mathcal{H}}^k(C)$. Given that property \ref{property:4} holds for both $\chi_1$ and $\chi_2$ then $\ell_p=\ell_q$. Additionally, since there is a first time $\tau>t_1$ that $\ell$ crosses itself we may assume that at this point in time $\mathbf{r}(\tau)\in\ell_q$, which implies that $\tau=T_q$. Assuming, without loss in generality, that $\ell_p,\ell_q\in\mathcal{L}_{\Omega}(C)$ then $\mathbf{r}(T_p)=\mathbf{r}(T_q)$ and $\mathbf{r}(T_p+1)=\mathbf{r}(T_q+1)$ since these must both be lattice sites of $\Omega$. Therefore, $\mathbf{r}(T_p)=\mathbf{r}(T_q)$, $\mathbf{v}(T_p)=\mathbf{v}(T_q)$, and $C_{\mathbf{r}(T_p+1)}(T_p)=-C_{\mathbf{r}(T_q+1)}(T_q)$ since $\tau=T_q$ is the first time $\ell$ crosses itself. However, this implies that $\mathbf{r}(T_p+2)\neq\mathbf{r}(T_q+2)$, using equations $\eqref{eq:1}-\eqref{eq:3}$, implying that $\ell_p\neq\ell_q$ since at least one of $\mathbf{r}(T_p+2)$ or $\mathbf{r}(T_q+2)$ is in $\Omega$. Thus, there is no first time $\tau>t_1$ at which $\ell$ crosses itself, so $\chi$ has property \ref{property:1}.

To verify that $\chi$ has property \ref{property:4} let $\ell_*$ be a local trajectory of either $\mathcal{L}_{\Omega}(C)$ or $\mathcal{L}_{\mathcal{H}}(C)$. Using properties \ref{property:1}-\ref{property:4} of $\chi_1$ and $\chi_2$, and the fact that $\Omega\cup\mathcal{H}$ is finite, one can show that $\ell_*\subseteq\ell$ where $\ell\in\mathcal{L}_{\Omega\cup\mathcal{H}}(C)$. Moreover, if $\chi^*(\ell_*)=P_k$ then $\chi(\ell)=P_k$. Since $\ell$ must be the unique local trajectory containing $\ell_*$ then the fact that both $\chi_1$ and $\chi_2$ have property \ref{property:4} implies that $\chi$ also has property \ref{property:4}. Hence, $\chi:\mathcal{L}_{\Omega\cup\mathcal{H}}(C)\rightarrow P$ is a triperfect partition.

To show there is a triperfect partition of $\mathcal{L}_{H}(C)$ we let $\Omega_1$ be any one of the hexagons containing the origin, which has a triperfect partition by lemma \ref{lem:tri}. We then let $\Omega_2$ be the ring of hexagons that share a lattice site with $\Omega_1$. By merging individual hexagons of $\Omega_2$ with $\Omega_1$ we can obtain a triperfect partition of $\Omega_1\cup\Omega_2$. Continuing in this manner, if $\Omega_m$ is the ring of hexagons surrounding $\cup_{i=1}^{m-1}\Omega_{i}$ then it follows, by induction, that there is a triperfect partition $\hat{\chi}:\mathcal{L}_H(C)\rightarrow P$ where $\cup_{i=1}^\infty\Omega_{i}=H$.

To use the partition $\hat{\chi}$, suppose $\tau>0$ is the first time the particle crosses its trajectory in the $(H,I,C)$ system, where $I=(\mathbf{r},\mathbf{v})$. Since the partition $\hat{\chi}$ has property \ref{property:4}, there is exactly one local trajectory $\ell\in\mathcal{L}_H(C)$ that first passes through $\mathbf{r}$ then $\mathbf{r}+\mathbf{v}$. Since the sequence of positions $\{\mathbf{r}(t):0\leq t\leq \tau\}\subseteq\ell$ then $\ell=\{\mathbf{r}(t):0\leq t\leq \tau\}$, which must be a local cycle of $\mathcal{L}_{H}(C)$.
\end{proof}

Lemma \ref{thm:noncross} states that a particle cannot cross its trajectory away from its initial position if the initial configuration is admissible. The natural question then is, what happens if the particle does return to its initial position. In the following theorem we prove that, at the time the particle returns to its initial position, the configuration of scatterers is again admissible. Therefore, this process repeats itself so that, between returns to its initial position, the particle's trajectory is always a self-avoiding walk, no matter how many times it returns. Additionally, if the particle has a last time at which it returns to its initial position its trajectory thereafter is a self-avoiding walk.

\begin{theorem}\label{lem:1}\textbf{(Self-Avoiding Motion)}
Let $(H,I,C)$ be a system with the admissible configuration $C$.\\
(a) If the particle has the sequence of return times $\{\tau_i\}_{i\geq 0}$ to its initial position $\mathbf{r}$ then $\gamma_i=\{\mathbf{r}(t):\tau_{i-1}\leq t\leq \tau_i\}$
is a cycle based at $\mathbf{r}$ for each $i\geq 1$.\\
(b) If $\tau$ is the last time $\mathbf{r}(\tau)=\mathbf{r}$ then $\{\mathbf{r}(t)\}_{t\geq \tau}$ is a sequence of distinct lattice sites.
\end{theorem}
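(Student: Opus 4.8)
The plan is to bootstrap Lemma~\ref{thm:noncross} from the first return to all returns by an induction on the return times, whose engine is the claim that admissibility is preserved each time the particle comes back to $\mathbf{r}$. Concretely, I would establish the invariant: for every $i\ge 0$, the configuration $C(\tau_i)$ is admissible and $\mathbf{v}(\tau_i)=\mathbf{v}$. For the base case, Lemma~\ref{thm:noncross} applied to the admissible $C=C(0)$ says that the first time the particle crosses its trajectory occurs at $\mathbf{r}$ and that $\gamma_1=\{\mathbf{r}(t):0\le t\le\tau_1\}$ is a \emph{local} cycle; hence $\tau_1$ is the first return and, by the definition of a local cycle, $\mathbf{v}(\tau_1)=\mathbf{v}$. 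This is part~(a) for $i=1$.

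For the inductive step I would restart the system at time $\tau_{i-1}$: since $\mathbf{r}(\tau_{i-1})=\mathbf{r}$ and $\mathbf{v}(\tau_{i-1})=\mathbf{v}$, the restarted system is exactly $(H,I,C(\tau_{i-1}))$ with $I=(\mathbf{r},\mathbf{v})$, and $C(\tau_{i-1})$ is admissible by hypothesis. Applying Lemma~\ref{thm:noncross} to this system shows that the next crossing, hence the next return $\tau_i$, again occurs at $\mathbf{r}$, that the intervening segment $\gamma_i$ is a local cycle based at $\mathbf{r}$, and that $\mathbf{v}(\tau_i)=\mathbf{v}$. Granting preservation of admissibility (below), this yields part~(a) in full.

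The heart of the argument, and the step I expect to be the main obstacle, is showing $C(\tau_i)$ is admissible, i.e.\ that one pass around a local cycle returns every hexagon to a configuration in $\mathcal{A}$. I would use two facts: (1) a local cycle is baseless and its positions are distinct up to the base, so its \emph{net} effect is to flip each scatterer on the cycle exactly once, independently of the basepoint; and (2) for a fixed hexagon $\mathcal{H}$ the restricted configuration changes only while the particle occupies a vertex of $\mathcal{H}$, and each maximal such visit is a local crossing of $\mathcal{H}$ which, by Theorem~\ref{thm:partition}, induces a transition that stays inside $\mathcal{A}$. Viewing $\gamma_i$ as a closed loop and choosing a basepoint on $\gamma_i$ that is \emph{not} a vertex of $\mathcal{H}$, the particle's interaction with $\mathcal{H}$ decomposes into complete local crossings, so $C|_{\mathcal{H}}\in\mathcal{A}$ is carried to $C(\tau_i)|_{\mathcal{H}}\in\mathcal{A}$; since the net flips are base-independent, the same holds for the original basepoint, in particular at the hexagons incident to $\mathbf{r}$.

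The one situation where this re-basing is unavailable is when $\gamma_i$ occupies \emph{every} vertex of $\mathcal{H}$, forcing $\gamma_i=\partial\mathcal{H}$, a cycle of length $6$. I would dispatch this minimal case by a direct check: the only configuration on $\mathcal{H}$ for which the particle can traverse the entire boundary and return with its initial velocity is the all-right (equivalently, by reflection, all-left) configuration $(i)\in\mathcal{A}$, and flipping all six scatterers carries all-right to all-left, which is again $(i)$ up to reflection and hence admissible. With admissibility preserved, part~(b) follows at once: if $\tau$ is the last return then $C(\tau)$ is admissible and $\mathbf{v}(\tau)=\mathbf{v}$, so Lemma~\ref{thm:noncross} applies to the restarted system $(H,I,C(\tau))$; were the tail $\{\mathbf{r}(t)\}_{t\ge\tau}$ to repeat a site, the first such crossing would occur at $\mathbf{r}$, contradicting that $\tau$ is the last return. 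Hence $\{\mathbf{r}(t)\}_{t\ge\tau}$ consists of distinct lattice sites.
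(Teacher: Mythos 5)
Your proposal is correct and follows essentially the same route as the paper's proof: apply Lemma \ref{thm:noncross} to get that each excursion is a local cycle, use the baselessness of local cycles to re-base away from any hexagon containing $\mathbf{r}$ so that Theorem \ref{thm:partition} shows admissibility is preserved at each return (with the length-$6$ cycle handled by direct inspection), and then induct on the return times, with part (b) following from admissibility at the last return. The only difference is presentational: you phrase the key invariant explicitly as preservation of $C(\tau_i)\in\mathcal{A}$ together with $\mathbf{v}(\tau_i)=\mathbf{v}$, whereas the paper restarts the system at $(\mathbf{r}(t_1),\mathbf{v}(t_1))$ and compares configurations at time $\tau_1$, but the underlying argument is identical.
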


\begin{proof}
In the system $(H,I,C)$ let $\tau_1>0$ be the particle's first return time to its initial position $\mathbf{r}$. Assuming $C$ is admissible then lemma \ref{thm:noncross} implies that for $0\leq t\leq \tau_1$ the particle moves on the local cycle
$\gamma_1=\{\mathbf{r}(t):0\leq t\leq \tau_1\}$ based at the particle's initial position $\mathbf{r}$. Additionally, theorem \ref{thm:partition} implies that every hexagon $\mathcal{H}\subset H$, which the particle has either (i) first entered then exited or (ii) not visited by time $\tau_1$, has a local admissible configuration at time $\tau_1$. Thus, the only hexagons that may not have an admissible configuration at time $\tau_1$ are those that contain the particle's initial position $\mathbf{r}$.

Observe that there are three hexagons in $\mathcal{H}_i\subset H$ for $i=1,2,3$ that contain $\mathbf{r}$. If the particle does not exit $\mathcal{H}_i$ for some $1\leq i\leq 3$ by the time $\tau_1$, then $\tau_1=6$ and one can check that $C(\tau_1)$ is admissible. Suppose then that $\mathbf{r}(t_1)\notin\mathcal{H}_i$ for some $1\leq i\leq 3$ at some time $0<t_1<\tau_1$. Under this assumption, let $(H,J,C)$ be the system with the initial condition $J=(\mathbf{r}(t_1),\mathbf{v}(t_1))$ and position given by $\mathbf{s}(t)$. Then $\gamma_1$ is equal to the cycle $\{\mathbf{s}(t):0\leq t\leq\tau_1\}$ as a set of positions since $\gamma_1$ is a local cycle and therefore baseless. Moreover, the configuration in the systems $(H,I,C)$ and $(H,J,C)$ are the same at time $\tau_1$ since the particle has moved over the same set of lattice sites up to this point in time. However, in the $(H,J,C)$ system the particle has first entered then exited the hexagon $\mathcal{H}_i$ so that this hexagon has an admissible configuration at time $\tau_1$, according to theorem \ref{thm:partition}. Therefore, each hexagon of $H$ in the $(H,I,C)$ system has an admissible configuration at time $\tau_1$.

Continuing inductively, repeated use of lemma \ref{thm:noncross} implies that  if $\{\tau_i\}_{i\geq 0}$ is a sequence of return times in the $(H,I,C)$ system, then each sequence of positions $\gamma_i=\{\mathbf{r}(t):\tau_{i-1}\leq t\leq\tau_i\}$ is a local cycle. Moreover, if $\tau$ is the last time $\mathbf{r}(\tau)=\mathbf{r}$ then, as $C(\tau)$ is admissible, lemma \ref{thm:noncross} implies that $\{\mathbf{r}(t)\}_{t\geq \tau}$ is a sequence of distinct lattice sites.
\end{proof}

Before finishing this section, we note that an admissible configuration is defined locally in the sense that a configuration $C$ is admissible if its restriction to each hexagon of $H$ is admissible. In contrast, the self-avoiding behavior described in theorem \ref{lem:1} is not restricted to a particular hexagon of the lattice. In fact, in a system with an admissible configuration, the only way the particle can cross its trajectory is by returning to its initial position, no matter how far the particle moves from this lattice site. That is, although the notion of an admissible configuration is defined locally it has a nonlocal effect on the particle's trajectory.

We also note that the self-avoiding motion, which we find in a system $(H,I,C)$ with an admissible configuration $C$, is fundamentally different from using a probabilistic rule to generate a self-avoiding walk. If one uses a probabilistic rule to create a self-avoiding walk then there is always a need to modify this rule whenever the particle's trajectory threatens to cross itself. In contrast, the particle's motion in the $(H,I,C)$ system is completely deterministic, so that no provisional rules are needed to keep it from crossing its trajectory, at least away from its initial position.

Related to this, one can also consider placing probabilistic scatters on the honeycomb lattice, which rotate the particle to the right with some probability $p$ and to the left with probability $1-p$. If a few of these scatterers are placed near the origin in the $(H;\mathbf{R})$ model, what we observe numerically is that the particle will eventually cross its trajectory away from the origin. That is, even a small amount of randomness in the $(H;\mathbf{R})$ can destroy the particle's self-avoiding behavior.

\section{Blocking Configurations}\label{sec:7}
In this section we introduce the notion of a blocking configuration of scatterers. What we show is that if the configuration $C$ is both an admissible and a blocking configuration, then the particle in the $(H,I,C)$ system will return to its initial position an infinite number of times. This will allow us to prove, as stated in theorem \ref{thm:1}, that there are an infinite number of times $\{\tau_i\}_{i\geq 0}$ at which the particle returns to its initial position in the $(H;\mathbf{R})$ model.

\begin{definition}\label{def:blocking}
Let $C$ be a configuration of scatterers on $H$ and $\tau_b<\infty$ a finite time. Suppose that for any initial condition $I$, there is a time $t_I\leq\tau_b$ where $\mathbf{r}(t_I)=\mathbf{r}$ in the $(H,I,C)$ system. Then $C$ is called a \emph{blocking configuration} and $\tau_b$ a \emph{blocking time} of $C$.
\end{definition}

Essentially, the configuration $C$ is a blocking configuration if, no matter where the particle starts on the honeycomb lattice $H$, the particle always returns to this initial position within $\tau_b$ time steps. As an example, the initial configuration of all right scatterers is a blocking configuration. The reason for this is that for any initial condition, i.e. initial position and velocity, on the honeycomb lattice the particle will return to its initial position after six time steps. Hence, the time $\tau_b=6$ is a blocking time for the configuration of all right scatterers.

The reason we are interested in blocking configurations is that they have the following property. In a system $(H,I,C)$ with a blocking configuration $C$, the particle can only visit $\tau_b-1$ new lattice sites before having to return to a site it has previously visited. Hence, the particle's progress through the lattice is temporarily ``blocked" by this configuration of scatterers. Blocking configurations do not indefinitely stop the progress of a particle through the lattice. The reason is that, as time increases, the region the particle has visited also increases so that the particle gets blocked further and further away from its initial position.

With this in mind, an important concept in this context is the collection of lattice sites a particle has visited by time $t\geq0$, and those it has not. For a system $(H,I,C)$ we let $V(t)$ denote the set of lattice sites that have been visited by the particle by time $t\geq0$, so that $V(t)=\{\mathbf{r}(i): 0\leq i\leq t\}$.
Similarly, we let $U(t)$ denote the collection of lattice sites the particle has not visited by time $t$, so that $U(t)=\mathbb{H}-V(t)$.

The following theorem states that, if $C$ is both an admissible and a blocking configuration then the particle in the system $(H,I,C)$ will have an infinite number of times at which it returns to its initial position. Moreover, between these returns the particle's trajectory will be a self-avoiding walk.

\begin{theorem}\label{thm:recurrence}\textbf{(Recurrence Condition)}
Suppose $C$ is an admissible blocking configuration. Then, for any initial condition $I=(\mathbf{r},\mathbf{v})$, the $(H,I,C)$ system has\\
(a) an infinite sequence of times $\{\tau_i\}_{i\geq0}$ at which $\mathbf{r}(\tau_i)=\mathbf{r}$; and\\
(b) for each $i\geq 1$, the particle moves on the cycle $\gamma_i=\big\{\mathbf{r}(t):\tau_{i-1}\leq t\leq\tau_{i}\big\}$.
\end{theorem}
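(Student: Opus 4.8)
The plan is to deduce part (b) from part (a) at once: once the infinite sequence of return times $\{\tau_i\}_{i\ge 0}$ is in hand, Theorem \ref{lem:1}(a), which applies because $C$ is admissible, guarantees that each segment $\gamma_i=\{\mathbf{r}(t):\tau_{i-1}\le t\le\tau_i\}$ is a cycle based at $\mathbf{r}$. So the whole content lies in establishing (a).

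For (a) I would argue by contradiction. Let $R=\{t\ge 0:\mathbf{r}(t)=\mathbf{r}\}$; since $\mathbf{r}(0)=\mathbf{r}$ we have $0\in R$, so $R\neq\emptyset$. If $R$ were finite, set $\tau=\max R$, the last return. By Theorem \ref{lem:1}(b) the sites $\{\mathbf{r}(t)\}_{t\ge\tau}$ are then pairwise distinct, and the aim is to contradict the blocking property of $C$.

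The engine is a comparison with a \emph{fresh} auxiliary system. Let $J=(\mathbf{r},\mathbf{v}(\tau))$ and consider $(H,J,C)$, whose position I denote $\mathbf{s}(t)$. Because $C$ is blocking with blocking time $\tau_b$ (Definition \ref{def:blocking}), there is some $\sigma\le\tau_b$ with $\mathbf{s}(\sigma)=\mathbf{r}$. I would then claim that $\mathbf{r}(\tau+m)=\mathbf{s}(m)$ for $0\le m\le\sigma$; granting this, $\mathbf{r}(\tau+\sigma)=\mathbf{s}(\sigma)=\mathbf{r}$ is a return at a time $\tau+\sigma>\tau$, contradicting the maximality of $\tau$ and forcing $R$ to be infinite. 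The identification of the two trajectories is an induction on $m$ using equations \eqref{eq:1}--\eqref{eq:3}: the two systems start from the same state, and they stay synchronised as long as, at each step, the orientation of the scatterer being struck is the same in both. In the auxiliary system this orientation is the original $C$, since $(H,J,C)$ is self-avoiding up to its first return $\sigma$ by Lemma \ref{thm:noncross} and hence strikes each site for the first time. The claim therefore reduces to showing that the actual particle, during the steps $\tau+1,\dots,\tau+\sigma$, also strikes each scatterer in its original orientation $C$, i.e. that every site $\mathbf{r}(\tau+m)$ with $1\le m\le\sigma$ is being visited for the first time.

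The hard part is precisely this last point. Distinctness among $\{\mathbf{r}(t)\}_{t\ge\tau}$ already excludes a tail site coinciding with another tail site, and a coincidence with $\mathbf{r}$ itself is excluded because $\tau$ is the last return; the remaining danger is that the tail revisits a site that was visited \emph{strictly before} $\tau$ on an earlier cycle, where the orientation may have been flipped away from $C$. To rule this out I would promote the local self-avoidance of Lemma \ref{thm:noncross} to a statement about the entire past: taking the first time $t_0>\tau$ at which $\mathbf{r}(t_0)=\mathbf{r}(s)$ for some $s<\tau$ with $\mathbf{r}(s)\ne\mathbf{r}$, I would rerun the no-crossing computation used inside the proof of Lemma \ref{thm:noncross}. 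There a would-be first self-intersection at a non-origin site is excluded by comparing the incoming velocities and observing that the scatterer at the shared site has been flipped an odd number of times between the two visits, so the two outgoing velocities, and hence the two continuations, must differ, contradicting that the visits coincide. Carried out here with careful bookkeeping of the parity of visits to the shared site, which controls its orientation through equation \eqref{eq:3}, together with the global triperfect partition built in Lemma \ref{thm:noncross}, this shows no such $t_0$ exists. Consequently every tail site is fresh, the synchronisation claim holds, blocking delivers the return at $\tau+\sigma$, and the contradiction is complete; enumerating $R$ in increasing order then produces the sequence $\{\tau_i\}_{i\ge0}$ required in part (a).
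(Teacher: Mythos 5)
Your reduction of (b) to (a) via theorem \ref{lem:1}, and your contradiction setup (a last return time $\tau$, after which theorem \ref{lem:1}(b) makes the trajectory injective), match the paper. The gap is exactly in the step you flag as ``the hard part'': you need every tail site $\mathbf{r}(\tau+m)$, $1\le m\le\sigma$, to carry its \emph{original} orientation $C_{\mathbf{h}}$ when struck, and you propose to get this by upgrading lemma \ref{thm:noncross} to the statement that the tail never revisits any site visited before $\tau$. That statement is false, and no parity bookkeeping inside the no-crossing computation will rescue it: lemma \ref{thm:noncross} and theorem \ref{lem:1} give self-avoidance of the post-$\tau$ segment only \emph{relative to itself} (the triperfect partition is built for the configuration $C(\tau)$ and says nothing about the pre-$\tau$ trajectory), and a site visited an even number of times in the past has its original orientation restored, so a revisit there produces no velocity mismatch to contradict. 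Distinct cycles through $\mathbf{r}$ genuinely share non-base sites --- in the $(H,\mathbf{R})$ system every cycle passes through $\mathbf{r}+\mathbf{v}=(1,0)$ --- so the very first step after a return already lands on a previously visited scatterer, and at a return time the configuration near $\mathbf{r}$ is generally not $C$ (at $\tau_1=6$ all six scatterers of the first hexagon have been flipped once). Hence your synchronisation of the tail with the fresh system $(H,(\mathbf{r},\mathbf{v}(\tau)),C)$ can fail already at $m=1$.

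The repair is to anchor the fresh-system comparison not at the origin but at the moment the particle leaves the visited region, which is what the paper does. With $V(\tau_k)=\{\mathbf{r}(i):0\le i\le\tau_k\}$ finite and the post-$\tau_k$ trajectory injective, the particle must repeatedly exit $V(\tau_k)$; any site outside $V(\tau_k)$ reached after $\tau_k$ is being visited for the \emph{first time ever} (injectivity after $\tau_k$ plus the definition of $V(\tau_k)$), so on each excursion every scatterer met is in its original orientation and the excursion tracks a genuine $(H,J,C)$ system. The blocking property of definition \ref{def:blocking} then forces re-entry into $V(\tau_k)$ within $\tau_b$ steps, yielding infinitely many distinct visits to the finite set $V(\tau_k)$ --- the desired contradiction. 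Your auxiliary-system idea is the right engine; it just has to be applied where the hypothesis that all scatterers are as in $C$ is actually available.
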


\begin{proof}
Let $C$ be an admissible blocking configuration with blocking time $\tau_b<\infty$. By definition $\mathbf{r}(\tau_0)=\mathbf{r}$ for $\tau_0=0$ in the system $(H,I,C)$. Continuing by induction, suppose that $\tau_k$ is the particle's $k$-th return time to its initial position. By way of contradiction, we then suppose that $\tau_k$ is the last time the particle returns to its initial position. As $C$ is admissible, theorem \ref{lem:1} implies that $\{\mathbf{r}(t)\}_{t\geq \tau_k}$ is a distinct sequence of lattice sites.

Since the collection of sites $V(\tau_k)$ consists of a finite number of lattice sites, this implies that there is a first time $t_1>\tau_k$, such that $\mathbf{r}(t_1)\in V(\tau_k)$ and $\mathbf{r}(t_1+1)\in U(\tau_k)$. That is, there is a first time $t_1$ the particle exits $V(\tau_k)$. Furthermore, since $C$ is a blocking configuration, there is a first time $s_1>t_1$ that the particle enters $V(\tau_k)$ where $s_1-t_1\leq \tau_b$.

Once the particle has returned to $V(\tau_k)$, the fact that $\{\mathbf{r}(t)\}_{t\geq \tau_k}$ is a distinct sequence of lattice sites and $V(\tau_k)$ is finite, implies that there is a second time $t_2\geq s_1$ that the particle exits $V(\tau_k)$. Again, since $C$ is a blocking configuration and $\{\mathbf{r}(t)\}_{t\geq \tau_k}$, a distinct sequence of lattice sites, there is a second time $s_2>t_2$ at which the particle enters $V(\tau_k)$. Continuing in this manner, there is an infinite sequence of times $\{t_i\}_{i\geq 1}$ at which the particle exits $V(\tau_k)$. But this is not possible since $\{\mathbf{r}(t_i)\}_{i\geq 1}$ is an infinite set of lattice sites contained in the $V(\tau_k)$. Hence, there is a time $\tau_{k+1}>\tau_k$ at which $\mathbf{r}(\tau_{k+1})=\mathbf{r}$.

By induction it then follows that there is an infinite sequence of times $\{\tau_i\}_{i\geq 0}$ at which $\mathbf{r}(\tau_i)=\mathbf{r}$. Part (b) of theorem \ref{thm:recurrence} follows from theorem \ref{lem:1} since the configuration $C$ is assumed to be admissible.
\end{proof}

With theorem \ref{thm:recurrence} in place we are now in a position to give a proof of theorem \ref{thm:1}. The reason we are able to give a proof here, is that the initial configuration of all right scatterers in the model $(H;\mathbf{R})$ is both an admissible and a blocking configuration. Hence, theorem \ref{thm:recurrence} immediately implies part (a) of theorem \ref{thm:1}. In order to prove part (b), however, we will need to use specific properties of this model's initial configuration. For convenience, in the proof we let $\underline{\mathbf{x}}$ be the reflection of $\mathbf{x}\in\mathbb{R}^2$ over the $x$-axis and $\bar{\mathbf{x}}$ the reflection of $\mathbf{x}$ over the line $x=1/2$, respectively. A proof of theorem \ref{thm:1} is the following.

\begin{proof}
To begin, we note that part (a) of theorem \ref{thm:1} is a direct consequence of theorem \ref{thm:recurrence}. To prove part (b) we let $C$ denote the configuration of all right scatterers and let $\tau_i$ be the particle's $i$-th return time to the origin. Since $C$ is admissible, the proof of theorem \ref{lem:1} implies that each cycle $$\gamma_i=\{\mathbf{r}(t):\tau_{i-1}\leq t\leq\tau_i\}$$
is a local cycle. Given that $\mathbf{v}(\tau_0)=(1,0)$ at time $\tau_0=0$, the fact that each $\gamma_i$ is a local cycle implies that $\mathbf{v}(\tau_i)=(1,0)$ for each $i\geq 0$.

Let $k\geq 0$ be a fixed integer. Given that $\mathbf{r}(\tau_k)=(0,0)$ and $\mathbf{v}(\tau_k)=(1,0)$, then between time $t=\tau_k$ and time $t=\tau_k+1$ the particle crosses the line $x=1/2$. In order for the particle to return to the origin there must be some first time $\tau_k+\tau$ such that between time $\tau_k+\tau$ and $\tau_k+\tau+1$ the particle again cross the line $x=1/2$. The claim is that, if the configuration $C(\tau_k)$ is symmetric with respect to the line $x=1/2$ then
\begin{equation}\label{eq:10}
\mathbf{r}(\tau_k+\tau+t)=\overline{\mathbf{r}}(\tau_k+\tau-t+1) \ \text{for} \ 1\leq t\leq\tau.
\end{equation}
To verify that \eqref{eq:10} holds under this assumption, we let
$\mathbf{s}(t)=\mathbf{r}(\tau_k+\tau+t)$, $\mathbf{w}(t)=\mathbf{v}(\tau_k+\tau+t)$, and $S(t)=C(\tau_k+\tau+t)$. Note that at time $t=1$ we have $\mathbf{s}(1)=\bar{\mathbf{s}}(0)$ since the particle crosses the line $x=1/2$ between time $\tau_k+\tau$ and $\tau_k+\tau+1$.

Proceeding by induction, we assume that
\begin{equation}\label{eq:assump}
\mathbf{s}(T-1)=\bar{\mathbf{s}}(-T+2) \ \text{and} \ \mathbf{s}(T)=\bar{\mathbf{s}}(-T+1)
\end{equation}
for some fixed time $T$, where $1\leq T<\tau$. Then $\mathbf{w}(T-1)=-\underline{\mathbf{w}}(-T+1)$. Using the equations of motion \eqref{eq:1}-\eqref{eq:3} it follows that
\begin{gather}
\mathbf{w}(T)=R\big[S_{\mathbf{s}(T)}(T-1)\big]\mathbf{w}(T-1) \ \text{and}\\
\mathbf{w}(-T)=R\big[-S_{\mathbf{s}(-T+1)}(-T)\big]\mathbf{w}(-T+1).
\end{gather}
To simplify our notation, let $K\in\mathbb{R}^{2\times 2}$ denote the matrix that reflects each $\mathbf{x}\in\mathbb{R}^2$ over the $x$-axis. Then for each $z\in\{-1,1\}$ and $\mathbf{x}\in\mathbb{R}^2$ the transformation
\begin{equation}
R(-z)\underline{\mathbf{x}}=R(-z)K\mathbf{x}=
\left[\begin{array}{cc}
-\cos(\frac{\pi}{3}z)&\sin(\frac{\pi}{3}z)\\
\sin(\frac{\pi}{3}z)&\cos(\frac{\pi}{3}z)
\end{array}\right]\mathbf{x}=
KR(z)\mathbf{x}=\underline{R(z)\mathbf{x}}.
\end{equation}
Additionally, under the assumption that $S=C(\tau_k)$ is symmetric with respect to the line $x=1/2$ and is admissible, then the self-avoiding property implies that $S_{\mathbf{s}(T)}(T-1)=S_{\mathbf{s}(-T+1)}(-T)$. Hence,
\begin{align}
\underline{\mathbf{w}}(T)&=\underline{R[S_{\mathbf{s}(T)}(T-1)]\mathbf{w}(T-1)}= R[-S_{\mathbf{s}(-T+1)}(-T)]\underline{\mathbf{w}}(T-1)\\
&=-R[-S_{\mathbf{s}(-T+1)}(-T)]\mathbf{w}(-T+1)=\mathbf{w}(-T).
\end{align}
Using the fact that $\overline{\mathbf{x}+\mathbf{y}}=\overline{\mathbf{x}}-\underline{\mathbf{y}}$ for $\mathbf{x},\mathbf{y}\in\mathbf{R}^{2}$ it follows that
\begin{align}
\bar{\mathbf{s}}(T+1)&=\overline{\mathbf{s}(T)+\mathbf{w}(T)}= \bar{\mathbf{s}}(T)-\underline{\mathbf{w}}(T)\\
&=\bar{\mathbf{s}}(T)-\mathbf{w}(-T)=\mathbf{s}(-T+1)-\mathbf{w}(-T)=\mathbf{s}(-T).
\end{align}
Therefore, the assumption in equation \eqref{eq:assump} implies $\mathbf{s}(T+1)=\bar{\mathbf{s}}(-T)$. As equation \eqref{eq:assump} holds for $T=1$ then, by induction, it follows that $\mathbf{s}(t-1)=\bar{\mathbf{s}}(-t+2)$ for all $1\leq t\leq \tau$. Since $\mathbf{r}(\tau_k+2\tau)=\mathbf{s}(\tau)=\bar{\mathbf{s}}(-\tau+1)= \bar{\mathbf{r}}(\tau_k+1)= \mathbf{r}(\tau_k)$
then
\begin{equation*}
\gamma_k=\{\mathbf{r}(t):\tau_k\leq t\leq\tau_k+2\tau\}
\end{equation*}
is a cycle based at the origin, which is symmetric with respect to the line $x=1/2$. Therefore, from time $t=\tau_k$ to time $t=\tau_{k+1}=\tau_k+2\tau$ the particle has flipped $2\tau$ scatterers, whose positions as a set are symmetric with respect to the line $x=1/2$.

Under the assumption that $C(\tau_k)$ is symmetric with respect to this line, this implies that the configuration of scatterers $C(\tau_{k+1})$ at the particle's $k+1$-st return to the origin is also symmetric with respect to the line $x=1/2$. Moreover, since the initial configuration $C(\tau_0)=C$ is symmetric with respect to $x=1/2$ at time $\tau_0=0$, it then follows by induction that $C(\tau_i)$ is also symmetric with respect to the line $x=1/2$ for all $i\geq 0$. Therefore, for $i>0$ each
\begin{equation*}
\gamma_i=\{\mathbf{r}(t):\tau_{i-1}\leq t\leq\tau_i\}
\end{equation*}
is a cycle based at the origin, which is symmetric with respect to the line $x=1/2$. This completes the proof.
\end{proof}

\begin{figure}
    \begin{overpic}[scale=.2]{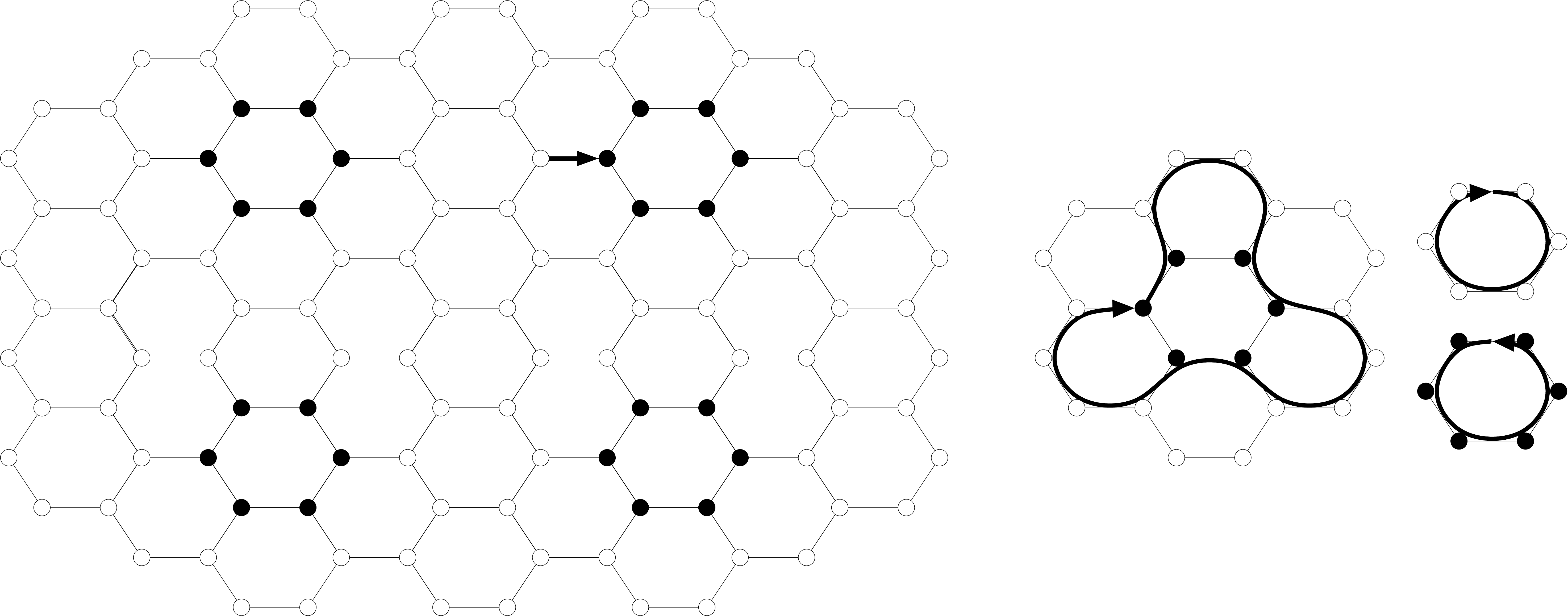}
    \put(76,13){$\Gamma_1$}
    \put(94,8){$\Gamma_3$}
    \put(94,28.5){$\Gamma_2$}
    \put(24.5,-3.5){(a) $A$}
    \put(31.5,28.5){$\mathbf{r}$}
    \put(35.5,30){$\mathbf{v}$}
    \put(85,5){(b)}
    \end{overpic}
    \vspace{0.1in}
\caption{The admissible blocking configuration $A$ considered in example \ref{ex:3} is shown in (a). The configuration's possible first return cycles $\Gamma_1$, $\Gamma_2$, and $\Gamma_3$, up to rotation and translation, are shown in (b).}\label{fig:15}
\end{figure}

If $C$ is an admissible blocking configuration, theorem \ref{thm:recurrence} only guarantees that the particle in the $(H,I,C)$ system moves on a sequence of cycles. It does not guarantee that these cycles are symmetric with respect to the line $x=1/2$, as is the case in the $(H;\mathbf{R})$ model. To illustrate this fact we consider the following example, in which we study an admissible blocking configuration that is our first example of a configuration that does not consist of all right scatterers.

\begin{example}\label{ex:3}
Consider the configuration $A$ shown in figure \ref{fig:15} (a), which is the configuration of all right scatterers with periodically spaced hexagons of left scatterers. This configuration is admissible since each hexagon on the lattice has either the configuration $(i)$,$(ii)$, or $(vii)$, each of which is an element of $\mathcal{A}$ (see figure \ref{fig:8}).

The claim is that $A$ is also a blocking configuration To check whether this is the case we need to show that, for each initial condition $I=(\mathbf{r},\mathbf{v})$, the particle has a first return time to its initial position $\mathbf{r}$ that is bounded by a fixed number $\tau_b<\infty$. To verify this we observe, for a given initial condition $I$, that up to translation and rotation, the particle moves along one of three possible cycles $\Gamma_1$, $\Gamma_2$, and $\Gamma_3$ as shown in figure \ref{fig:15} (b). Since these cycles take between 8 and 16 time steps, this implies that $A$ is a blocking configuration with blocking time $\tau_b=16$.

Since $A$ is an admissible as well as a blocking configuration, it follows from theorem \ref{thm:recurrence} that the particle in the $(H;I,A)$ model has an infinite sequence of returns $\{\tau_i\}_{i\geq 0}$ to its initial position, where the $i$-th return is along the cycle $$\gamma_i=\{\mathbf{r}(t):\tau_{i-1}\leq t\leq \tau_i\}.$$ For the initial condition $I$ indicated by the arrow in figure \ref{fig:15}, a few of the particle's cycles are shown in figure \ref{fig:16}. We note that, unlike the cycles in the $(H;\mathbf{R})$ model, these cycles are not symmetric with respect to any line with the exception of $\gamma_3$. That is, the initial configuration of all right scatterers is a special example of an admissible blocking condition, which causes the particle to move along symmetric cycles. In general, an admissible blocking configuration does not lead to any symmetry in the particle's trajectory.

Despite this, there are a number of notable similarities between the motion of the particle in the $(H;\mathbf{R})$ and $(H;I,A)$ models. Besides the fact that both particles travel along an infinite sequence of cycles, based at their respective initial positions, both particles have a qualitatively similar time-averaged mean square displacement.

\begin{figure}
\begin{center}
\begin{tabular}{ccc}
    \begin{overpic}[scale=.49]{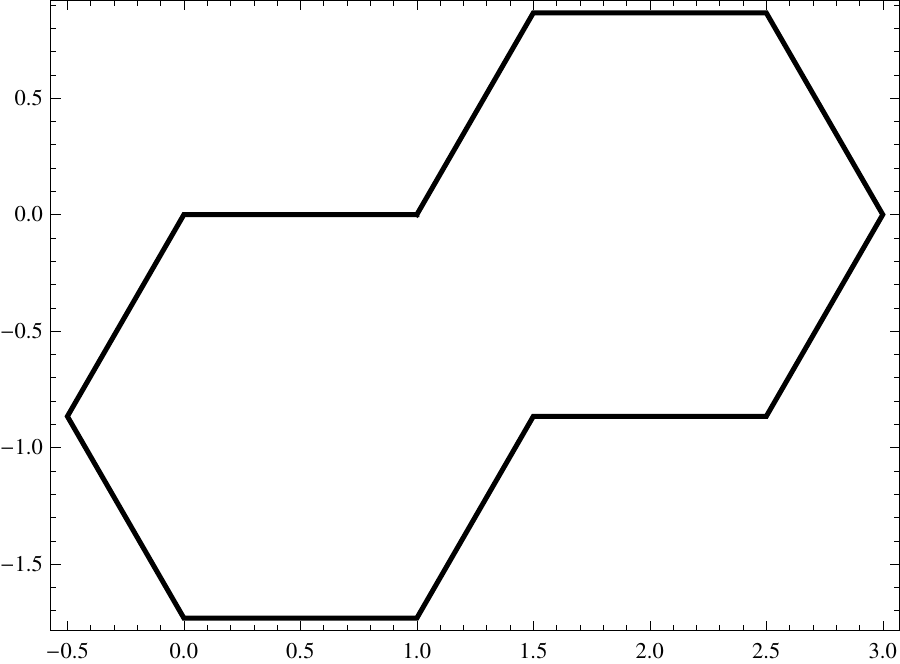}
    \put(50,-5){$\gamma_3$}
    \put(18,48){$\bullet$}
    \put(145,-5){$\gamma_{109}$}
    \put(227,-5){$\gamma_{380}$}
    \end{overpic} &
    \begin{overpic}[scale=.42]{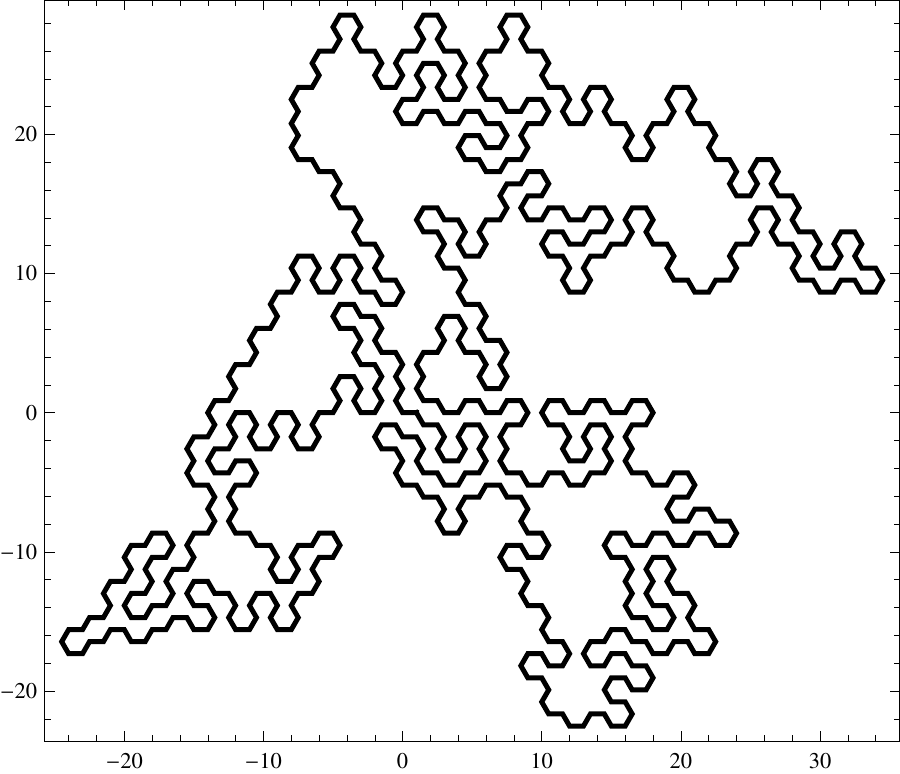}
    \put(42.5,38.5){$\bullet$}
    \end{overpic} &
    \begin{overpic}[scale=.30]{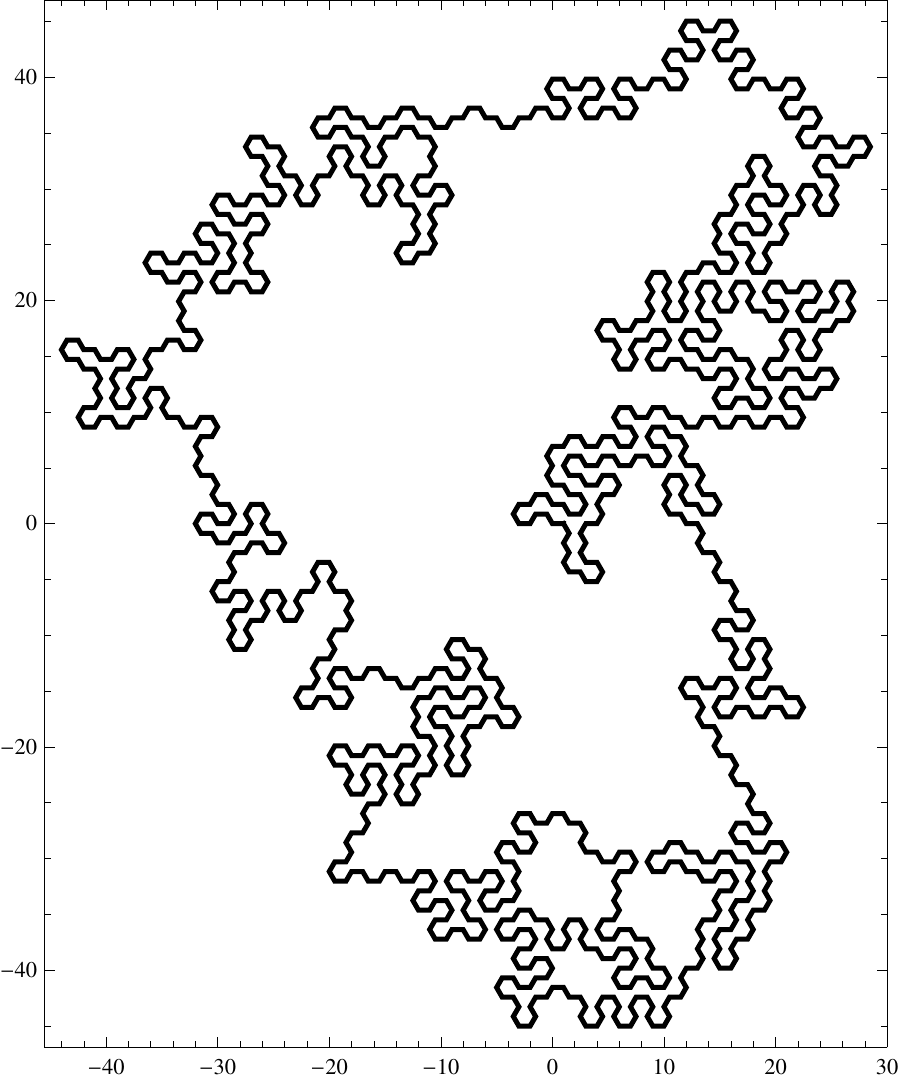}
    \put(48,49.5){$\bullet$}
    \end{overpic}
\end{tabular}
\end{center}
\caption{The figure shows the cycles $\gamma_3$, $\gamma_{190}$, and $\gamma_{380}$ that occur in the $(H;I,A)$ model in example \ref{ex:3}. Each cycle is based at the origin, which is indicated by a black dot.}\label{fig:16}
\end{figure}

The time-averaged mean square displacement of the particle in the $(H;I,A)$ model is shown in figure \ref{fig:17} (a), where $\bar{\triangle}(t)\approx \frac{4}{11}t^{8/13}$ for $t\leq 10^6$. Hence, up to time $t=10^6$ the particle displays time-averaged subdiffusion, similar to the particle in the $(H;\mathbf{R})$ model. We note that if the particle continues to subdiffuse in this manner for all time, then the particle has an unbounded trajectory and, therefore, has a pulsating motion. Additionally, the fraction of cycles $F(\ell)$ of length $\ell$ for $t\leq 10^6$ in the $(H;I,A)$ model are shown in figure \ref{fig:17} (b). Here, $F(\ell)\approx \frac{3}{5}t^{-3/2}$ so that, as in the $(H;\mathbf{R})$ model, short cycles dominate the particle's trajectory.
\end{example}

A natural question is whether the particle in the $(H,I,C)$ system will always exhibit time-averaged subdiffusion if $C$ is an admissible blocking configuration. Currently, it is unknown whether this holds in general although numerical simulations suggest that this is the case.

A second related question, is whether configurations exist that are admissible but do not have the blocking property. In the following example we show an example of such a configuration and describe its effect on the particle's motion.

\begin{example}\label{ex:4}
Consider the configuration $B$ shown in figure \ref{fig:18} (a), which has alternating layers of left and right scatterers. The configuration $B$ is admissible since its restriction to any hexagon is either the configuration $(i)$, $(ii)$, $(v)$, or $(vii)$ of $\mathcal{A}$ (see figure \ref{fig:8}).

\begin{figure}
\begin{center}
\begin{tabular}{cc}
    \begin{overpic}[scale=.66]{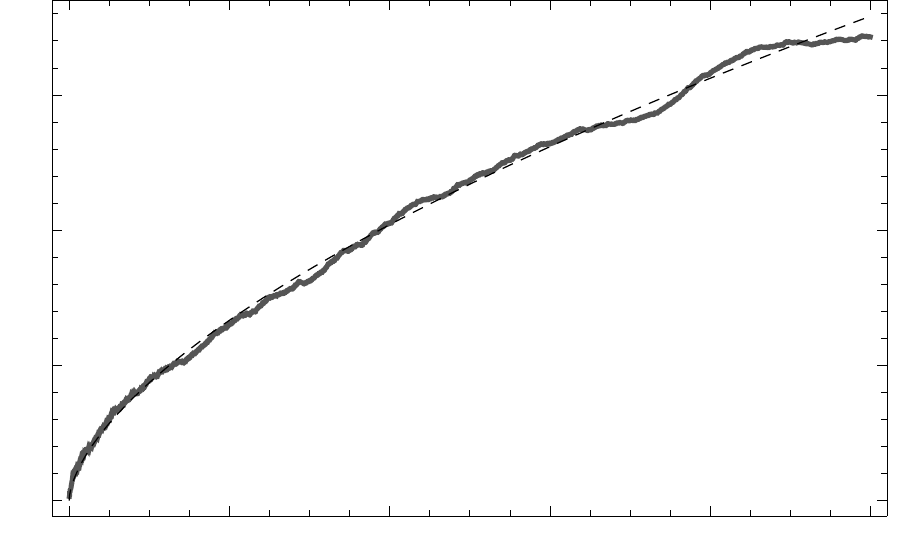}
    \put(46,-6.5){(a) \small$\bar{\triangle}(t)$}

    \put(5,-0.25){\tiny$0$}
    \put(21,-0.25){\tiny$2(10^5)$}
    \put(39,-0.25){\tiny$4(10^5)$}
    \put(57,-0.25){\tiny$6(10^5)$}
    \put(76,-0.25){\tiny$8(10^5)$}
    \put(94,-0.25){\tiny$10^6$}

    \put(2,2.75){\tiny$0$}
    \put(-1,19){\tiny$500$}
    \put(-3,34){\tiny$1000$}
    \put(-3,49){\tiny$1500$}
    \end{overpic} &
    \begin{overpic}[scale=.65]{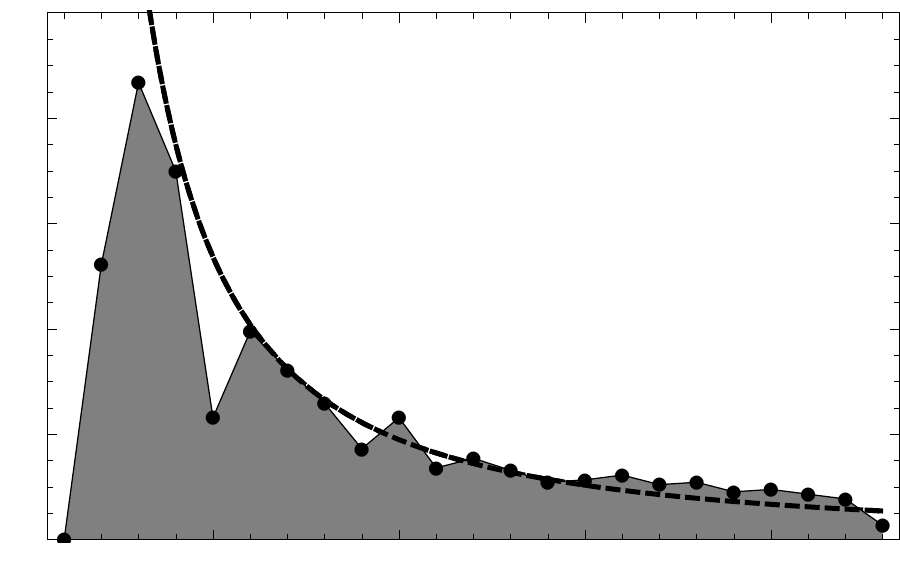}
    \put(5,0){\tiny$6$}
    \put(32,0){\tiny$42$}
    \put(64,0){\tiny$82$}
    \put(94,0){\tiny$122$}

    \put(2,2.75){\tiny$0$}
    \put(-1.25,14.25){\tiny$.02$}
    \put(-1.25,25.75){\tiny$.04$}
    \put(-1.25,37.25){\tiny$.06$}
    \put(-1.25,49.25){\tiny$.08$}
    \put(0,60.5){\tiny$.1$}

    \put(45,-6.5){(b) $F(\ell)$}
    \end{overpic}
\end{tabular}
\end{center}
\caption{The time-averaged mean square displacement $\bar{\triangle}(t)\approx \frac{4}{11}t^{8/13}$ and fraction of cycles $F(\ell)\approx \frac{3}{5}t^{-3/2}$ of length $\ell$ are shown for $t\leq 10^6$ for the model $(H;I,A)$ in example \ref{ex:3}}\label{fig:17}
\end{figure}

If $I=(\mathbf{r},\mathbf{v})$ is the particle's initial condition indicated in figure \ref{fig:18} (a), then the particle moves for all time to the right between the layers of left and right scatters in a zig-zag pattern, as shown in figure \ref{fig:18} (b). Since this motion persists indefinitely, the particle does not return to its initial position but escapes to infinity. The configuration $B$ is therefore an admissible configuration that is not a blocking configuration.

We note that since the particle in the model $(H;I,B)$ does not return to its initial position, the assumption that $C$ is a blocking configuration in theorem \ref{thm:recurrence} is necessary condition needed for this theorem to hold. This example also demonstrates that it is possible to have an admissible configuration $C$, such that the particle in the system $(H,I,C)$ never returns to its initial position, in which case the particle's entire trajectory is a self-avoiding walk, according to theorem \ref{lem:1}.
\end{example}

\section{Conclusion}\label{sec:conc}
In this paper we consider the motion of a particle in an LLG on the honeycomb lattice, where the particle is scattered either by flipping rotators or flipping mirrors. In both cases we find a new type of motion in which the particle returns to its initial position an infinite number of times and where, between these returns, the particle's trajectory is a self-avoiding walk. Our main example of this type of dynamics is the motion of the particle in the flipping rotator model $(H;\mathbf{R})$, in which each scatterer is initially a right rotator. By studying the origin of this behavior we are lead to introduce the concepts of an \emph{admissible configuration} and a \emph{blocking configuration}.

We show that if the particle moves on a lattice with an admissible configuration of scatterers then, between returns to its initial position, the particle's trajectory is a self-avoiding walk (cf. figures \ref{fig:3} and \ref{fig:16}). This self-avoiding property is novel in the sense that it is a consequence of the particle's deterministic dynamics. This is in contrast to the large majority of self-avoiding walks, which are generated via some random process. Moreover, many basic mathematical questions regarding self-avoiding walks are still open. Therefore, the results in this paper regarding this type of motion are potentially important to gaining a better mathematical understanding of self-avoiding walks. In particular, how deterministic dynamics can lead to self-avoiding motion.

\begin{figure}
    \begin{overpic}[scale=.2]{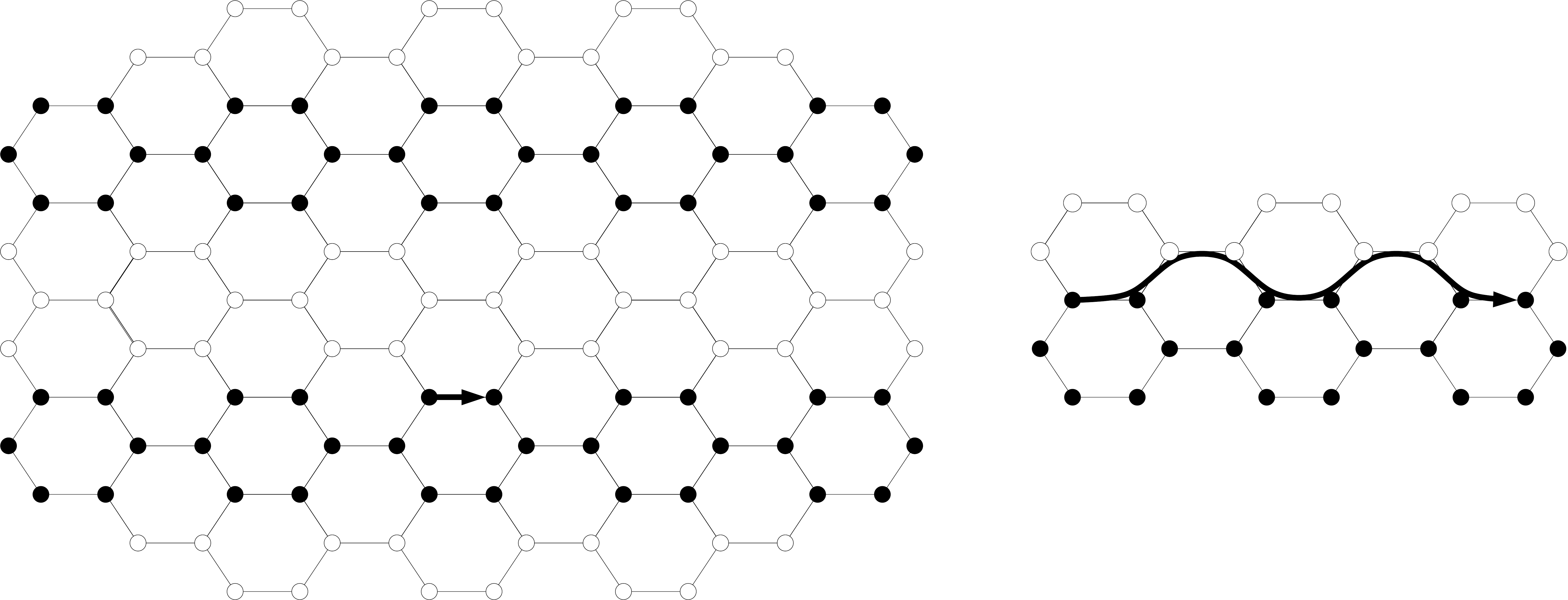}
    \put(73,9){(b) \emph{escape route}}
    \put(26,-3.5){(a) $B$}
    \put(25,12.25){$\mathbf{r}$}
    \put(28.5,13.75){$\mathbf{v}$}

    \put(65.75,18.5){$\mathbf{r}$}
    \put(69.25,20){$\mathbf{v}$}
    \end{overpic}
    \vspace{0.1in}
\caption{The admissible configuration $B$, shown in (a), and the particle's escape route for the indicated initial condition $I=(\mathbf{r},\mathbf{v})$, shown in (b).}\label{fig:18}
\end{figure}

Aside from admissible configurations, the second type of configuration we consider in this paper is what we refer to as a blocking configuration. We show that if a particle moves on a lattice with an initial configuration that is both a blocking and an admissible configuration, the particle will experience an infinite number of returns to its initial position. Hence, the particle's motion can be decomposed into an infinite sequence of cycles through its initial position.

The LLG models we study, that have an admissible blocking configuration, not only exhibit this new type of motion, but also differ from those LLG models that have been previously studied in another way. The difference is that the models, which we study here, have only a single initial configuration, whereas those that have been previously considered have a large number of initial configurations. In order to describe the particle's dynamics in the models we consider, we introduce the notion of a \emph{time-averaged mean square displacement}. This allows us to characterize these model's dynamics in a way that is comparable to using the standard mean square displacement on an LLG model that has a large number of initial configurations.

Using this concept of a time-averaged mean square displacement, we also find that the particle in these models numerically exhibit time-averaged subdiffusion, suggesting that they have an unbounded trajectory. This leads us to introduce the idea of \emph{pulsating} dynamics, in which the particle repeatedly returns to its initial position but has an unbounded trajectory. This type of behavior is observed in a number of examples within the paper, specifically in those LLG models that have an admissible blocking configuration.

In fact, what we observe numerically for each LLG model that has both an admissible and a blocking configuration is that the particle has (a) a pulsating motion and (b) the particle exhibits time-averaged subdiffusion. Our conjecture is that both of these properties hold in general, so that in a LLG model with an admissible blocking configuration, the particle will always have an unbounded trajectory and a time-averaged mean square displacement $\bar{\triangle}(t)\approx ct^\alpha$, for some $c>0$ and $0<\alpha<1$.

Beyond these conjectures, there are other open questions regarding the geometric nature of the cycles generated in these systems, e.g. whether they have fractal-like properties as are often observed in self-avoiding walks. We note that an understanding of the geometry of these cycles and their dependence on the system's initial configuration could be important in using these LLGs to model the growth of crystals, polymers, etc.

\section{Appendix A}
In this appendix we give the lengths of the cycles $L(i)=\tau_i-\tau_{i-1}$ in the $(H;\mathbf{R})$ model for $1\leq i\leq 180$. These are shown in table \ref{table:length}, which is read from left to right and from top to bottom. Note that each cycle length in this list has the form $6+4n$ where $n$ is some nonnegative integer.

\begin{table}[ht]
\caption{Cycle Lengths in the $(H;\mathbf{R})$ Model} 
\centering 
\begin{tabular}{cccccccccccccccc} 
\hline\hline 
6&18&6&42&6&18&6&6&66&6&18&14&10&30&30\\ 
10&14&18&6&78&6&6&18&6&78&22&10&22&18&54\\
18&6&42&22&122&30&30&10&14&18&6&18&6&6&18\\
6&18&14&10&126&30&34&14&42&6&114&6&6&18&6\\
90&38&10&22&18&134&110&6&6&14&130&6&10&54&38\\
22&6&158&6&6&34&6&74&6&42&6&6&150&6&34\\
46&38&10&38&18&6&298&6&6&42&6&114&62&126&22\\
22&22&174&6&22&6&6&18&6&58&6&18&6&6&18\\
6&82&22&10&22&18&118&54&6&6&14&38&6&6&210\\
54&170&30&202&30&6&6&38&226&6&266&22&18&6&130\\
22&6&6&14&26&6&18&30&126&6&6&14&38&6&6\\
218&54&230&30&14&10&6&38&6&6&34&6&170&6&42\\
[1ex] 
\hline 
\end{tabular}
\label{table:length} 
\end{table}

\section{Appendix B}
Here, we give a proof of proposition \ref{thm:0}, found in section \ref{sec:4}. This theorem can be thought of as having two parts. The first part of the theorem states that if the particle in an LLG model diffuses, subdiffuses, superdiffuses, or propagates then the particle also exhibits time-averaged diffusion, subdiffusion, superdiffusion, or propagation, respectively. The second part of proposition \ref{thm:0} states that the converse of the first part of the theorem does not hold in general. We now give a proof of proposition \ref{thm:0}.

\begin{proof}
To prove part (a) of proposition \ref{thm:0}, suppose $\lim_{t\rightarrow\infty}\triangle(t)/t=c$ for some $c>0$. Then for any $\epsilon>0$ there exists a $T>0$ such that if $t>T$ then $c-\epsilon<\triangle(t)/t<c+\epsilon$. Hence, for $t>T$ we have $(c-\epsilon)t<\triangle(t)<(c+\epsilon)t$ implying
\begin{equation}
\frac{1}{t}\sum_{i=1}^T\triangle(i)+\frac{1}{t}\sum_{i=T+1}^t(c-\epsilon)i<\bar{\triangle}(t)< \frac{1}{t}\sum_{i=1}^T\triangle(i)+\frac{1}{t}\sum_{i=T+1}^t(c+\epsilon)i.
\end{equation}
Using the identity $\sum_{i=1}^n i=n(n+1)/2$ we have
\begin{equation}\label{eq:sumup}
\sum_{i=1}^T\frac{\triangle(i)}{t^2}+(c-\epsilon)\frac{(t+T)(t-T+1)}{2t^2}<\frac{\bar{\triangle}(t)}{t}< \sum_{i=1}^T\frac{\triangle(i)}{t^2}+(c+\epsilon)\frac{(t+T)(t-T+1)}{2t^2}
\end{equation}
for $t>T$. Using the inequalities in \eqref{eq:sumup}, it follows that
\begin{equation}
(c-\epsilon)/2<\lim_{t\rightarrow\infty}\bar{\triangle}(t)/t<(c+\epsilon)/2.
\end{equation}
Since $\epsilon$ is arbitrary, then $\lim_{t\rightarrow\infty}\bar{\triangle}(t)/t=c/2$ so that $\bar{\triangle}(t)\simeq t$. Therefore, if $\triangle(t)\simeq t$ then $\bar{\triangle}(t)\simeq t$ completing the proof of proposition \ref{thm:0} part (a).

Similarly, using the identity $\sum_{i=1}^n i^2=n(n+1)(2n+1)/6$ it follows that if $\lim_{t\rightarrow\infty}\triangle(t)/t^2=c$, where $c>0$, then $\lim_{t\rightarrow\infty}\bar{\triangle}(t)/t^2=c/3$.  Thus, if $\triangle(t)\simeq t^2$ then $\bar{\triangle}(t)\simeq t^2$, which verifies part (d) proposition \ref{thm:0}.

For part (b), suppose $\lim_{t\rightarrow\infty}c/\triangle(t)=0$ for some $c>0$. Then $\lim_{t\rightarrow\infty}\triangle(t)=\infty$, implying that for $M>0$ there is a $T>0$ such that $\triangle(t)>M$ for all $t>T$. As a consequence, for $t>T$,
$$\frac{1}{t}\sum_{i=1}^t \triangle(t)=\frac{1}{t}\sum_{i=1}^T\triangle(i)+\frac{1}{t}\sum_{i=T+1}^t\triangle(t)> \frac{1}{t}\sum_{i=1}^T\triangle(i)+\frac{1}{t}(t-T)M.$$
Thus, $\lim_{t\rightarrow\infty}\bar{\triangle}(t)>M$ implying $\lim_{t\rightarrow\infty}\bar{\triangle}(t)=\infty$ as $M$ is arbitrary. Hence, if $c\prec \triangle(t)$ then $c\prec \bar{\triangle}(t)$.

Now suppose that $\triangle(t)\prec t$. Then for any $\epsilon>0$ there is again a $T>0$ such that if $t>T$ then $\triangle(t)/t<\epsilon$. Hence, for $t>T$
\begin{equation}
\frac{\bar{\triangle}(t)}{t}<\frac{1}{t^2}\sum_{i=1}^{T}\triangle(i)+
\frac{1}{t}\sum_{i=T+1}^t\frac{\triangle(i)}{i}<\frac{1}{t^2}\sum_{i=1}^{T}\triangle(i)+
\frac{\epsilon(t-T+1)}{t}.
\end{equation}
Therefore, $\lim_{t\rightarrow\infty}\bar{\triangle}(t)/t<\epsilon$ for every $\epsilon>0$ implying that if $\triangle(t)\prec t$ then $\bar{\triangle}(t)\prec t$. This finishes the proof of part (b) of proposition \ref{thm:0}. The proof that part (c) holds is analogous to proving part $(b)$ and is therefore omitted.

We now consider the converse of parts (a)-(d) of proposition \ref{thm:0}. To do so, suppose that the mean square displacement $\triangle(t)$ of a particular LLG model is given by
$$\triangle(t)=
\begin{cases}
0 \ &\text{if} \ t=\text{even}\\
2t-1 \ &\text{if} \ t=\text{odd}
\end{cases}.$$
Since, $\lim_{t\rightarrow\infty}\triangle(t)/t$ does not exist, then in particular $\triangle(t)\not\simeq t$. However, as
$$\bar{\triangle}(t)=
\begin{cases}
t/2 \ &\text{if} \ t=\text{even}\\
(t+1)/2 \ &\text{if} \ t=\text{odd}
\end{cases}$$
then $\lim_{t\rightarrow\infty}\bar{\triangle}(t)/t=1/2$ implying $\bar{\triangle}(t)\simeq t$. Hence, the converse of part (a) does not hold. Similar examples can be found demonstrating that the converse of parts (b)-(d) do not hold in general.
\end{proof}

\end{spacing}

\end{document}